\definecolor{blueviolet}{rgb}{0.2, 0.2, 0.6}
\definecolor{webgreen}{rgb}{0,.5,0}
\definecolor{webbrown}{rgb}{.6,0,0}
\newcommand{\mc}[1]{\mathcal{#1}}
\newcommand{\dtr}{\mathrm{d_{tr}}}
\newcommand{\pauli}{\mathcal{P}}
\newcommand{\phidep}{\Phi_{\mathrm{dep}}}
\newcommand{\stabone}{\mathrm{stab}_1}
\newcommand{\stabn}{\mathrm{stab}_1^{\otimes n}}
\newenvironment{protocol}[1][htb]{%
  \renewcommand{\ALG@name}{Protocol}
  \begin{algorithm}[#1]%
  }{\end{algorithm}
}
\newtheorem{theorem}{Theorem}
\newtheorem{prop}{Proposition}
\newtheorem{lemma}{Lemma}
\newtheorem{corollary}{Corollary}
\newtheorem{definition}{Definition}
\newtheorem{problem}{Problem}
\newtheorem{assumption}{Assumption}
\theoremstyle{remark}
\newtheorem{remark}{Remark}
\definecolor{todocol}{rgb}{1,0,0.0}
\definecolor{fixmecol}{rgb}{1.0,0.3,0.3}
\definecolor{ideacol}{rgb}{1.0,0.75,0.0}
\definecolor{probcol}{rgb}{1.0,0.1,0.0}
\definecolor{green-munsell}{rgb}{0.0, 0.66, 0.47}
\title{Learning Quantum Processes with Quantum Statistical Queries}
\author{Chirag Wadhwa}
\affiliation{School of Informatics, University of Edinburgh, Edinburgh, United Kingdom}
\email{chirag.wadhwa@ed.ac.uk}
\author{Mina Doosti}
\affiliation{School of Informatics, University of Edinburgh, Edinburgh, United Kingdom}
\email{mdoosti@ed.ac.uk}
\date{}
\begin{document}

\maketitle

\begin{abstract}
In this work, we initiate the study of learning quantum processes from quantum statistical queries. We focus on two fundamental learning tasks in this new access model: shadow tomography of quantum processes and process tomography with respect to diamond distance. For the former, we present an efficient average-case algorithm along with a nearly matching lower bound with respect to the number of observables to be predicted. For the latter, we present average-case query complexity lower bounds for learning classes of unitaries. We obtain an exponential lower bound for learning unitary 2-designs and a doubly exponential lower bound for Haar-random unitaries. Finally, we demonstrate the practical relevance of our access model by applying our learning algorithm to attack an authentication protocol using Classical-Readout Quantum Physically Unclonable Functions, partially addressing an important open question in quantum hardware security. 
\end{abstract}

\section{Introduction}
\label{sec:intro}
Quantum learning theory aims to study the advantages and limitations of quantum machine learning in both classical and quantum problems. Here, the \emph{learner} is a classical, quantum or hybrid algorithm, trying to learn about an unknown object (function, distribution, quantum state, quantum process, etc.) through some limited access to it. Widely studied classical access models such as random examples \cite{valiant1984theory} and statistical queries \cite{SQLearning} have been extended to the quantum setting in the form of quantum examples~\cite{QPAC,arunachalam2018optimal} and quantum statistical queries~\cite{arunachalam2020quantum, arunachalam2023role, hinsche2023one, nietner2023average, nietner2023unifying} respectively.  A wide array of results have been shown in this field, ranging from learning functions encoded within quantum states~\cite{QPAC,atici2005improved,grilolearning}, quantum state tomography~\cite{bisio2009optimal,o2016efficient,xu2018neural}, shadow tomography~\cite{classical_shadow_tomography,aaronson2018shadow}, learning diverse classes of probability distributions~\cite{childs2022quantum,atici2005improved,montanaro2017learning}, to learning quantum processes~\cite{mohseni2008quantum,chung2018sample,haah2023query}. We refer to the following survey for an overview of results~\cite{arunachalam2017guest}. While most of the efforts in quantum learning theory have been focused on quantum states (both as examples and target objects), in this work, we study a new access model for learning about quantum processes.

Learning quantum processes is a fundamental problem that arises in many areas in physics~\cite{wiebe2014hamiltonian,grimsley2019adaptive} and quantum computing, such as quantum benchmarking~\cite{scott2008optimizing,levy2021classical,huang2022foundations,blume2017demonstration}, noise characterisation~\cite{harper2020efficient}, error mitigation~\cite{strikis2021learning,em}, and variational quantum algorithms~\cite{schuld2015introduction}. Furthermore, with the crucial role of quantum computing in cryptography, another such area is cryptanalysis. In these scenarios, the quantum process of interest can manifest as a quantum oracle, providing a quantum implementation of a classical function~\cite{boneh2013quantum,kaplan2016breaking,santoli2016using,chevalier2022security,grilolearning}, or as a physical device or hardware component implementing an unknown underlying unitary, which serves as a cryptographic key or fingerprint~\cite{QPUF,CRQPUF-Original,CR-QPUF-single}. The primary challenge in learning complex quantum processes lies in the resource-intensive nature of this task, rendering conventional techniques for process tomography~\cite{mohseni2008quantum} impractical, especially for near-term devices. Recent endeavours have explored diverse techniques to devise algorithms and approaches for efficiently tackling specific instances of this challenge~\cite{arb_proc,montanaro2008quantum,fanizza2022learning,scott2008optimizing,levy2021classical,huang2022foundations, chung2018sample}.

In this work, we focus on the the \emph{statistical query} access model \cite{SQLearning}, and naturally extend it to the task of learning quantum processes. In a statistical query access model (quantum or classical) the learner constructs a hypothesis not by accessing a sequence of labelled examples themselves, but instead by adaptively querying an oracle that provides an \emph{estimate} of the statistical properties of the labelled examples. In the quantum case, this estimate is in fact the estimated expectation values of some observable. This extension to the quantum world is quite natural, as this is often the information extracted from a quantum system, by measuring it many times and estimating the expectation value of an observable, which corresponds to a physical quantity. This natural correspondence to the physics of the quantum experiment and the learning tasks designed based on them marks our main motivation for the choice of this model. In our access model, information about a process is accessed by querying it with an input state and observable, and receiving an estimate of the output expectation value. The feasibility of this model in practice, as compared to quantum PAC learning, makes it a good candidate for studying learning algorithms and their limitations in the near-term. Aside from being physically well-motivated, quantum statistical queries have also found applications in the classical verification of quantum learning~\cite{caro2023classical} and quantum error mitigation~\cite{em,arunachalam2023role}.

There are several noteworthy points regarding our proposed access model. In comparison with conventional quantum process tomography~\cite{mohseni2008quantum}, our access model is significantly weaker. In particular, we restrict the learner to measurement statistics on a single copy of the output state, and the learner cannot obtain the results of any entangled measurements with ancillary qubits. In many practical scenarios, due to the limitations of quantum memories, only the statistical results of quantum measurements are relevant and accessible in the near term. This aspect is precisely the focus of our model. Furthermore, our model gains importance when considering scenarios in which direct access to the process is not provided. This becomes especially crucial for cryptanalysis purposes, where certain attack models may not grant direct access to the process, but statistical data can still be accessed by adversaries. 

 In this access model, we study two fundamental problems in quantum learning theory. First, we look at the task of predicting properties of quantum processes, where given quantum states from a certain distribution and a list of observables, one aims to predict output expectation values after evolution under an unknown quantum process. For this task, we present an efficient learning algorithm and provide a matching lower bound (up to a logarithmic factor) in terms of the number of observables to be predicted. Next, we look at the task of learning unitaries with respect to the diamond distance, for which we provide average-case query complexity lower bounds. Furthermore, we explore the practical applications of our learning algorithm in this access model, by applying it to attack a cryptographic protocol. This result sheds light on identifying the conditions under which a class of quantum physical unclonable functions may be vulnerable, partially addressing an ongoing open question in quantum hardware security.

\subsection{Our Contributions}
In this section, we summarize our main contributions.

    \paragraph{Quantum Statistical Queries to Quantum Processes (QPSQs):} We define our access model through the quantum statistical query oracle \textsf{QPStat} for a quantum process $\mathcal{E}$, that produces an estimate of the expectation value $\tr(O \mc{E}(\rho))$. We discuss how our oracle generalises the previously defined \textsf{QStat} oracles for states. We show that in many situations, efficient quantum statistical query algorithms for learning boolean functions admit equivalent, efficient algorithms in our model. We present our access model in \Cref{sec:QPSQ-model}.

    \paragraph{Predicting properties of quantum processes:} We present an efficient algorithm that can predict properties of quantum processes from QPSQs. Our algorithm is an adaptation of the classical shadow~\cite{classical_shadow_tomography} algorithm of \cite{arb_proc} to our access model. In our weaker access model, we obtain a similar query complexity as \cite{arb_proc} with a linear overhead in the number of observables to be predicted. We show that this overhead is unavoidable, by presenting a nearly matching lower bound with respect to the number of observables. Our lower bound builds upon the technique used by \cite{arunachalam2023role} to show a similar bound for shadow tomography of quantum states from quantum statistical queries. We further demonstrate the performance of our proposed algorithm through numerical simulations. We present these results in \Cref{sec:spt}.

   \paragraph{Hardness of diamond-distance learning:} We provide an exponential lower bound on the query complexity for learning exact and approximate unitary $2$-designs from \textsf{QPStat} queries with respect to diamond distance. We also show a doubly exponential lower bound on the hardness of learning unitaries over the Haar-measure from \textsf{QPStat} queries with respect to this distance. We start by proving a general lower bound for learning any class of unitaries from QPSQs, which we obtain through a reduction from a many-vs-one distinguishing task. Our techniques are inspired by standard techniques for proving statistical query lower bounds \cite{feldman2017general}, which have been widely used in the quantum statistical query setting as well \cite{nietner2023unifying, nietner2023average, arunachalam2020quantum}. We present our lower bounds for diamond-distance learning in \Cref{sec:diamond}.
    
    \paragraph{Application to cryptanalysis:} We apply our results to cryptanalysis by studying a primitive from quantum hardware security, namely \emph{Classical Readout of Quantum Physically Unclonable Functions} (CR-QPUFs) \cite{CRQPUF-Original,CR-QPUF-single}. The security of this primitive relies on the assumed hardness of predicting statistical properties of an underlying quantum process. However, the existence of a secure realization of CR-QPUFs has not yet been shown, and remains an important open question. We partially address this problem by applying our learning algorithm for an attack against authentication protocols based on CR-QPUFs under practical physical assumptions. Our attack inherits the quasipolynomial complexity of our learning algorithm, preventing us from formally breaking the security of CR-QPUFs. However, our results present an interesting connection between learning theory and cryptography, as any new polynomial-time QPSQ algorithm for predicting expectation values would imply an attack against such protocols. 
    
\subsection{Related work}
\paragraph{Quantum Statistical Queries:}
 Quantum statistical queries (QSQs) were introduced by \cite{arunachalam2020quantum} as a quantum generalization of the statistical query (SQ) access model \cite{SQLearning}. \cite{arunachalam2020quantum} also showed efficient QSQ learning algorithms for various classes of Boolean functions which are provably hard to learn from SQs. QSQs for learning quantum states have been considered in \cite{arunachalam2023role, nietner2023unifying} and those for learning output distributions of quantum circuits have been considered in \cite{hinsche2023one,nietner2023average}. In concurrent work, \cite{angrisani2023learning} consider learning unitaries from QSQs to their Choi states, and present learning algorithms for quantum Boolean functions and quantum $k$-juntas.

\paragraph{Statistical Query Lower Bounds:}
Recent works in QSQ learning \cite{arunachalam2023role, nietner2023average, nietner2023unifying} have obtained lower bounds by adapting the classical SQ lower bound techniques of \cite{feldman2017general}, involving a reduction from a many-vs-one distinguishing task. In \cite{nietner2023average}, the authors showed lower bounds for learning the Born distributions of random circuits at varying depth regimes. \cite{nietner2023unifying} provided a unifying framework for a large class of learning models (including many 
 statistical query oracles), presented lower bounds within this framework, and applied them to the task of learning output states of quantum circuits. Compared to learning a process within diamond distance, the tasks of learning output distributions and output states are easier. However, the access model we consider is also stronger than their statistical query models. As such, our bounds cannot be directly compared to those of~\cite{nietner2023average, nietner2023unifying}. 
 
In the setting where learning algorithms are restricted to single-copy measurements,~\cite{chen2022exponential} showed an exponential sample complexity lower bound for distinguishing between Haar-random unitaries and the depolarizing channel. As QPSQ access is weaker than single-copy measurements, this immediately implies an exponential lower bound in our setting. We improve upon this by showing a doubly exponential lower bound in \Cref{thm:hardness-haar-measure}.
 
\paragraph{Predicting Properties of Quantum Processes:}

\cite{levy2021classical,kunjummen2023shadow, arb_proc, caro2022learning} have studied shadow tomography of quantum processes, i.e. predicting expectation values of observables for quantum states after evolution under an unknown quantum process. \cite{levy2021classical,kunjummen2023shadow,caro2022learning} focus on this task in the worst-case over input states, i.e., they aim to make predictions that are accurate for every input quantum state. To this end, \cite{levy2021classical,kunjummen2023shadow} use classical shadows \cite{classical_shadow_tomography} to make predictions. \cite{caro2022learning} learn elements of the Pauli transfer matrix of a channel, allowing them to efficiently make predictions for states and observables with a sparse Pauli spectrum decomposition. On the other hand, similar to our work, \cite{arb_proc} focus on an average-case version of this task, only requiring the error to be low on average over states sampled randomly. \cite{arb_proc} also make use of classical shadows \cite{classical_shadow_tomography} for this task. 

\paragraph{Concurrent work:} In independent and concurrent work, \cite{nadimpalli2024pauli} considered the similar access model of \emph{measurement queries}. Among other results, they presented an efficient algorithm in this access model for learning a class of quantum channels formed by $\mathsf{QAC}^0$ circuits.

\subsection{Organization of the paper}
We provide the necessary background and notation in Section~\ref{sec:prelim}. In Section~\ref{sec:QPSQ-model}, we define our access model and discuss its relation to other models. In Section~\ref{sec:spt}, we provide our upper and lower bound for predicting properties of an unknown quantum process. In Section~\ref{sec:diamond}, we present average-case lower bounds for learning a unitary with respect to diamond distance. Finally, we discuss applications of our algorithm for cryptanalysis in in Section~\ref{sec:crqpuf}.
\section{Preliminaries}
\label{sec:prelim}

We start by introducing the notation we use in the paper as well as the essential background. 
\subsection{Quantum Information}
We include some basic definitions of quantum computation and information in this section. For more details, we refer the reader to \cite{nielsen2002quantum}. 
We will denote the $d \times d$ identity matrix as $I_d$ and we may omit the index $d$ when the dimension is clear from the context. We use the bra-ket notation, where we denote a vector $v \in \mathbb{C}^N$ using the ket notation $\ket{v}$ and its adjoint using the bra notation $\bra{v}$. For $u,v\in \mathbb{C}^n$, we will denote by $\bra{u}\ket{v}$ the standard Hermitian inner product $u^\dag v$. A quantum (pure) state is a normalized vector $\ket{v}$, i.e. $|\bra{v}\ket{v}|=1$.
We will write $\mathcal{M}_{N,N}$ to denote the set of linear operators from $\mathbb{C}^N$ to $\mathbb{C}^N$ and we define the set of quantum states as $\mathcal{S}_N := \{\rho \in \mathcal{M}_{N,N} : \rho \succeq 0, \Tr[\rho]=1\}$.
We denote by $\mathcal{U}_N$ the set of $N$-dimensional unitary operators,
\begin{equation}
    \mathcal{U}_N:=\left\{ U\in \mathcal{M}_{N,N}: UU^\dag = U^\dag U= I\right\}.
\end{equation}
We will now introduce a useful orthonormal basis for $\mathcal{M}_{N,N}$ which is widely used in quantum information.
\begin{definition}[Pauli operators]
The set of Pauli operators is given by
    \begin{equation}
        X = \begin{pmatrix} 0 & 1 \\ 1 & 0\end{pmatrix}, Y = \begin{pmatrix} 0 & -i \\ i & 0\end{pmatrix}, Z = \begin{pmatrix} 1 & 0 \\ 0 & -1\end{pmatrix}.
    \end{equation}
\end{definition}
The set $\mathcal{P}_1=\{I,X,Y,Z\}$ forms an orthonormal basis for $\mathcal{M}_{2,2}$ with respect to the Hilbert-Schmidt inner product.
The tensor products of Pauli operators and the identity, i.e. the operators of the form $P\in \{I,X,Y,Z\}^{\otimes n}:=\mathcal{P}_n$, are usually referred as \emph{stabilizer operators} or \emph{Pauli strings} over $n$ qubits. Pauli strings over $n$ qubits form an orthonormal basis for $\mathcal{M}_{2^n,2^n}$ with respect to the Hilbert-Schmidt inner product. For a Pauli string $P \in \mc{P}_n$, we define its degree, $|P|$, as the number of indices on which it acts non-trivially, i.e., the number of non-identity Paulis.  We now look at the eigenstates of the Pauli operators, which are of special interest.
\begin{definition}[Single-qubit stabilizer states] We denote the set of Pauli eigenstates by
    \begin{equation}
        \stabone = \{ |0\rangle, |1\rangle, |+\rangle, |-\rangle, |+y\rangle, |-y\rangle \},
    \end{equation}
    where $|0\rangle$ \& $|1\rangle$ are the eigenstates of $Z$ , $|+\rangle$ \& $|-\rangle$ are the eigenstates of $X$ and $|+y\rangle$ \& $|-y\rangle$ are the eigenstates of $Y$.
\end{definition}
\begin{definition}[Clifford group] The Clifford group is the group of unitaries generated by the following 3 gates :
\begin{equation}
     H = \frac{1}{\sqrt{2}}\begin{pmatrix} 1 & 1 \\ 1 & -1\end{pmatrix}, S = \begin{pmatrix} 1 & 0 \\ 0 & i\end{pmatrix}, CNOT = \begin{pmatrix} 1 & 0 &0 & 0\\0 & 1 &0 & 0\\ 0 & 0 &0 & 1\\ 0 & 0 &1 & 0\\\end{pmatrix}.
\end{equation}
\end{definition}
We now define quantum processes.
\begin{definition}[Quantum process]
    A map $\mathcal{E} : \mathcal{M}_{N,N} \rightarrow \mathcal{M}_{N,N}$ is said to be completely positive if for any positive operator $A \in \mathcal{M}_{N^2,N^2} , (\mathcal{E} \otimes I)(A)$ is also a positive operator. $\mathcal{E}$ is said to be trace-preserving if for any input density operator $\rho, \tr(\mathcal{E}(\rho)) = \tr(\rho) = 1$. A quantum process $\mathcal{E}$ is defined as a Completely Positive Trace-Preserving (CPTP) map from one quantum state to another.   For a unitary $U$, the associated map is:
    \begin{equation}
        \mathcal{E} : \rho \rightarrow U\rho U^\dag.
    \end{equation}
\end{definition}
Next, we define the maximally depolarizing channel, which will be particularly useful.
\begin{definition}[Maximally depolarizing channel]
    The maximally depolarizing channel $\phidep$ acting on states in $S_N$ is defined as follows:
    \begin{equation}
        \phidep(\rho) = \frac{\tr(\rho)}{N}I.
    \end{equation}
\end{definition}
We now define some distances between quantum states and channels.
\begin{definition}[Trace Distance and Fidelity]
\label{def:dtr-infidelity}
    The \emph{trace distance} between two quantum states is given by
    \begin{equation}
        \dtr(\rho, \sigma) \triangleq \frac{1}{2}\norm{\rho -\sigma}_1,
    \end{equation}
    where $\|.\|_1$ is the Schatten-$1$ norm. The \emph{fidelity} between two quantum states is given by
    \begin{equation}
        F(\rho,\sigma) \triangleq \tr \left(\sqrt{\sqrt{\rho} \sigma \sqrt{\rho} }\right)^2.
    \end{equation}
    When at least one of the states is pure, $\dtr$ and $F$ are related as follows:
    \begin{equation}
    \label{ref:eq-dtr-infidel}
    1 - F(\ketbra{\psi}{\psi}, \rho) \leq \dtr(\ketbra{\psi}{\psi}, \rho).
    \end{equation}
\end{definition}
\begin{definition}[Diamond norm and diamond distance]
    \label{def:diamond}
    For a map $\mathcal{E} \in \mathcal{M}_{N,N} \rightarrow \mathcal{M}_{N,N}$, and $\mc{I} \in \mathcal{M}_{N,N} \rightarrow \mc{M}_{N,N} $ the identity superoperator, we define the diamond norm $\norm{\text{ . }}_\diamond$
    \begin{equation}
        \norm{\mathcal{E}}_\diamond = \underset{\rho \in \mathcal{S}_{N^2}}{\mathrm{max}} \norm{(\mathcal{E} \otimes \mc{I})(\rho)}_1,
    \end{equation}
    where $\|.\|_1$ denotes the Schatten $1$-norm. We then define the diamond distance $d_\diamond$ :
    \begin{equation}
        d_\diamond(\mathcal{E}_1, \mathcal{E}_2) = \frac{1}{2}\norm{\mathcal{E}_1 - \mathcal{E}_2}_\diamond.
    \end{equation}
\end{definition}
\begin{definition}[POVM]
    A \emph{Positive Operator-Valued Measure} (POVM) is a quantum measurement described by a collection of positive operators $\{E_m\}_m$, such that $\sum_m E_m = I$ and the probability of obtaining measurement outcome $m$ on a state $\rho$ is given by $p(m) = \tr(E_m\rho)$.
\end{definition}

\subsection{Unitary Haar Measure and Designs}
We now define the unitary Haar measure $\mu_H$, which can be thought as the uniform probability distribution over all quantum states or over all unitary operators in the Hilbert space of dimension $N$. For a comprehensive introduction to the Haar measure and its properties, we refer to \cite{mele2023introduction}.
\begin{definition}[Haar measure] 
The Haar measure on the unitary group $U(N)$ is the unique probability measure $\mu_H$
that is both left and right invariant over the set $\mathcal{U}_N$, i.e., for all integrable functions $f$ and for all $V \in \mathcal{U}_N$, we
have:
\begin{equation}
\int_{U(N)} f(U) d\mu_H(U) = \int_{U(N)} f(UV) d\mu_H(U) = \int_{U(N)} f(VU) d\mu_H(U).
\end{equation}
Given a state $\ket{\phi}\in\mathbb{C}^N$, we denote the $k$-th moment of a Haar random state as
\begin{equation}
\mathbf{E}_{\ket{\psi}\sim\mu_S}\left[\ket{\psi}\bra{\psi}^{\otimes k}\right]:= \mathbf{E}_{U\sim\mu_H}\left[U^{\otimes k}\ket{\phi}\bra{\phi}^{\otimes k}U^{\dag\otimes k}\right].
\end{equation}
Note that the right invariance of the Haar measure implies that the definition of $\mathbf{E}_{\ket{\psi}\sim\mu_S}\left[\ket{\psi}\bra{\psi}^{\otimes k}\right]$ does not depend on the choice of $\ket{\phi}$.
\end{definition}

A unitary $t$-design is a measure over unitaries, over which the $t$-th order moments match those of the Haar-measure exactly. In case the moments are only approximately equal, we call this an approximate unitary $t$-design. While there are many notions of approximation, here we restrict our attention to additive approximations as defined in \cite{nietner2023unifying}.

\begin{definition}[Exact and Approximate Unitary $t$-Designs]
\label{def:designs}
    The $t$-th moment superoperator with respect to a distribution $\nu$ over $\mathcal{U}_N$ is defined as
    \begin{equation}
        \mathcal{M}_\nu^{(t)}(A) = \underset{U \sim \nu}{\mathbf{E}} [U^{\otimes t} A (U^\dag)^{\otimes t}] =  \int U^{\otimes t} A (U^\dag)^{\otimes t} d\nu(U).
    \end{equation}
    $\nu$ is said to be an exact unitary $t$-design if and only if
    \begin{equation}
        \mathcal{M}_\nu^{(t)}(A) =  \mathcal{M}_{\mu_H}^{(t)}(A).
    \end{equation}
    Similarly, $\nu$ is said to be an additive $\delta$-approximate unitary $t$-design if and only if
    \begin{equation}
        \norm{\mathcal{M}_\nu^{(t)}(A) - \mathcal{M}_{\mu_H}^{(t)}(A)}_\diamond \leq \delta.
    \end{equation}
    We denote an exact unitary $t$-design by $\mu_H^{(t)}$ and an additive $\delta$-approximate unitary $t$-design by $\mu_H^{(t,\delta)}$.
\end{definition}

We will be particularly interested in the first and second-order moments over the unitary Haar measure. We start by defining the identity and flip permutation operators which will be useful for this purpose.
\begin{definition}[Identity and Flip permutation operators, Definition 12 from \cite{mele2023introduction}]
    The identity operator $\mathbb{I}$ and the flip operator $\mathbb{F}$ act on pure states $\ket{\psi}, \ket{\phi} \in \mathbb{C}^N$ as follows:
    \begin{equation}
        \mathbb{I}(\ket{\psi} \otimes \ket{\phi}) = \ket{\psi} \otimes \ket{\phi}.
    \end{equation}
    \begin{equation}
        \mathbb{F}(\ket{\psi} \otimes \ket{\phi}) = \ket{\phi} \otimes \ket{\psi}.
    \end{equation}
    We will use the following property of the flip operator. For all operators $A,B \in \mathcal{M}_{N,N}:$
    \begin{equation}
        \tr(\mathbb{F}(A \otimes B)) = \tr(AB).
    \end{equation}
\end{definition}
\begin{lemma}[First and second-order moments over the Haar-measure, Corollary 13 from \cite{mele2023introduction}]
\label{lem:haar-moments}
    We have, for $O \in \mathcal{M}_{N,N}$,
    \begin{equation}
        \mathcal{M}_{\mu_H}^{(1)}(O) = \frac{\tr(O)}{d}I .
    \end{equation}
    For $O \in \mathcal{M}_{N^2,N^2}$,
    \begin{equation}
        \mathcal{M}_{\mu_H}^{(2)}(O) = c_{\mathbb{I},O}\mathbb{I} + c_{\mathbb{F},O}\mathbb{F},
    \end{equation}
    where, 
    \begin{equation}
        c_{\mathbb{I},O} = \frac{\tr(O) - N^{-1}\tr(\mathbb{F}O)}{N^2-1} \text{ and } c_{\mathbb{F},O} = \frac{\tr(\mathbb{F}O) - N^{-1}\tr(O)}{N^2-1}.
    \end{equation}
\end{lemma}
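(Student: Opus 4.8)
The plan is to exploit the defining invariance of the Haar measure together with Schur's lemma (for the first moment) and Schur--Weyl duality (for the second moment) to pin down the image of the moment superoperators as elements of a known commutant, and then fix the remaining scalar coefficients by taking traces against a basis of that commutant.

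For the first moment, I would first observe that $\mathcal{M}_{\mu_H}^{(1)}(O)$ commutes with every $V \in \mathcal{U}_N$: using left-invariance under the substitution $U \mapsto VU$, we have $V\,\mathcal{M}_{\mu_H}^{(1)}(O)\,V^\dag = \mathbf{E}_{U}[(VU)O(VU)^\dag] = \mathcal{M}_{\mu_H}^{(1)}(O)$. Since the only operators commuting with the entire unitary group are scalar multiples of the identity (Schur's lemma, via irreducibility of the defining representation), this forces $\mathcal{M}_{\mu_H}^{(1)}(O) = cI$. Taking the trace of both sides and using that the trace is invariant under conjugation, $\tr(\mathcal{M}_{\mu_H}^{(1)}(O)) = \tr(O) = cN$, which yields $c = \tr(O)/N$ as claimed.

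For the second moment, the same invariance argument shows that $\mathcal{M}_{\mu_H}^{(2)}(O)$ commutes with every $V^{\otimes 2}$, $V \in \mathcal{U}_N$. The relevant structural input here is Schur--Weyl duality at $t=2$: the commutant of $\{V^{\otimes 2}\}$ is spanned by the representation of $S_2$ on $(\mathbb{C}^N)^{\otimes 2}$, namely $\{\mathbb{I}, \mathbb{F}\}$. Hence $\mathcal{M}_{\mu_H}^{(2)}(O) = a\mathbb{I} + b\mathbb{F}$ for scalars $a,b$ to be determined. To solve for them I would take Hilbert--Schmidt inner products against $\mathbb{I}$ and $\mathbb{F}$. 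Using $\mathbb{F}^2 = \mathbb{I}$ and the trace property $\tr(\mathbb{F}(A\otimes B)) = \tr(AB)$ stated above, the Gram matrix of $\{\mathbb{I}, \mathbb{F}\}$ has diagonal entries $\tr(\mathbb{I}) = N^2$ and off-diagonal entries $\tr(\mathbb{F}) = N$. On the other side, $\tr(\mathbb{I}\,\mathcal{M}_{\mu_H}^{(2)}(O)) = \tr(O)$ by conjugation-invariance of the trace, while $\tr(\mathbb{F}\,\mathcal{M}_{\mu_H}^{(2)}(O)) = \tr(\mathbb{F}O)$ because $\mathbb{F}$ commutes with $U^{\otimes 2}$ and can thus be cycled past the conjugation. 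This produces the linear system $aN^2 + bN = \tr(O)$ and $aN + bN^2 = \tr(\mathbb{F}O)$, whose solution is exactly $a = c_{\mathbb{I},O}$ and $b = c_{\mathbb{F},O}$.

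The main obstacle is justifying the commutant structure rather than the ensuing arithmetic: the first moment requires irreducibility of the defining representation of $\mathcal{U}_N$, and the second moment requires the $t=2$ case of Schur--Weyl duality both to guarantee that $\{\mathbb{I}, \mathbb{F}\}$ spans the \emph{full} commutant and to ensure these two operators are linearly independent for $N \geq 2$, so that the Gram matrix is invertible. Once the commutant is identified, the remainder reduces to solving a $2\times 2$ linear system, which is routine.
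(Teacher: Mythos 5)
Your proof is correct: the paper does not prove this lemma itself but imports it as Corollary 13 of the cited reference, and the argument you give (unitary invariance plus Schur's lemma for the first moment, Schur--Weyl duality at $t=2$ plus inverting the $2\times 2$ Gram system for the second) is exactly the standard derivation used there, with the arithmetic checking out to the stated coefficients $c_{\mathbb{I},O}$ and $c_{\mathbb{F},O}$. The only cosmetic remark is that the $d$ in the first-moment formula should read $N$, an inconsistency already present in the paper's statement rather than in your argument.
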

We draw attention to the fact that $\mathcal{M}_{\mu_H}^{(1)}$ is the maximally depolarizing channel $\phidep$.

\subsection{Classical Shadow Tomography}
Classical shadow tomography is the technique of using randomized measurements to learn many properties of quantum states \cite{random_measurement,classical_shadow_tomography, huang2022provably}. It has recently been shown that classical shadow tomography can be used to predict the outcomes of arbitrary quantum processes \cite{arb_proc}. In this section, we include relevant results on classical shadow tomography that will be used in the rest of the work.
\begin{definition}[Randomized Pauli Measurement]
   Given $n>0$. A randomized Pauli measurement on an $n$-qubit state is given by a $6^n$-outcome POVM
    \begin{equation}
        \mathcal{F}^{Pauli} \triangleq \left\{ \frac{1}{3^n} \bigotimes_{i=1}^n |s_i\rangle \langle s_i| \right\}_{s_1,...,s_n \in \stabone},
    \end{equation}
    which corresponds to measuring every qubit under a random Pauli basis $(X, Y , Z$). The outcome of  $\mathcal{F}^{Pauli}$ is an $n$-qubit state $|\psi\rangle = \bigotimes_{i=1}^n |s_i\rangle ,$ where $|s_i \rangle \in \stabone$ is a single-qubit stabilizer state.
\end{definition}
 Next, we define classical shadows based on randomized Pauli measurements. Other measurements can also be used to define classical shadows.
\begin{definition}[Classical shadow of a quantum state \cite{classical_shadow_tomography}]
    Given $n,N > 0$. Consider an $n$-qubit state $\rho$. A size $N$ classical shadow of $S_N(\rho)$ of quantum state $\rho$ is a random set given by 
    \begin{equation}
        S_N(\rho) \triangleq \left\{ |\psi_l \rangle \right\}_{l = 1}^N,
    \end{equation}
    where $|\psi_l\rangle = \bigotimes_{i=1}^n |s_{l,i}\rangle$ is the outcome of the $l$-th randomized Pauli measurement on a single copy of $\rho$.
\end{definition}
\begin{definition}[Classical Shadow Approximation of a quantum state \cite{classical_shadow_tomography}]
    Given the classical shadow $S_N(\rho)$ of an $n$-qubit state $\rho$. We can approximate $\rho$ via 
    \begin{equation}
        \sigma_N(\rho) = \frac{1}{N} \sum_{l = 1}^N \bigotimes_{i = 1}^n (3|s_{l,i}\rangle\langle s_{l,i}|-I),
    \end{equation}
\end{definition}
\begin{definition}[Classical shadow of a quantum process \cite{arb_proc}]
\label{def:classical-shadow-process}
    Given an $n$-qubit CPTP map $\mathcal{E}$. A size-$N$ classical shadow $S_N(\mathcal{E})$ of the quantum process $\mathcal{E}$ is a random set given by 
    \begin{equation}
        S_N(\mathcal{E}) \triangleq \left \{ |\psi_l^{(in)}\rangle, |\psi_l^{(out)}\rangle \right \}_{l=1}^N,
    \end{equation}
    where $|\psi_l^{(in)}\rangle = \bigotimes_{i=1}^n |s_{l,i}^{(in)}\rangle$ is a random input state with $|s_{l,i}^{(in)}\rangle \in stab_1$ sampled uniformly at random, and $|\psi_l^{(out)}\rangle = \bigotimes_{i=1}^n |s_{l,i}^{(out)}\rangle$ is the outcome of performing a random Pauli measurement on $\mathcal{E}(|\psi_l^{(in)}\rangle \langle \psi_l^{(in)}|)$.
\end{definition}
 The authors in \cite{arb_proc} recently proposed a machine learning algorithm that is able to learn the average output behaviour of any quantum process, under some restrictions. In the learning phase, the algorithm works with the classical shadow of a generic quantum process $\mathcal{E}$ and a set of observables $\{O_i\}$. In the prediction phase, the algorithm receives an input quantum state $\rho$ sampled from the target distribution $D$, and aims to predict $\tr(O_i\mathcal{E}(\rho))$ for all observables in the set. The algorithm comes with a rigorous performance guarantee on the average prediction error over $D$, achieved with efficient time and sample complexity with respect to the number of qubits and error parameters. While the guarantee holds for any quantum process, there are certain restrictions on the observables $\{O_i\}$ and the distribution $D$. We partially state their results in \Cref{lem:avg-spt-from-heis-obs} and refer to~\cite{arb_proc} for further details.
 
\subsection{Computational Learning Theory}
Computational learning theory studies what it means to \textit{learn a function}. One of the most successful formal learning frameworks is undoubtedly the model of \emph{Probably Approximately Correct} (PAC) learning, which was introduced in \cite{valiant1984theory}. In this model, we consider a class of target Boolean functions $\mathcal{C} \subseteq \{f | f:\{0,1\}^n \rightarrow \{0,1\}\}$,  usually called the \textit{concept class}. For an arbitrary concept $c \in \mathcal{C}$, a PAC learner receives samples of the form  $\{x,c(x)\}$, where, in general, $x$ is sampled from an unknown probability distribution $D: \{0,1\}^n \rightarrow [0,1]$. In the setting of noisy PAC learning, the bit $c(x)$ of each sample may independently be incorrect with some probability. The learner aims to output, with high probability, a hypothesis function $h$ with low error on average over the distribution $D$.

Another widely studied access model is that of learning from Statistical Queries (SQs) \cite{SQLearning}. Here,  a learner is more restricted in the way it can interact with the data. Rather than learning from individual samples, the algorithm learns using the statistical properties of the data, making it more robust to noise. In particular, an SQ learner receives as input estimates of the expectation values of some chosen functions within specified error tolerance. 

Quantum generalizations of both PAC and SQ learning have already been introduced and studied widely. The Quantum PAC (QPAC) model was introduced in \cite{QPAC}, where the learner has access to a quantum computer and receives \textit{quantum example states} as input. The quantum example state for a concept $c$ over $n$ input bits with the target distribution $D$ is the $n+1$-qubit state $|\psi_c\rangle = \sum_{x}\sqrt{D(x)}|x,c(x)\rangle$. It has been shown in~\cite{arunachalam2018optimal} that the sample complexity of quantum and classical PAC learning under unknown distributions is the same. However, over a fixed uniform distribution, learning with quantum queries can provide exponential advantage over the classical learner~\cite{atici2005improved,grilolearning}. Efficient learners from quantum queries under product distributions have also been shown \cite{kanade2019a, Caro_2020}.

 A quantum analogue of statistical queries was introduced in \cite{arunachalam2020quantum}. Here, the statistical query returns an approximation of the expectation value for an input measurement observable on quantum examples of the concept class to be learned. We include the quantum statistical query oracle defined in \cite{arunachalam2020quantum} below:
\begin{definition}[\textsf{QStat}, from \cite{arunachalam2020quantum}]
\label{def:QSQL}
    Let $\mathcal{C} \subseteq \{c : \{0,1\}^n \rightarrow \{0,1\} \}$ be a concept class and $D : \{0,1\}^n \rightarrow [0,1] $ be a distribution over $n$-bit strings. A quantum statistical query oracle $\mathsf{QStat}(O,\tau)$ for some $c^* \in \mathcal{C}$ receives as inputs $O, \tau$, where $\tau \geq 0$ and $O \in (\mathbb{C}^2)^{\otimes n+1} \times (\mathbb{C}^2)^{\otimes n+1}, ||O||_\infty \leq 1$, and returns a number $\alpha$ satisfying
    \begin{equation*}
        |\alpha - \langle \psi_{c^*}|O|\psi_{c^*}\rangle| \leq \tau,
    \end{equation*}
    where $|\psi_{c^*}\rangle = \sum_{x \in \{0,1\}^n} \sqrt{D(x)}|x,c^*(x)\rangle$.
\end{definition}
Note that in the QSQ model of~\cite{arunachalam2020quantum}, while the learner can obtain an estimate of any measurement on the quantum examples, it is restricted only to classical computation.
Interestingly, several concept classes such as parities, juntas, and DNF formulae are efficiently learnable in the QSQ model, whereas the classical statistical query model necessitates an exponentially larger number of queries. Additionally, the authors of \cite{arunachalam2023role} have established an exponential separation between QSQ learning and learning with quantum examples in the presence of classification noise.

One can also define quantum statistical queries for the task of learning states. In this case, the quantum example in~\Cref{def:QSQL} would be replaced by the unknown quantum state to be learned.

Quantum statistical queries have also found practical applications in classical verification of quantum learning, as detailed in \cite{caro2023classical}. Furthermore, they have been employed in the analysis of quantum error mitigation models \cite{em,arunachalam2023role} and quantum neural networks \cite{Du_2021}. Alternative variations of quantum statistical queries have also been explored in \cite{hinsche2023one, gollakota2022hardness, nietner2023average}.
\section{Quantum Statistical Queries to Quantum Processes}\label{sec:QPSQ-model}

In this section, we propose a framework and definition for learning quantum processes through quantum statistical queries and discuss its importance and relevance to different problems. Previously studied quantum statistical queries (Definition \ref{def:QSQL}) have considered queries to quantum examples associated with some classical function \cite{arunachalam2020quantum}. While this model is the first generalisation of statistical query learning into the quantum setting and this type of query is useful for the problem of learning certain classes of classical functions, they do not encompass the quantum processes, and as such a new framework is needed for studying the learnability of quantum processes through quantum statistical queries. Learning quantum processes from (often limited amount of) data is a crucial problem in physics and many areas of quantum computing such as error characterisation and error mitigation~\cite{strikis2021learning,kim2020quantum,mohseni2008quantum,harper2020efficient,huang2021information}. In many realistic and near-term scenarios, the only accessible data of the quantum process is through measured outcomes of such quantum channels, which are in fact nothing but statistical queries to such quantum processes. Hence studying the quantum process learnability via statistical queries is well motivated practically from the nature of quantum experiments and measurements. In what follows, we define our access model and then clarify its relationship to the previous definition of QSQs.
\begin{definition}[Statistical Query to a Quantum Process (QPSQ)]
\label{def:SQ}
Let $\mathcal{E}: \mathbb{C}^d \rightarrow \mathbb{C}^d$ be a quantum process acting on a $d$-dimensional Hilbert space. A QPSQ learning algorithm has access to a quantum statistical query oracle \textsf{QPStat} of the process $\mathcal{E}$, which receives as input an observable $O \in \mathbb{C}^d \times \mathbb{C}^d$ satisfying $\norm{O}_\infty \leq 1$, a quantum state $\rho \in \mathcal{S}_d$, and a tolerance parameter $\tau \geq 0$, and outputs a number $\alpha$ satisfying
\begin{equation}
     \\|\alpha - \tr(O\mathcal{E}(\rho))| \leq \tau.
\end{equation}
We denote the query as $\alpha \leftarrow \mathsf{QPStat}_{\mathcal{E}}(\rho,O, \tau)$\footnote{When referring to unitaries, we will sometimes abuse the notation and use $\mathsf{QPStat}_{U}$ to refer to the oracle associated with the unitary channel $\rho \rightarrow U \rho U^\dag$.}. The output $\alpha$ acts as an estimate of the expectation value of $O$ on the state $\rho$ after evolution under $\mathcal{E}$ within absolute error $\tau$.  
\end{definition}
 Our definition of QPSQs is justified in the setting where a learner has black-box access to a quantum process, with the ability to query the process with any quantum state. 
We note that our definition does not specify how the input quantum state $\rho$ is provided to the oracle. The algorithm can provide multiple copies of the quantum state or can send the classical description of the quantum state to the oracle where they can be locally prepared, depending on the scenario and application. Obtaining the output of these queries is then achieved by the most natural operation, which is estimating the desired observable after evolution under the process. Again, since most physical properties of a quantum system are extracted through such interactions with its associated quantum channel, the application of this model in physics is straightforward. However, we will show that this model can be used in various scenarios including quantum cryptanalysis (see \Cref{sec:crqpuf}) and learning quantumly-encoded classical functions. 

We now discuss what it means for a QPSQ learning algorithm to be \textit{efficient}. Any such learning algorithm aiming to learn some property of the process $\mathcal{E}$ must be a quantum polynomial time (\textit{QPT}) algorithm, making at most polynomially many queries to the oracle \textsf{QPStat}. As the learner itself must provide the input states or their classical description to \textsf{QPStat}, it must be possible to efficiently prepare the required copies of these states. This is an important point for the physical justification of this model. If the learner was able to query \textsf{QPStat} with arbitrary quantum states that require exponential quantum computation for preparation, the learning model would no longer be physically justified. On a similar note, the observables that an efficient learner provides to the oracle must also be efficiently measurable. Then, in the following definition for an \textit{efficient} QPSQ learning algorithm, we only discuss the efficiency of the algorithm, not its correctness, allowing various notions of correctness depending on the desired property of $\mathcal{E}$ to be learned up for consideration.
\begin{definition}[Efficient QPSQ learner]
\label{def:QPSQ-learner}
A QPSQ learning algorithm is called an \emph{efficient} QPSQ learner if it makes at most $\mathsf{poly}(\log(d))$ queries with tolerance at least $1/\mathsf{poly}(\log(d))$ to the \textsf{QPStat$_\mathcal{E}$} oracle with states preparable in $\mathsf{poly}(\log(d))$ time, and observables measurable up to precision $\tau$ in $\mathsf{poly}(\log(d), 1/\tau)$ time, and runs in $\mathsf{poly}(\log(d))$ computational time.
\end{definition}
After formally defining our QPSQ model, we now talk about the relationship between our proposed model and other SQ models. It is easy to see that the \textsf{QPStat} oracle in Definition~\ref{def:SQ} generalizes the \textsf{QStat} oracle from Definition~\ref{def:QSQL}. We start by considering the unitary $U_c$ associated with a Boolean function $c : \{0,1\}^n \rightarrow \{0,1\}$, $U_c |x,y\rangle = |x, y \oplus c(x)\rangle, \text{  } \forall x \in \{0,1\}^n, y \in \{0,1\}$, 
and the quantum process $\mathcal{E}_c : \rho \rightarrow U_c\rho U_c^\dag $. Let $|\psi_D\rangle = \sum_{x \in \{0,1\}^n} \sqrt{D(x)}|x,0\rangle$ be a superposition state associated with a distribution $D$ over $\{0,1\}^n$. Clearly, $\mathcal{E}_c(| \psi_D \rangle \langle \psi_D|) = | \psi_c \rangle \langle \psi_c|$. Thus, for any observable $O$, we can see that $\Tr(O\mathcal{E}_c(| \psi_D \rangle \langle \psi_D|)) = \langle \psi_c|O|\psi_c\rangle$, giving us the equivalence
\begin{equation}
    \textsf{QPStat}_{\mathcal{E}_c}(| \psi_D \rangle \langle \psi_D|, O, \tau) \equiv\textsf{QStat}_{\ket{\psi_c}}( O,\tau).
\end{equation}
Along with the definition of \textsf{QStat}, Arunachalam et. al. \cite{arunachalam2020quantum} presented algorithms for learning various concept classes in their QSQ model. The generalization of \textsf{QStat} by \textsf{QPStat} implies that these algorithms also hold in our QPSQ learning model. Given \textsf{QPStat} oracle access to the process $\mathcal{E}_c$ associated with the target concept, and sufficient copies of the state $|\psi_D\rangle$, the required output from \textsf{QStat} queries can be obtained using \textsf{QPStat} queries, and the remainder of the learning algorithms proceed identically. 
\begin{remark}
    Any concept class efficiently learnable in the QSQ model under a \textit{known distribution} $D$ can be learned efficiently given QPSQ access to the unitary encoding of the target function instead if $|\psi_D\rangle$ \textit{can be prepared efficiently}. Thus, by extending the algorithms from~\cite{arunachalam2020quantum}, one can show that \textit{there exist efficient QPSQ algorithms for learning parities, juntas and DNFs under the uniform distribution}. The generalization also holds in the \textit{unknown distribution} setting assuming access to copies of $|\psi_D\rangle$.
\end{remark}
\section{Predicting Properties of Quantum Processes}
\label{sec:spt}
In this section, we consider the problem of predicting properties of quantum processes, i.e., shadow process tomography, previously studied by \cite{kunjummen2023shadow,levy2021classical,caro2022learning}. We define the problem as follows.
\begin{problem}[Shadow Tomography of a Quantum Process]
    \label{prb:spt-worst}
    Let $0 < \epsilon, \delta < 1$ and $ M \geq 1$. Given access to an unknown quantum process $\mc{E}$, and a list of quantum states and observables $\{(\rho_i, O_i)\}_{i \in [M]}$, produce predictions $\{h(\rho_i,O_i)\}_{i \in [M]}$ such that, with probability at least $1-\delta$,
    \begin{equation}
        |h(\rho_i,O_i) - \tr(O_i \mc{E}(\rho_i))| \leq \epsilon, \forall i \in [M].
    \end{equation}
\end{problem}
We will focus primarily on the average-case relaxation of this problem \cite{arb_proc}, where the error only needs to be low on average over random input states, rather than on all input states.   
\begin{problem}[Average-case Shadow Process Tomography]
\label{prb:spt-avg}
    Let $0 < \epsilon, \delta < 1$ and $ M \geq 1$. Given access to an unknown quantum process $\mc{E}$, and a list of observables $\{O_i\}_{i \in [M]}$ as well as quantum states drawn randomly from some distribution $\mc{D}$, produce predictions $h(\rho, O_i)$ such that, with probability at least $1-\delta$,
    \begin{equation}
        \mathbf{E}_{\rho \sim \mc{D}} \left[|h(\rho, O_i)-\tr(O\mc{E}(\rho))|^2\right] \leq \epsilon^2, \forall i \in [M].
    \end{equation}
\end{problem}
 We state our main result on Problem \ref{prb:spt-avg} from access to $\mathsf{QPStat}$ queries in the following theorem.
\begin{theorem}[Average-case Shadow Process Tomography from QPSQs]
\label{thm:spt-avg}
    There exists an algorithm for solving Problem \ref{prb:spt-avg} for $M$ observables $\{O_i\}_{i \in [M]}$ and a distribution $\mc{D}$, where each $O_i$ is an $n$-qubit observable with $\|O_i\|_\infty \leq 1$, given as a sum of few-body $(\leq \kappa  = \mathcal{O}(1))$ observables, where each qubit is acted on by $\mathcal{O}(1)$ of the few-body observables, and $\mc{D}$ is a distribution over quantum states that is invariant under single-qubit Clifford operations, using
    \begin{equation}
        N = M \log(Mn/\delta) 2^{\mc{O}(\log (n) \log(1/\epsilon))}
    \end{equation}
    queries of tolerance $1/2^{\mc{O}(\log (n) \log(1/\epsilon))}$ to $\mathsf{QPStat}_\mc{E}$ and computational time $N \cdot 2^{\mc{O}(\log (n) \log(1/\epsilon))}$.
    Moreover, any algorithm solving Problem \ref{prb:spt-avg} with high probability for $M$ Pauli observables and any distribution over target quantum states must make
    \begin{equation}
        \Omega \left( \frac{M \tau^2}{\epsilon^2}\right)
    \end{equation}
    $\mathsf{QPStat}$ queries of tolerance $\tau$ to the unknown channel. 
\end{theorem}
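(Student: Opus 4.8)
The plan is to simulate the classical-shadow-of-a-process algorithm of \cite{arb_proc}, stated as \Cref{lem:avg-spt-from-heis-obs}, inside the QPStat model. The only step of that algorithm incompatible with our oracle is the construction of the classical shadow of \Cref{def:classical-shadow-process}, which samples randomized Pauli measurement outcomes on outputs $\mc{E}(\ket{\psi^{(in)}}\bra{\psi^{(in)}})$, whereas QPStat returns expectation values rather than measurement samples. First I would observe that the per-observable estimator of \cite{arb_proc} depends on the process only through expectation values of the form $\tr(O_i'\,\mc{E}(\ket{\psi^{(in)}}\bra{\psi^{(in)}}))$ for random single-qubit stabilizer inputs $\ket{\psi^{(in)}}$ and local components $O_i'$ of the target observables. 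Each such quantity is exactly one query $\mathsf{QPStat}_\mc{E}(\ket{\psi^{(in)}}\bra{\psi^{(in)}}, O_i', \tau)$, so I would replace every sampled shadow statistic by a query of tolerance $\tau = 1/2^{\mc{O}(\log(n)\log(1/\epsilon))}$ and invoke \Cref{lem:avg-spt-from-heis-obs} verbatim. Because QPStat fixes the observable at query time, we lose the ``measure once, predict many'' feature of physical classical shadows and must repeat the estimation for each of the $M$ observables, producing the linear-in-$M$ overhead; the factors $\log(Mn/\delta)$ and $2^{\mc{O}(\log(n)\log(1/\epsilon))}$ come from a union bound over observables and the sample complexity of \cite{arb_proc} respectively.

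\textbf{Lower bound --- construction.} For the matching lower bound I would reduce from recovering a hidden sign vector. Fix $M$ distinct nonidentity Pauli strings $O_1,\dots,O_M \in \pauli_n$, a pure input $\rho_0 = \ket{\psi_0}\bra{\psi_0}$, and for each $s \in \{-1,+1\}^M$ define the entanglement-breaking channel
\begin{equation}
    \mc{E}_s(\rho) = \bra{\psi_0}\rho\ket{\psi_0}\,\rho_s + \bigl(1-\bra{\psi_0}\rho\ket{\psi_0}\bigr)\frac{I}{2^n}, \qquad \rho_s := \frac{1}{2^n}\Bigl(I + 2\epsilon \sum_{i=1}^M s_i O_i\Bigr).
\end{equation}
For $\epsilon$ small enough that $2\epsilon\,\norm{\sum_i s_i O_i}_\infty \leq 1$, each $\rho_s$ is a valid density operator and $\mc{E}_s$ is CPTP. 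Using $\tr(O_iO_j)=2^n\delta_{ij}$ and $\tr(O_i)=0$ one checks $\tr(O_i\mc{E}_s(\rho_0)) = 2\epsilon\, s_i$. Hence, taking $\mc{D}$ to be the point mass on $\rho_0$, any algorithm solving \Cref{prb:spt-avg} with high probability outputs predictions with $|h(\rho_0,O_i) - 2\epsilon s_i| \leq \epsilon$ for every $i$, so $\mathrm{sign}(h(\rho_0,O_i)) = s_i$ and the learner recovers the entire string $s$.

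\textbf{Lower bound --- information accounting.} I would then bound how much a single query can reveal about $s$. Consider the randomized oracle that answers $(\rho,O,\tau)$ with $\tr(O\mc{E}_s(\rho))$ corrupted by independent additive noise supported on $[-\tau,\tau]$; this is a legal QPStat response for every $s$. Writing $\beta = \bra{\psi_0}\rho\ket{\psi_0}$ and $\gamma_i = \tr(OO_i)/2^n$, the $s$-dependent part of the noiseless answer is the linear functional $L(s) = 2\epsilon\beta\sum_i \gamma_i s_i$, whose variance over uniform $s$ is $4\epsilon^2\beta^2\sum_i\gamma_i^2 \leq 4\epsilon^2$, where the key inequality $\sum_i \gamma_i^2 \leq \tr(O^2)/2^n \leq \norm{O}_\infty^2 \leq 1$ is Parseval's identity in the Pauli basis and holds uniformly over all admissible queries, independently of $M$. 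Since the answer depends on $s$ only through $L(s)$, the channel $s \mapsto L(s) + \text{noise}$ has signal-to-noise ratio $\mc{O}(\epsilon^2/\tau^2)$, so conditioned on the transcript so far each query contributes mutual information $I(s;\alpha_j \mid \alpha_{<j}) = \mc{O}(\epsilon^2/\tau^2)$. By the chain rule the full transcript $T$ of $q$ adaptive queries satisfies $I(s;T) = \mc{O}(q\,\epsilon^2/\tau^2)$, while Fano's inequality forces $I(s;T) = \Omega(M)$ to recover the $M$ uniform bits of $s$ with constant success probability. Combining gives $q = \Omega(M\tau^2/\epsilon^2)$.

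\textbf{Main obstacle.} I expect the crux to be the uniform per-query variance bound: one must show that no choice of adaptively designed $\rho$ and $O$ with $\norm{O}_\infty\le1$ can concentrate more than $\mc{O}(\epsilon^2)$ of signal variance onto the hidden coordinates, so that the $M$ bits cannot be extracted faster than one $\mc{O}(\epsilon^2/\tau^2)$-sized increment at a time. The Parseval bound $\sum_i\gamma_i^2\le1$ delivers exactly this, and the remaining care is in choosing a noise distribution that is simultaneously a legal $\tau$-tolerant response and weak enough to make the bound on $I(s;\alpha_j\mid\alpha_{<j})$ rigorous under adaptivity; handling the truncation of the noise to $[-\tau,\tau]$ while preserving the $\tfrac12\log(1+\mathrm{SNR})$ capacity estimate is the only genuinely delicate calculation.
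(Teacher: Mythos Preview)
Your upper-bound sketch matches the paper's approach: Algorithm~\ref{alg:avg-spt} queries $\mathsf{QPStat}_{\mc{E}}$ with uniformly random stabilizer product inputs and the full observable $O$ (not its local components, but this is a minor slip), uses the responses to estimate the low-degree Pauli coefficients of $\mc{E}^\dagger(O)$ via $\tfrac{1}{3^{|P|}}\alpha_P(O)=\mathbf{E}_{\rho\sim\stabn}[\tr(P\rho)\tr(O\mc{E}(\rho))]$, and then invokes \Cref{lem:avg-spt-from-heis-obs}; the $M$ and $\log(Mn/\delta)$ factors arise exactly as you say.

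Your lower bound takes a genuinely different route from the paper, and as written it has a gap. The step ``conditioned on the transcript so far each query contributes mutual information $I(s;\alpha_j\mid\alpha_{<j})=\mc{O}(\epsilon^2/\tau^2)$'' is justified only by the \emph{prior} variance bound $\Var_s(L(s))\le 4\epsilon^2$, but the capacity estimate you invoke depends on the \emph{posterior} variance $\Var_{s\mid\alpha_{<j}}(L_j(s))$. Conditioning on previous noisy linear measurements can correlate the coordinates of $s$, and once the posterior second-moment matrix $\mathbf{E}_{\pi_j}[ss^\top]$ acquires a large eigenvalue (which can be as large as $M$ even though every diagonal entry equals $1$), an adaptively chosen $\gamma_j$ aligned with that eigenvector gives $\Var_{\pi_j}(L_j(s))=\Theta(M\epsilon^2)$. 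Feeding this into your chain-rule/Fano argument yields only $q=\Omega(\tau^2/\epsilon^2)$, losing the factor of $M$. The obstacle you flag (truncating the noise) is secondary; adaptivity is the real issue. A further, smaller problem is that $\rho_s\succeq 0$ forces $\epsilon\lesssim 1/M$ for generic Paulis, so the construction does not cover the full range of $\epsilon$ in the theorem.

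The paper sidesteps both issues with a many-vs-one construction carrying only $M$ hypotheses rather than $2^M$. The hidden channel is one of the trace-and-replace maps $\Phi_{\epsilon,P}(\rho)=\tr(\rho)\,(I+3\epsilon P)/2^n$ for $P$ ranging over the $M$ target Paulis, or else $\phidep$. Because these channels ignore their input, $\mathsf{QPStat}$ queries are equivalent to $\mathsf{QStat}$ queries on the fixed output state, and the $\Omega(M\tau^2/\epsilon^2)$ bound for distinguishing $\{(I+3\epsilon P)/2^n\}$ from $I/2^n$ is imported from \Cref{lem:qsq-st-worst}. That bound uses a \emph{counting} argument, not information theory: an adversarial oracle answers every query $(O,\tau)$ with $\tr(O)/2^n$, and by Parseval at most $9\epsilon^2/\tau^2$ of the $M$ Paulis satisfy $|\tr(OP)/2^n|>\tau/(3\epsilon)$, so each query is inconsistent with at most $O(\epsilon^2/\tau^2)$ hypotheses regardless of what came before. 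Tracking the surviving set rather than a posterior is what makes the argument robust to adaptivity. The final reduction from \Cref{prb:spt-avg} to this distinguishing task averages the learned hypothesis over $\mc{D}$ and applies Jensen, using that the output of each $\Phi_{\epsilon,P}$ is input-independent.
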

We prove our upper bound by adapting the learning algorithm from \cite{arb_proc} to the QPSQ setting, resulting in a factor-$M$ overhead in the query complexity compared to their result. Our lower bound complements this by showing that this overhead is necessary - any QPSQ algorithm for average-case shadow tomography of any quantum process must make $\Omega(M)$ queries. Further, as the lower bound holds for any $M$ Paulis and any distribution over states, it also applies to the kinds of local observables and distributions considered in the upper bound. Thus, we see that with respect to the number of observables, the upper bound is tight up to a logarithmic factor. Moreover, for $\epsilon = \Theta(1)$, the query and time complexities of the upper bound scale polynomially with $n$.  

The proof of our lower bound uses a similar construction to the lower bound for shadow tomography of quantum states from QSQs shown in \cite{arunachalam2023role}. For our upper bound, we adapt the learning algorithm of \cite{arb_proc} to the QPSQ setting. We present the proof of our lower bound in \Cref{sec:spt-avg-lower} and that of our upper bound in \Cref{sec:spt-avg-upper}. We support the theoretical guarantees of our learning algorithm with numerical simulations in \Cref{sec:simulations}.
\subsection{Proof of Lower Bound}\label{sec:spt-lower}
\label{sec:spt-avg-lower}
In this section, we prove the lower bound of \Cref{thm:spt-avg} on average-case shadow process tomography (Problem \ref{prb:spt-avg}) from $\mathsf{QPStat}$ queries. We start by presenting a lower bound for Problem \ref{prb:spt-worst} when the observables are Paulis. First, we recall a result from \cite{arunachalam2023role} on Pauli Shadow Tomography of quantum states from QSQs.
\begin{lemma}[Adapted from Theorem 27 of \cite{arunachalam2023role}]
\label{lem:qsq-st-worst}
    Let $C \subseteq \mc{P}_n,$ with $ |C| = M$, be a set of $n$-qubit Pauli operators. Let $S$  be the associated class of states
    \begin{equation}
        \mc{S}_{\epsilon} = \left\{ \rho_{\epsilon,P} = \frac{\mathbb{I}+3\epsilon P}{2^n}, P \in \mc{C} \right\}.
    \end{equation}
    Then, any algorithm that distinguishes between $\rho \in \mc{S}_\epsilon$ and $\rho = \frac{\mathbb{I}}{2^n}$ must make $\Omega(M\tau^2/\epsilon^2)$ queries of tolerance $\tau$ to $\mathsf{QStat}_\rho$.
\end{lemma}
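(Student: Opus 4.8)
The plan is to prove this via the standard statistical-query lower bound technique of \cite{feldman2017general} as adapted to the QSQ setting: construct a \emph{deterministic adversarial oracle} that answers every query with the value it would take on the reference state $\mathbb{I}/2^n$, and then argue that a single query can ``detect'' the perturbation $3\epsilon P$ for only a small number of the $M$ candidate Paulis. This is exactly the many-vs-one distinguishing argument, where the ``one'' is the maximally mixed state and the ``many'' are the states in $\mc{S}_\epsilon$.

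First I would compute the key quantity governing how informative a query is. For any observable $O$ with $\norm{O}_\infty \leq 1$, write the normalized Pauli coefficient $\hat{O}(P) := \tr(OP)/2^n$. A direct calculation gives $\tr(O\rho_{\epsilon,P}) = \hat{O}(I) + 3\epsilon\,\hat{O}(P)$, while $\tr(O\,\mathbb{I}/2^n) = \hat{O}(I)$, so the two expectation values differ by exactly $3\epsilon\,\hat{O}(P)$. Consequently, an oracle answer of $\hat{O}(I)$ is a valid response (within tolerance $\tau$) for $\rho = \mathbb{I}/2^n$ trivially, and also for $\rho_{\epsilon,P}$ whenever $3\epsilon|\hat{O}(P)| \leq \tau$. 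I would then set up the adversary to maintain a surviving set $\mc{C}' \subseteq \mc{C}$, initialized to all of $\mc{C}$, and on each query $(O,\tau)$ answer $\hat{O}(I)$ and delete from $\mc{C}'$ every $P$ with $3\epsilon|\hat{O}(P)| > \tau$. The crucial counting step is Parseval over the Pauli basis: since $\sum_{P \in \mc{P}_n} \hat{O}(P)^2 = \tr(O^2)/2^n \leq \norm{O}_\infty^2 \leq 1$, at most $9\epsilon^2/\tau^2$ Paulis can satisfy $|\hat{O}(P)| > \tau/(3\epsilon)$, so each query removes at most $9\epsilon^2/\tau^2$ elements from $\mc{C}'$.

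To conclude, after $q$ queries we have $|\mc{C}'| \geq M - 9q\epsilon^2/\tau^2$. If $q < M\tau^2/(9\epsilon^2)$ then $|\mc{C}'| > 0$, so some $P^\ast \in \mc{C}$ survives, and by construction the entire query--answer transcript is simultaneously consistent with $\rho = \rho_{\epsilon,P^\ast} \in \mc{S}_\epsilon$ and with $\rho = \mathbb{I}/2^n$. Because the learner's view is identical in these two cases, its output distribution is identical, and hence it cannot report the two cases differently with high probability, contradicting the assumption that it distinguishes $\mc{S}_\epsilon$ from the maximally mixed state. This forces $q = \Omega(M\tau^2/\epsilon^2)$, as claimed.

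I expect the main subtlety to lie in handling \emph{adaptivity} and the \emph{randomized, high-probability} nature of the learner rather than in the calculation itself. Adaptivity is accommodated because the adversary is deterministic and depends only on the queries actually asked, so the ``surviving set'' argument composes over the adaptive sequence; and the high-probability reduction goes through because a transcript that is information-theoretically identical under the two hypotheses yields equal success probabilities, ruling out any advantage over random guessing. Fixing the learner's internal randomness (or averaging over it) and arguing per transcript is the cleanest way to make this last step rigorous.
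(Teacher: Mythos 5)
Your proof is correct and is essentially the standard argument behind the cited result: the paper does not reprove this lemma but only remarks that the proof of Theorem 27 of \cite{arunachalam2023role} (a deterministic adversary answering with the maximally mixed value, combined with the Parseval bound $\sum_P \hat{O}(P)^2 \le 1$ showing each query eliminates at most $O(\epsilon^2/\tau^2)$ Paulis) adapts verbatim from $\mc{C}=\mc{P}_n$ to an arbitrary subset, which is exactly the adversary you construct. The only points left implicit in both your write-up and the paper are that $\epsilon \le 1/3$ is needed for $\rho_{\epsilon,P}$ to be a valid state and that the bound is only meaningful when $\tau < 3\epsilon$; neither affects correctness.
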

While \cite{arunachalam2023role} showed the above result for the specific case of $\mc{C} = \mc{P}_n$, their proof can easily be adapted to show this for any subset of $\mc{P}_n$. Further, \cite{arunachalam2023role} noted that this distinguishing task reduces to shadow tomography of the set of Pauli operators, resulting in a QSQ lower bound for the latter. We construct a similar distinguishing task for quantum channels, giving us a lower bound for Problem \ref{prb:spt-worst}.
\begin{prop}[Worst-Case Hardness of Shadow Process Tomography from QPSQs]
\label{prop:spt-worst}
    Any algorithm solving Problem \ref{prb:spt-worst} for $M$ Pauli observables and arbitrary input states within precision $\epsilon$ must make $\Omega(M\tau^2/\epsilon^2)$ queries of tolerance $\tau$ to $\mathsf{QPStat}_{\mc{E}}$.
\end{prop}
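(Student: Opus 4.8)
The plan is to reduce the state-distinguishing task of \Cref{lem:qsq-st-worst} to shadow process tomography, by embedding each Pauli-encoded state into a \emph{constant} (state-preparation) channel and observing that on such channels a $\mathsf{QPStat}$ query collapses to a $\mathsf{QStat}$ query. The conceptual crux is this collapse: against a channel that ignores its input, the freedom to choose arbitrary input states — the only way in which $\mathsf{QPStat}$ is stronger than $\mathsf{QStat}$ — buys the learner nothing, and it is precisely this fact that lets the state lower bound transfer verbatim to the process setting.

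Concretely, for each $P \in \mc{C}$ I would define the replacement channel $\mc{E}_P(\sigma) = \rho_{\epsilon,P} = \frac{I + 3\epsilon P}{2^n}$ for every input $\sigma$, together with $\mc{E}_0 = \phidep$, which outputs $\frac{I}{2^n}$. Each $\mc{E}_P$ is CPTP, since a constant map onto a fixed valid density operator is trace-preserving and admits an explicit Kraus decomposition (the validity of $\rho_{\epsilon,P}$ requires $\epsilon \le 1/3$, the regime inherited from \Cref{lem:qsq-st-worst}). Because these channels discard their input, for any state $\rho$ and observable $O$,
\begin{equation}
    \tr(O\,\mc{E}_P(\rho)) = \tr(O\,\rho_{\epsilon,P}), \qquad \tr(O\,\mc{E}_0(\rho)) = \tr\!\left(O\,\tfrac{I}{2^n}\right),
\end{equation}
so the answer to $\mathsf{QPStat}_{\mc{E}_P}(\rho, O, \tau)$ is exactly a valid answer to $\mathsf{QStat}_{\rho_{\epsilon,P}}(O,\tau)$, and likewise for $\mc{E}_0$.

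Next I would compute the target expectation values on the observable set $\mc{C}$. Using $\tr(P) = 0$ and $\tr(P'P) = 2^n \delta_{P,P'}$ for $P,P' \in \mc{P}_n$, one finds $\tr(P'\mc{E}_P(\rho)) = 3\epsilon$ when $P' = P$ and $0$ otherwise, while $\tr(P'\mc{E}_0(\rho)) = 0$ for every $P' \in \mc{C}$. Hence an algorithm solving \Cref{prb:spt-worst} for the observables $\mc{C}$ within precision $\epsilon$ produces a prediction in $[2\epsilon, 4\epsilon]$ on the marked observable when the channel is some $\mc{E}_P$, but predictions confined to $[-\epsilon, \epsilon]$ across all observables when the channel is $\mc{E}_0$. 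Thresholding the predictions at $3\epsilon/2$ therefore distinguishes $\{\mc{E}_P\}_{P\in\mc{C}}$ from $\mc{E}_0$ (and even identifies the marked $P$).

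Finally I would assemble the reduction: any algorithm for \Cref{prb:spt-worst} making $q$ queries to $\mathsf{QPStat}_{\mc{E}}$ yields, by simulating each query with a single $\mathsf{QStat}$ query to the corresponding state, a distinguisher between $\rho \in \mc{S}_\epsilon$ and $\rho = \frac{I}{2^n}$ using $q$ queries of tolerance $\tau$ to $\mathsf{QStat}_\rho$. \Cref{lem:qsq-st-worst} then forces $q = \Omega(M\tau^2/\epsilon^2)$, which is the claimed bound. The reduction is otherwise clean; the only point demanding care — rather than a genuine obstacle — is bookkeeping the precision constants, matching the $3\epsilon$ signal gap of $\rho_{\epsilon,P}$ against the $\epsilon$-accurate predictions so that the $3\epsilon/2$ threshold separates the two cases, together with tracking the CPTP validity constraint $\epsilon \le 1/3$.
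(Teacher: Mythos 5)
Your proposal is correct and follows essentially the same route as the paper: both construct the replacement channels $\Phi_{\epsilon,P}:\rho_{\mathrm{in}}\mapsto\tr(\rho_{\mathrm{in}})\frac{\mathbb{I}+3\epsilon P}{2^n}$, observe that on input-discarding channels a $\mathsf{QPStat}$ query is indistinguishable from a $\mathsf{QStat}$ query to the fixed output state, and reduce to the distinguishing task of \Cref{lem:qsq-st-worst}. Your explicit thresholding at $3\epsilon/2$ and the remark on the $\epsilon\le 1/3$ CPTP constraint are just bookkeeping the paper leaves implicit.
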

\begin{proof}
    Let $C \subseteq \pauli_n$, with $|C| = M$, be the set of $M$ Pauli observables. Consider the class of channels $\mc{C}_\epsilon$ that prepare states from $\mc{S}_{\epsilon}$ (\Cref{lem:qsq-st-worst}), i.e.,
    \begin{equation}
        \mc{C}_\epsilon = \left\{ \Phi_{\epsilon, P}: \rho_{\mathrm{in}} \rightarrow \tr(\rho_{\mathrm{in}}) \frac{\mathbb{I}+3\epsilon P}{2^n} \right\}.
    \end{equation}
    Given access to $\mathsf{QPStat}_\mc{E}$ queries of tolerance $\tau$, and promised that $\mc{E} \in \mc{C}_\epsilon$ or $\mc{E} = \phidep$, we consider the task of determining which is the case. Clearly, for any input state $\rho$, and Paulis $P,Q \in \mc{C}$,
    \begin{equation}
        \tr(P \Phi_{\epsilon,Q}(\rho)) = 3\epsilon\delta_{P,Q}.
    \end{equation}
    Meanwhile, for any Pauli $P \in \mc{C}$,
    \begin{equation}
        \tr(P \phidep (\rho)) = 0.
    \end{equation}
    Thus, any algorithm for solving Problem \ref{prb:spt-worst} for all observables in $\mc{C}$ and arbitrary input states, implies an algorithm for distinguishing between $\mc{C}_\epsilon$ and $\phidep$.
    
    Now, to lower bound the query complexity of this distinguishing task, we note that the channels in this problem simply discard the input state and prepare a fixed output state. Suppose the output state of a channel $\mc{E} \in \mc{C}_\epsilon \cup \{\phidep\}$ is $\rho_{\mathrm{out}}$. Clearly, the responses of $\mathsf{QPStat}_\mc{E}(\rho_{\mathrm{in}}, O, \tau)$ are indistinguishable from those of $\mathsf{QStat}_{\rho_{\mathrm{out}}}(O, \tau)$. Recall that the class of output states of $\mc{C}_\epsilon$ is exactly $\mc{S}_\epsilon$ and that the output of $\phidep$ is always $\frac{\mathbb{I}}{2^n}$. Thus, any algorithm for distinguishing $\mc{C}_\epsilon$ versus $\phidep$ using $\mathsf{QPStat}$ queries can be used to distinguish between $\mc{S}_\epsilon$ and $\frac{\mathbb{I}}{2^n}$ using the same number of $\mathsf{QStat}$ queries. Together with \Cref{lem:qsq-st-worst}, we obtain the desired lower bound.
\end{proof}
 While we have shown a lower bound for worst-case shadow tomography of quantum processes (Problem~\ref{prb:spt-worst}), Problem \ref{prb:spt-avg} may be easier as it only requires the error to be low on average over states, rather than on any input quantum state. We prove the lower bound by showing that the same distinguishing task from Proposition \ref{prop:spt-worst} can be solved by an algorithm for Problem \ref{prb:spt-avg} without additional queries.
\begin{proof}[Proof of \Cref{thm:spt-avg} (Lower Bound)]
   Suppose there exists an algorithm $\mc{A}$ for Problem \ref{prb:spt-avg} from $\mathsf{QPStat}$ queries. Let $C \subseteq \pauli_n$, with $|C| = M$, be the set of $M$ Pauli observables. We apply $\mc{A}$ to a channel $\mc{E} \in  \mc{C}_{\epsilon} \cup \{\phidep\}$.
   Then, from the correctness of $\mc{A}$, we obtain a hypothesis $h$, such that for the target distribution $\mc{D}$ over input states,
   \begin{equation}
       \mathbf{E}_{\rho \sim D} \left[|h(\rho, P) - \tr(P \mc{E}(\rho))|^2\right] \leq \epsilon^2, \forall P \in \mc{C}.
   \end{equation}
   For $\mc{E} \in  \mc{C}_{\epsilon} \cup \{\phidep\}$, the output state $\mc{E}(\rho)$ is independent of $\rho$. Let $\alpha_P = \tr(P\mc{E}(\rho))$ for any input $\rho$. Then, we have
   \begin{equation}
   \label{eq:avg-error-bound}
       \mathbf{E}_{\rho \sim D} \left[|h(\rho, P) - \alpha_P|^2\right] \leq \epsilon^2, \forall P \in \mc{C}.
   \end{equation}
   Now, we compute the mean of the hypothesis function, $\mu_P = \mathbf{E}_{\rho \sim \mc{D}} [h(\rho,P)]$. While this step may be computationally expensive, it does not require any additional $\mathsf{QPStat}$ queries. Our distinguisher will use $\mu_P$ as an estimate for $\alpha_P$. We bound the error as follows.
   \begin{align}
       |\mu_P - \alpha_P| &= \left|\mathbf{E}_{\rho \sim D} \left[h(\rho, P) - \alpha_P\right] \right|
       \\&\leq \sqrt{ \mathbf{E}_{\rho \sim D} \left[|h(\rho, P) - \alpha_P|^2\right] }
       \\&\leq \epsilon
   \end{align}
   where the first inequality follows from Jensen's inequality, and the second inequality follows from \Cref{eq:avg-error-bound}. Thus, using $\mc{A}$, we can estimate $\alpha_P$ within error $\epsilon$  for all $P \in \mc{C}$, without any additional $\mathsf{QPStat}$ queries. From the proof of Proposition \ref{prop:spt-worst}, we see that this is enough to distinguish between $\mc{C}_{\epsilon}$ and $\phidep$. Together with Proposition \ref{prop:spt-worst}, we now obtain the stated lower bound.
\end{proof}
\subsection{Proof of Upper Bound}
\label{sec:spt-avg-upper}
In this section, we present an algorithm achieving the upper bound of \Cref{thm:spt-avg}. The algorithm follows from a straightforward adaptation of the classical shadow algorithm for this problem by \cite{arb_proc}. The key result of \cite{arb_proc} we use is that to succeed at Problem \ref{prb:spt-avg}, it suffices to estimate (in a certain sense) the low-degree Pauli coefficients of the Heisenberg-evolved observables. We state this result in the following lemma.  
\begin{lemma}[Adapted from \cite{arb_proc}]
\label{lem:avg-spt-from-heis-obs}
    Given $\epsilon,\delta > 0$, a distribution $\mc{D}$ over quantum states that is invariant under single-qubit Clifford operations, and an $n$-qubit observable $O$ given as a sum of few-body observables with degree $\mc{O}(1)$, where each qubit is acted on by $\mathcal{O}(1)$ of the few-body observables and $\|O\|_\infty \leq 1$. Define hyperparameters $k, \Tilde{\epsilon}$, where
    \begin{equation}
        \label{eq:alg-hyperparam}
        k = \lceil log_{1.5}(2/\epsilon^2)\rceil, \quad \Tilde{\epsilon} = \Theta\left(\frac{\epsilon^2}{(2n)^k}\right).
    \end{equation}
    Let $\mc{E}$ be the unknown process and consider the Heisenberg-evolved observable $\mc{E}^\dagger (O)$ with Pauli coefficients $\alpha_P(O)$, i.e.,
    \begin{equation}
        \mc{E}^\dagger (O) = \sum_{P \in \mc{P}_n} \alpha_P(O) P.
    \end{equation}
    Then, given the description of a random state $\rho \sim \mc{D}$, in order to achieve low average-case error over $\mc{D}$,
    \begin{equation}
        \mathbf{E}_{\rho \sim \mc{D}} \left[|h(\rho, O)-\tr(O\mc{E}(\rho))|^2\right] \leq \epsilon^2,
    \end{equation}
    it suffices to obtain $\Tilde{\epsilon}$-accurate estimates of $\frac{1}{3^{|P|}} \alpha_P(O)$ for all Paulis $P$ with degree $|P| \leq k$ along with $\mc{O}(k (3n)^k)$ computational time for each prediction.
\end{lemma}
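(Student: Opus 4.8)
The plan is to realize the prediction as a low-degree truncation of the Heisenberg-evolved observable and to control the resulting error in $L^2(\mc{D})$ using two moment identities that follow from the single-qubit Clifford invariance of $\mc{D}$. Writing the target in the Heisenberg picture, $\tr(O\mc{E}(\rho)) = \tr(\mc{E}^\dagger(O)\rho) = \sum_{P \in \mc{P}_n}\alpha_P(O)\tr(P\rho)$, I would define the hypothesis by keeping only the low-degree terms and substituting the supplied estimates, $h(\rho,O) = \sum_{|P| \le k} 3^{|P|}\hat c_P\,\tr(P\rho)$, where $\hat c_P$ is the given $\tilde\epsilon$-accurate estimate of $3^{-|P|}\alpha_P(O)$ (the $3^{-|P|}$ scaling being the quantity the process shadow naturally produces with bounded variance). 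Setting $\hat\alpha_P := 3^{|P|}\hat c_P$ we have $|\hat\alpha_P - \alpha_P(O)| \le 3^{|P|}\tilde\epsilon$, and the error splits as $f - h = \sum_{|P| \le k}(\alpha_P - \hat\alpha_P)\tr(P\rho) + \sum_{|Q| > k}\alpha_Q\tr(Q\rho)$, an \emph{estimation} part supported on low degree and a \emph{truncation} part supported on high degree.

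The core of the argument is two facts about $\mathbf{E}_{\rho\sim\mc{D}}$. First, \emph{orthogonality}: for $P \ne Q$ there is a qubit on which they differ, and a single-qubit Clifford (a suitable Pauli) acting there fixes one factor while flipping the sign of the other, so invariance of $\mc{D}$ forces $\mathbf{E}_\rho[\tr(P\rho)\tr(Q\rho)] = -\mathbf{E}_\rho[\tr(P\rho)\tr(Q\rho)] = 0$. Second, and most importantly, a \emph{second-moment decay} $\mathbf{E}_\rho[\tr(P\rho)^2] \le (2/3)^{|P|}$. I would prove this by restricting to the support $S$ of $P$ and applying Parseval on the $|P|$ qubits of $S$: $\sum_{Q\text{ on }S}\mathbf{E}_\rho[\tr(Q\rho)^2] = 2^{|P|}\,\mathbf{E}_\rho[\tr(\rho_S^2)] \le 2^{|P|}$, while the $3^{|P|}$ full-support Paulis all share the \emph{same} second moment, because single-qubit Cliffords permute $\{X,Y,Z\}$ on each qubit; dividing gives the bound. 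This is precisely where the base $1.5 = 3/2$ in the definition of $k$ comes from.

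Given these, I would conclude as follows. By orthogonality the cross terms between the estimation and truncation parts vanish, so $\mathbf{E}_\rho[(f-h)^2]$ is exactly the sum of the two pieces. For the truncation part I would combine $\mathbf{E}_\rho[\tr(Q\rho)^2] \le (2/3)^{|Q|}$ with the Parseval bound $\sum_P \alpha_P(O)^2 = 2^{-n}\tr(\mc{E}^\dagger(O)^2) \le \|\mc{E}^\dagger(O)\|_\infty^2 \le \|O\|_\infty^2 \le 1$ (using that $\mc{E}^\dagger$ is unital and completely positive, hence operator-norm contracting) to get $\sum_{|Q|>k}\alpha_Q^2 (2/3)^{|Q|} \le 1.5^{-k} \le \epsilon^2/2$ by the choice $k = \lceil\log_{1.5}(2/\epsilon^2)\rceil$. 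For the estimation part the same second-moment bound gives $\sum_{|P|\le k}(3^{|P|}\tilde\epsilon)^2(2/3)^{|P|} = \tilde\epsilon^2\sum_{|P|\le k} 6^{|P|}$, a sum of $\mc{O}((3n)^k)$ terms each at most $6^k$, so the choice $\tilde\epsilon = \Theta(\epsilon^2/(2n)^k)$ absorbs these factors and makes it at most $\epsilon^2/2$; summing yields the claimed $\epsilon^2$. The per-prediction cost is just evaluating $\tr(P\rho)$ over all $\mc{O}((3n)^k)$ low-degree Paulis, i.e.\ $\mc{O}(k(3n)^k)$.

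The hard part will be establishing the second-moment decay with the correct geometric rate for \emph{arbitrary} single-qubit-Clifford-invariant distributions, not merely product ones: a naive per-qubit argument suggests a rate $3^{-|P|}$, but correlated invariant distributions (e.g.\ random global stabilizer states) violate any such per-qubit bound, and one must instead argue globally via Parseval on the support together with the axis-permutation symmetry to obtain the sharp $(2/3)^{|P|}$. The remaining ingredients — the sign-flip orthogonality, the contractivity bound $\sum_P\alpha_P^2 \le 1$, and the bookkeeping of the estimation error — are routine once this decay is in hand. I note that the few-body/bounded-overlap hypotheses on $O$ are inherited from \cite{arb_proc}: they are what make the coefficients $3^{-|P|}\alpha_P(O)$ efficiently estimable in the companion part of the theorem, whereas this reduction itself relies only on $\|O\|_\infty \le 1$ and the Clifford invariance of $\mc{D}$.
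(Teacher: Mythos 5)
The paper does not actually prove this lemma — it is imported from \cite{arb_proc} with an explicit pointer to that work for details — so there is no in-paper proof to compare against. Your reconstruction is correct and follows essentially the same route as the cited source: the sign-flip orthogonality and the $(2/3)^{|P|}$ second-moment decay of Pauli expectations under single-qubit-Clifford-invariant distributions (proved exactly as you do, by Parseval on the support of $P$ plus the axis-permutation symmetry), the contractivity bound $\sum_P \alpha_P(O)^2 \leq \norm{O}_\infty^2$, and the resulting truncation-plus-estimation split are precisely the ingredients of the low-degree approximation argument there, and your bookkeeping against the stated choices of $k$ and $\Tilde{\epsilon}$ checks out.
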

We note that \cite{arb_proc} additionally consider another setting for the hyperparameters $k, \Tilde{\epsilon}$ when an additional error parameter is included. For simplicity, we only consider the hyperparameter setting stated in \Cref{eq:alg-hyperparam}, but our analysis can be extended to the other setting in \cite{arb_proc}. 

Now, from \Cref{lem:avg-spt-from-heis-obs}, we see that we need to learn the low-degree Pauli coefficients of $\mc{E}^\dagger (O)$ from queries to $\mathsf{QPStat}$. Apart from this step, the rest of the algorithm proceeds identically to that of \cite{arb_proc}. For completeness, we detail the full algorithm in \Cref{alg:avg-spt}, where we only discuss the case for a single observable. This can be extended to $M$ observables by repeating the algorithm for each observable and increasing the number of queries by readjusting the failure probability of each repetition to $\delta/M$.
In \Cref{alg:avg-spt}, we make use of $\|O\|_{\mathrm{Pauli,1}}$, which denotes the $l_1$ norm of the Pauli coefficients of $O$. Specifically, for $O = \sum_{P \in \pauli_n} a_P P$,  we have $\|O\|_{\mathrm{Pauli,1}} = \sum_{P \in \pauli_n} |a_P|$.
\begin{algorithm}
\caption{Learning to predict properties of a quantum process from QPSQs}
    \label{alg:avg-spt}
    \begin{algorithmic}
     \State $k \gets \lceil log_{1.5}(2/\epsilon^2)\rceil, \Tilde{\epsilon} \gets \Theta\left(\frac{\epsilon^2}{(2n)^k}\right)$
     \\
    \State \textbf{Gather Data}:
        \For{$l = 1$ to $N$}
            \State $|\psi^{(in)}_l\rangle \gets \bigotimes_{i = 1}^n |s^{(in)}_{l,i}\rangle, |s^{(in)}_{l,i}\rangle \in \stabone$, chosen uniformly at random 
            \State $y_l \gets \mathsf{QPStat}_{\mathcal{E}}(|\psi^{(in)}_l\rangle \langle \psi^{(in)}_l|, O,  \tau)$
        \EndFor
        \State \textbf{return} $S_N(\mathcal{E}, O) = \{|\psi^{(in)}_l\rangle, y_l\}_{l = 1}^N$
    \\
    \State \textbf{Learning}:
    \ForAll{$P \in \mathcal{P}_n, |P| \leq k$}
        \State $\hat{x}_P(O) \gets \frac{1}{N} \sum_{l = 1}^N y_l \tr(P |\psi_l^{in}\rangle \langle \psi_l^{in}|) $
        \If{ 
        $(\frac{1}{3})^{|P|} > 2\Tilde{\epsilon}$ and $ |\hat{x}_P(O)| > 2.3^{|P|/2}\sqrt{\Tilde{\epsilon}}\|O\|_{\mathrm{Pauli,1}}$
        }
        \State $\hat{\alpha}_P(O) \gets 3^{|P|} \hat{x}_P(O)$
        \Else 
        \State $\hat{\alpha}_P(O) \gets 0$
        \EndIf
    \EndFor
    \\
    \State \textbf{Prediction} for target state $\rho$:
    \State $h(\rho) \gets \sum_{P : |P| \leq k} \hat{\alpha}_P(O) \tr(P\rho)$
    \State \textbf{return} $h(\rho)$
\end{algorithmic}
\end{algorithm}
While \Cref{lem:avg-spt-from-heis-obs} requires the classical description of a state $\rho$ to make predictions on it, from \Cref{alg:avg-spt}, we see that it suffices to have estimates of $\tr(P\rho)$ for all low-degree Paulis $P \in \pauli_n, |P| \leq k$. Instead of a complete classical description, this could also be estimated using classical shadows \cite{classical_shadow_tomography} or from $\mathsf{QStat}_\rho$ queries, depending on the access available.

\begin{proof}[Proof of \Cref{thm:spt-avg} (Upper Bound)]
From \Cref{lem:avg-spt-from-heis-obs}, it suffices to estimate the Pauli coefficients $\frac{1}{3^{|P|}}\alpha_P(O)$ within precision $\Tilde{\epsilon}$. Let $\mathrm{stab}_1^{\otimes n}$ be the uniform distribution over the tensor product of $n$ single-qubit stabilizer states. Then, \cite{arb_proc} showed that for all Paulis $P \in \pauli_n$,
\begin{equation}
    \frac{1}{3^{|P|}}\alpha_P(O) = \underset{\rho \sim \mathrm{stab}_1^{\otimes n}}{\mathbf{E}} \tr(P\rho)\tr(O\mathcal{E}(\rho)).
\end{equation}
Denote $x_P(O) \triangleq  \frac{1}{3^{|P|}}\alpha_P(O)$. We construct estimators for $x_P(O)$ by performing queries to $\mathsf{QPStat}$ with random states $\rho_l \sim \mathrm{stab}_1^{\otimes n}, l \in [N]$. Let $y_l \gets \mathsf{QPStat}_{\mc{E}}(\rho_l, O, \tau)$. Then, for all low-degree Paulis $P \in \pauli_n, |P| \leq k$, we construct estimators
\begin{equation}
    \hat{x}_P(O) = \frac{1}{N} \sum_{l \in [N]} \tr(P \rho_l) y_l. 
\end{equation}
Before bounding the error between $\hat{x}_P(O)$ and $x_P(O)$, we first define the intermediate random variable $x_P^\prime(O)$
\begin{equation}
    x_P^\prime(O) \triangleq \frac{1}{N}\sum_{l=1}^N \tr(P\rho_l)\tr(O\mathcal{E}(\rho_l)).
\end{equation}
Now, we bound the error between $x_P(O)$ and $x_P^\prime(O)$. Using $N = \mc{O}\left(\frac{\log((3n)^k/\delta)}{(\Tilde{\epsilon} - \tau)^2}\right)$, from Hoeffding's inequality and assuming $\tau < \Tilde{\epsilon}$, we see that for any $P \in \pauli_n, |P| \leq k$, with probability at least $1 - \delta/(3n)^k$,
\begin{equation}
\label{eq:alg-triangle-pt1}
    |x_P(O) - x_P^\prime(O)| \leq \Tilde{\epsilon} - \tau.
\end{equation}
Using a union bound over failure probabilities, and noting that there are at most $(3n)^k$ Pauli operators of degree $k$, we see that \Cref{eq:alg-triangle-pt1} holds for all $P \in \pauli_n, |P| \leq k$ with probability at least $1-\delta$. For the following arguments, we condition on this event. Next, we bound the error between $x_P^\prime(O)$ and $\hat{x}_P(O)$.
\begin{align}
    |x_P^\prime(O) - \hat{x}_P(O)| &= \left|
    \frac{1}{N} \sum_{l \in [N]} \tr(P \rho_l) \left(\tr(O\mc{E}(\rho_l)) -y_l\right)
    \right| \\&\leq \frac{1}{N} \sum_{l \in [N]}|(\tr(O\mc{E}(\rho_l)) -y_l| \\& \label{eq:alg-triangle-pt2} \leq \tau,
\end{align}
where the first inequality uses the triangle inequality and that $\|P\|_\infty = 1$. Now, applying the triangle inequality to \Cref{eq:alg-triangle-pt1,eq:alg-triangle-pt2}, we have
\begin{equation}
    |x_P(O) -  \hat{x}_P(O)| \leq \Tilde{\epsilon},
\end{equation}
which is the desired bound on the error. The stated query complexity for $M = 1$ can be obtained by setting $\tau = \Tilde{\epsilon}/2$ and selecting $k$ and $\Tilde{\epsilon}$ as stated in \Cref{eq:alg-hyperparam}. To obtain the query complexity for general $M$, we can repeat the entire algorithm for each observable with failure probability at most $\delta/M$.

The computational complexity is dominated by estimating the Pauli coefficients, each of which requires time $\mc{O}\left(kN\right)$. The stated complexity can be obtained by noting that we estimate at most $(3n)^k$ coefficients.
 \end{proof}
In certain practical cases, the following additional assumption about the output $\alpha$ of the oracle $\mathsf{QPStat}_{\mathcal{E}}$ may hold.
\begin{assumption} \label{ass:qpstat-assumption}
    Consider any input state $\rho$, observable $O$ and tolerance $\tau > 0$. Let $\alpha \gets \mathsf{QPStat}_{\mc{E}}(\rho, O, \tau)$ be the output of a quantum statistical query. Then, we assume that $\alpha$ satisfies
    \begin{equation}
        \mathbf{E}[\alpha] = \tr(O\mathcal{E}(\rho)),
    \end{equation}
    where the expectation is over the oracle's internal randomness.
\end{assumption}
In such cases, we can obtain a tighter query complexity upper bound to solve Problem \ref{prb:spt-avg}, and without the restriction that $\tau < \Tilde{\epsilon}$. We present this result in the following corollary.
\begin{corollary}
\label{cor:spt-avg-with-assumption}
    Under Assumption \ref{ass:qpstat-assumption} on the output of \textsf{QPStat}, \Cref{alg:avg-spt} succeeds at Problem \ref{prb:spt-avg} using
    \begin{equation}
    \label{eq:rig-guarantee-assumption}
    N = \tau^2 M \log(Mn/\delta) 2^{\mc{O}(\log (n) \log(1/\epsilon))}
    \end{equation}
    queries of tolerance $\tau$ to $\mathsf{QPStat}_{\mc{E}}$.
\end{corollary}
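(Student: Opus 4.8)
The plan is to rerun the analysis of the upper bound of \Cref{thm:spt-avg}, but to exploit \Cref{ass:qpstat-assumption} in order to remove the two-step error decomposition (and with it the constraint $\tau < \Tilde{\epsilon}$). Recall from \Cref{lem:avg-spt-from-heis-obs} that, for each observable, it suffices to estimate the rescaled Pauli coefficients $x_P(O) = \frac{1}{3^{|P|}}\alpha_P(O)$ within additive error $\Tilde{\epsilon}$ for every low-degree Pauli $P$ with $|P| \leq k$, where $k$ and $\Tilde{\epsilon}$ are as in \eqref{eq:alg-hyperparam}. As in the upper bound proof, \cite{arb_proc} gives the identity $x_P(O) = \mathbf{E}_{\rho \sim \stabn}[\tr(P\rho)\tr(O\mc{E}(\rho))]$, and \Cref{alg:avg-spt} forms the empirical estimator $\hat{x}_P(O) = \frac{1}{N}\sum_{l \in [N]}\tr(P\rho_l) y_l$ from queries $y_l \gets \mathsf{QPStat}_\mc{E}(\rho_l, O, \tau)$ with $\rho_l \sim \stabn$.

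First I would establish that, under \Cref{ass:qpstat-assumption}, $\hat{x}_P(O)$ is an \emph{unbiased} estimator of $x_P(O)$. Writing $y_l = \tr(O\mc{E}(\rho_l)) + \eta_l$ with $\mathbf{E}[\eta_l \mid \rho_l] = 0$ (the assumption) and $|\eta_l| \leq \tau$ (the \textsf{QPStat} tolerance), and taking expectations over the joint randomness of $\rho_l$ and the oracle, gives $\mathbf{E}[\tr(P\rho_l) y_l] = \mathbf{E}_{\rho_l}[\tr(P\rho_l)\tr(O\mc{E}(\rho_l))] = x_P(O)$, since the noise term contributes $\mathbf{E}_{\rho_l}[\tr(P\rho_l)\,\mathbf{E}[\eta_l\mid\rho_l]] = 0$. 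This is the crucial structural difference from the proof of \Cref{thm:spt-avg}: there the worst-case $\tau$-error of each query had to be charged against the accuracy budget through the intermediate quantity $x_P^\prime(O)$, forcing $\tau < \Tilde{\epsilon}$; here the noise has mean zero and so only inflates the variance, which averaging suppresses.

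Next I would control the fluctuation of $\hat{x}_P(O)$ around its mean by a concentration bound. Each summand $\tr(P\rho_l) y_l$ is bounded in magnitude by $1 + \tau$ (using $|\tr(P\rho_l)|\leq\|P\|_\infty\leq 1$, $|\tr(O\mc{E}(\rho_l))|\leq 1$, and $|\eta_l|\leq\tau$), and, decomposing into the uncorrelated signal and noise parts, has variance at most $1 + \tau^2$. A Bernstein inequality then yields $|\hat{x}_P(O) - x_P(O)|\leq\Tilde{\epsilon}$ with probability at least $1 - \delta/(3n)^k$ provided $N = \mc{O}\!\left((1+\tau^2)\log((3n)^k/\delta)/\Tilde{\epsilon}^2\right)$; a union bound over the at most $(3n)^k$ Paulis of degree $\leq k$ makes all estimates simultaneously accurate with probability at least $1-\delta$. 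Substituting $k = \lceil \log_{1.5}(2/\epsilon^2)\rceil$ and $\Tilde{\epsilon} = \Theta(\epsilon^2/(2n)^k)$ from \eqref{eq:alg-hyperparam} collapses $1/\Tilde{\epsilon}^2$ into the factor $2^{\mc{O}(\log(n)\log(1/\epsilon))}$, and repeating for all $M$ observables with per-observable failure probability $\delta/M$ gives the claimed $N = \tau^2 M \log(Mn/\delta)\, 2^{\mc{O}(\log(n)\log(1/\epsilon))}$.

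The main obstacle I anticipate is bookkeeping rather than conceptual: matching the advertised $\tau^2$ scaling cleanly. Plain Hoeffding on a summand of range $2(1+\tau)$ would give a $(1+\tau)^2$ prefactor, so to isolate the $\tau^2$ dependence (and absorb the $\Theta(1)$ intrinsic signal variance into the $2^{\mc{O}(\cdots)}$ term, which already contains $1/\Tilde{\epsilon}^2$) one should use a variance-based (Bernstein) bound together with the observation that the signal and zero-mean noise contributions are uncorrelated. Care is also needed to confirm that \Cref{ass:qpstat-assumption} is the only extra ingredient used — in particular that unbiasedness holds conditionally on each sampled $\rho_l$, so that the cross terms genuinely vanish — after which the remainder of \Cref{alg:avg-spt} and its prediction guarantee follow verbatim from \Cref{lem:avg-spt-from-heis-obs} as in the unconditional case.
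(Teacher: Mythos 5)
Your proposal follows essentially the same route as the paper: drop the intermediate quantity $x_P^\prime(O)$, observe that Assumption~\ref{ass:qpstat-assumption} makes $\hat{x}_P(O)$ an unbiased estimator of $x_P(O) = \mathbf{E}_{\rho\sim\stabn}[\tr(P\rho)\tr(O\mc{E}(\rho))]$, concentrate directly, union bound over the $(3n)^k$ low-degree Paulis, and rescale $\delta \to \delta/M$. The one place you diverge is the concentration step, and there you are actually more careful than the paper. The paper simply notes that each summand deviates from $\tr(P\rho_l)\tr(O\mc{E}(\rho_l))$ by at most $\tau$ and that the estimator is unbiased, then invokes Hoeffding with $N = \mc{O}(\tau^2\log((3n)^k/\delta)/\Tilde{\epsilon}^2)$; it does not address the issue you raise, namely that the signal part $\tr(P\rho_l)\tr(O\mc{E}(\rho_l))$ has range $\Theta(1)$ over the random draw of $\rho_l$ and therefore contributes a $\tau$-independent $\mc{O}(\log(\cdot)/\Tilde{\epsilon}^2)$ term to the sample complexity. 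Your Bernstein-based bound of $\mc{O}((1+\tau^2)\log(\cdot)/\Tilde{\epsilon}^2)$ is the honest conclusion of this line of argument, and your worry is legitimate: the multiplicative $\tau^2$ prefactor in \Cref{eq:rig-guarantee-assumption} cannot be recovered by ``absorbing'' the $\Theta(1)$ variance term into $2^{\mc{O}(\log(n)\log(1/\epsilon))}$, since that factor does not depend on $\tau$ and the stated $N$ tends to $0$ as $\tau \to 0$ while the intrinsic Monte Carlo error over $\rho_l \sim \stabn$ does not. So do not treat your failure to reproduce the clean $\tau^2$ scaling as a gap in your own reasoning; the paper's proof asserts it without resolving the same obstruction, and the defensible statement is the one you derived.
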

\begin{proof}
Under this assumption, we can eliminate the intermediate quantity $x_P^\prime(Q)$ in the previous proof, and directly apply Hoeffding's inequality to obtain the bound on $|\hat{x}_P(Q)-x_P(Q)|$. Note that for all $l \in [N]$, 
\begin{equation}
    |\tr(P\rho_l)y_l - \tr(P\rho_l) \tr(O \mc{E}\rho_l)| \leq \tau,
\end{equation}
and 
\begin{equation}
    \mathbf{E}_{\rho_l} [\tr(P\rho_l)y_l] = x_P(O).
\end{equation}
Thus, using $N = \mc{O}\left(\frac{\tau^2 \log((3n)^k/\delta)}{\Tilde{\epsilon}^2}\right)$, from Hoeffding's inequality, we see that for any $P \in \pauli_n, |P| \leq k$, with probability at least $1 - \delta/(3n)^k$,
\begin{equation}
\label{eq:alg-triangle-pt3}
    |x_P(O) - \hat{x}_P(O)| \leq \Tilde{\epsilon}.
\end{equation}
Using a union bound over failure probabilities, and noting that there are at most $(3n)^k$ Pauli operators of degree $k$, we see that \Cref{eq:alg-triangle-pt3} holds for all $P \in \pauli_n, |P| \leq k$. The stated query complexity can be obtained using \Cref{eq:alg-hyperparam}.
\end{proof}

\subsection{Numerical Simulations}
\label{sec:simulations}
In this section, we demonstrate the performance of \Cref{alg:avg-spt} through numerical simulations. The code for our simulations is available in a public Github repository \footnote{\url{https://github.com/chirag-w/qpsq-learning}}. Before presenting our simulations, we remark on our approach towards simulating the $\mathsf{QPStat}$ oracle. In order to construct the output of $\mathsf{QPStat}$, we assume the oracle uses a method to estimate the expectation value of an observable, such as the ones shown in \cite{classical_shadow_tomography, hadfield2020measurements_locally_biased, Huang_2021_derandom, Wu2023overlappedgrouping}. In our simulation, in order to emulate the behaviour of these methods, we compute the true expectation value and output the result after adding a normally distributed error to it. The error is sampled from a normal distribution such that it is within the specified tolerance with high probability. We note that our learning model puts no assumptions on the error in the output of the queries, and in theory, this error can come from any arbitrary distribution as long as it is within the tolerance with high probability. However, we will argue that this simple method of generating the sample data already captures these scenarios well enough for the purpose of our simulations. We also compare this method with the use of classical shadows for evaluating the outcome of an observable \cite{classical_shadow_tomography} in \Cref{fig:compare-cs-norm}. We see that for the same target tolerance and success probabilities, the classical shadow method produces less error than that generated using a normal distribution. In fact, the normally distributed error achieves the exact value for the success probability, while any practical method would produce the same error or less, as it would come with a potentially looser bound. Thus, our emulated oracle would produce greater deviations than in a practical implementation, implying that the real-life performance of the algorithm would only be similar to or better than the simulations. We use these emulated oracles for the simulation of the learning algorithms.
\begin{figure}[H]
\centering
\includegraphics[width=0.5\textwidth]{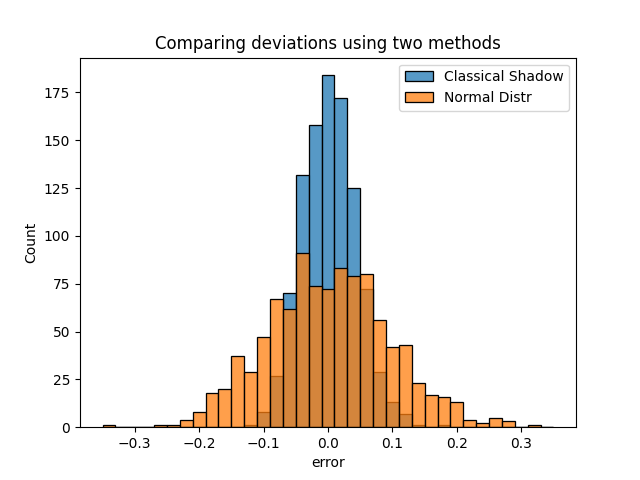}
\caption{\small Comparison between simulated errors generated from a normal distribution and those generated using classical shadow tomography to evaluate the EV of the Pauli-Z observable on random single-qubit stabilizer states after evolution under a fixed haar-random unitary. We fix a tolerance value $\tau = 0.2$, and the probability of the deviation lying outside the tolerance, $\delta = 0.0455$}
\label{fig:compare-cs-norm}
\end{figure}

In \Cref{fig:learn-arb-alg-haar}, we show the simulated performance of \Cref{alg:avg-spt} in predicting properties of 10 haar-random unitaries over 6 qubits for a range of tolerances. We consider $O = Z \otimes I ... \otimes I$, the Pauli-Z observable on the first qubit. We consider three distributions of target states, namely the uniform distributions over the computational basis states, the stabilizer product states and haar-random states. We can see from \Cref{fig:learn-arb-alg-haar} that a lower QPSQ tolerance results in a lower prediction error for the same number of queries. We also see that the algorithm achieves similar performance when predicting the outcome on computational basis and stabilizer product states, even though the uniform distribution over the computational basis states is not locally flat and thus outside the performance guarantee. On the other hand, the distribution over haar-random states is within the guarantee, and the algorithm performs best on this distribution.
\begin{figure}
    \vspace{-1em}
     \centering
     \begin{subfigure}[b]{0.42\textwidth}
         \centering
         \includegraphics[width=\textwidth]{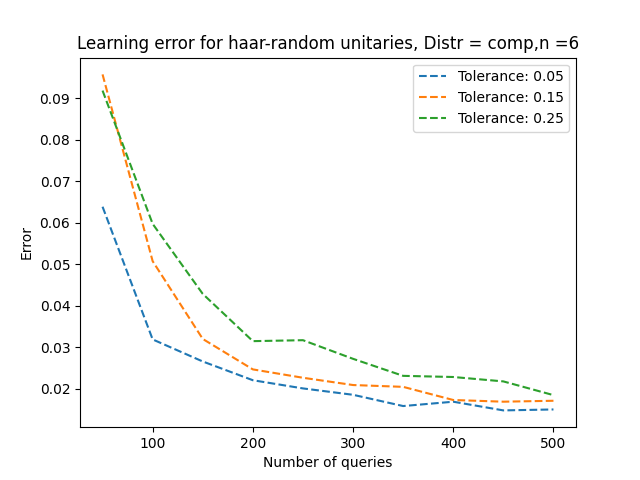}
         \caption{Computational basis states}
         \label{fig:haar-error-classical}
     \end{subfigure}
     \hfill
     \begin{subfigure}[b]{0.42\textwidth}
         \centering
         \includegraphics[width=\textwidth]{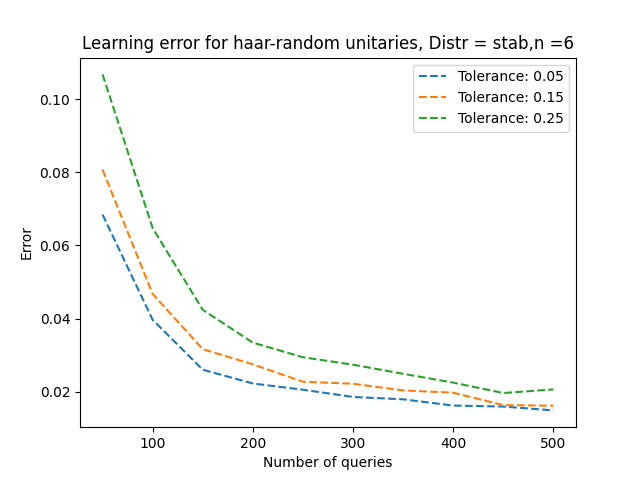}
         \caption{Stabilizer states}
         \label{fig:haar-error-stab}
     \end{subfigure}
     \hfill
     \begin{subfigure}[b]{0.42\textwidth}
         \centering
         \includegraphics[width=\textwidth]{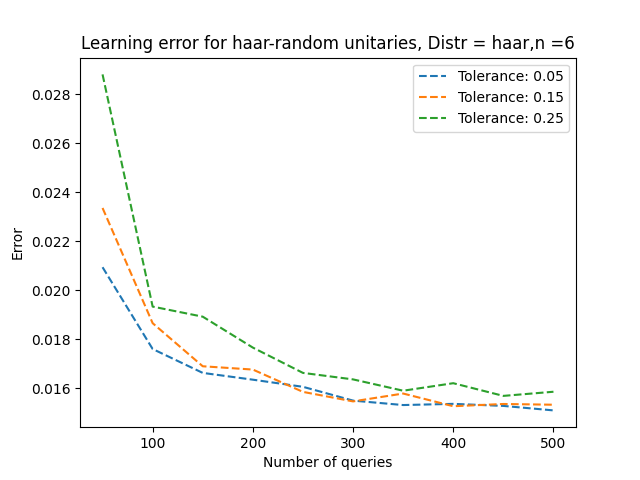}
         \caption{Haar-random states}
         \label{fig:haar-error-haar}
     \end{subfigure}
        \caption{Average performance of the learning algorithm on 10 haar-random 6-qubit unitaries, in predicting the outcome of $Z_1$ on three target distributions}
        \label{fig:learn-arb-alg-haar}
    \vspace{-1em}
\end{figure}
\section{Lower Bounds for Diamond Distance Learning}\label{sec:diamond}

In this section, rather than predicting properties of quantum processes, we consider the harder problem of learning them within diamond distance from $\mathsf{QPStat}$ queries. Here, we provide average-case query complexity lower bounds for learning unitary $2$-designs and Haar-random unitaries. We start by stating our lower bound for average-case learning exact $2$-designs (\Cref{def:designs}).
\begin{theorem}[Lower bound for learning exact unitary 2-designs]
    \label{thm:hardness-exact-design}
     Let $0 < \tau \leq \epsilon, \epsilon+2\tau < 1 - 1/2^n,$ and $ \mu_H^{(2)}$  be an exact unitary 2-design over $n$-qubits. Let there be an algorithm that, given access to $\mathsf{QPStat}_U$ for some $U \sim \mu_H^{(2)}$, outputs a quantum channel $\Phi$ with $d_\diamond(\mc{U},\Phi) \leq \epsilon$ with probability $\alpha$ over its internal randomness and probability $\beta$ over $U \sim \mu_H^{(2)}$. Then, such an algorithm must make $q$ queries of tolerance $\tau$ to $\mathsf{QPStat}_U$, where
    \begin{equation}
        q+1 \geq (2\alpha - 1)\beta \tau^2 (2^n+1).
    \end{equation}
\end{theorem}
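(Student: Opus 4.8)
The plan is to follow the Feldman-style statistical-query strategy advertised in the introduction and reduce diamond-distance learning to a \emph{many-vs-one} distinguishing task whose ``one'' side is the maximally depolarizing channel $\phidep$. This reference is natural because, by \Cref{lem:haar-moments}, the first moment of the exact $2$-design is exactly $\phidep$: for every $\rho$ and every $O$, $\Ex_{U\sim\mu_H^{(2)}}[\tr(O U\rho U^\dagger)] = \tr(O)/2^n = \tr(O\phidep(\rho))$. I would show that a learner returning $\Phi$ with $d_\diamond(\mc U,\Phi)\le\epsilon$, together with one extra query, distinguishes ``$\mc E=\mc U$ for $U\sim\mu_H^{(2)}$'' from ``$\mc E=\phidep$'', and then bound the advantage of any $(q+1)$-query distinguisher.

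The verification query is the heart of the reduction. Because \textsf{QPStat} grants only single-copy, ancilla-free access, I cannot probe $d_\diamond$ directly; instead I use the returned classical description of $\Phi$ to locate a query that separates $\Phi$ from $\phidep$ under the single-copy trace distance $d_{sc}(\mc E_1,\mc E_2):=\max_{\rho}\tfrac12\norm{\mc E_1(\rho)-\mc E_2(\rho)}_1 \le d_\diamond(\mc E_1,\mc E_2)$. A short computation on a pure input shows the key geometric fact $d_{sc}(\mc U,\phidep)=1-1/2^n$, i.e. a unitary is \emph{maximally} far from $\phidep$ already under unentangled measurements. Combining this with $d_{sc}(\mc U,\Phi)\le d_\diamond(\mc U,\Phi)\le\epsilon$ and the triangle inequality gives $d_{sc}(\Phi,\phidep)\ge 1-1/2^n-\epsilon$, so there is a pair $(\rho^*,O^*)$ on which $|\tr(O^*\Phi(\rho^*))-\tr(O^*)/2^n|$ is large while $\tr(O^*\mc U(\rho^*))$ stays close to $\tr(O^*\Phi(\rho^*))$. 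The distinguisher then queries $(\rho^*,O^*)$ and accepts iff the response is far from the exactly computable value $\tr(O^*)/2^n$. The hypothesis $\epsilon+2\tau<1-1/2^n$ is exactly the slack that places the ``unitary'' response and the ``depolarizing'' response (each known only up to the tolerance $\tau$) on opposite sides of the threshold, yielding distinguishing advantage at least $(2\alpha-1)\beta$.

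For the lower bound on the advantage I would use the standard fake-oracle hybrid: run the distinguisher against the oracle that always returns the reference value $\tr(O)/2^n$, which is precisely $\mathsf{QPStat}_{\phidep}$. For a fixed transcript this is a legitimate $\mathsf{QPStat}_U$ oracle unless some issued query $(\rho,O)$ is \emph{distinguishing}, i.e. $|\tr(O U\rho U^\dagger)-\tr(O)/2^n|>\tau$. Using the exact $2$-design second moment from \Cref{lem:haar-moments}, the governing variance evaluates cleanly:
\begin{equation}
    \Var_{U\sim\mu_H^{(2)}}\!\big[\tr(O U\rho U^\dagger)\big]=\frac{\big(\tr(\rho^2)-2^{-n}\big)\big(\tr(O^2)-2^{-n}\tr(O)^2\big)}{2^{2n}-1}\le\frac{1}{2^n+1},
\end{equation}
so Chebyshev's inequality bounds the probability that a single fixed query is distinguishing by $1/\big(\tau^2(2^n+1)\big)$. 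A union bound over the $q+1$ queries shows the fake and honest executions differ with probability at most $(q+1)/\big(\tau^2(2^n+1)\big)$, which therefore also upper-bounds the distinguishing advantage. Chaining $(2\alpha-1)\beta\le (q+1)/\big(\tau^2(2^n+1)\big)$ gives the claimed $q+1\ge(2\alpha-1)\beta\,\tau^2(2^n+1)$.

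I expect the main obstacle to be the reduction bookkeeping rather than any single estimate. One must handle adaptivity and the adversarial nature of the statistical-query oracle (the fake oracle is a valid $\mathsf{QPStat}_U$ only as long as no distinguishing query is issued), justify manufacturing the verification query from the output channel $\Phi$ despite the absence of ancilla-assisted access, and carefully track how the bound $d_\diamond(\mc U,\Phi)\le\epsilon$ degrades into single-copy value gaps through the triangle-inequality chain so that the threshold condition $\epsilon+2\tau<1-1/2^n$ is exactly met. By contrast, the variance estimate is a direct application of the exact $2$-design moment formula, and I expect it to be the routine part of the argument.
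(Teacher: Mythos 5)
Your proposal follows the same route as the paper: reduce average-case learning to a many-vs-one distinguishing task against $\phidep$ using one extra verification query (\Cref{lem:worst-case-learning-to-distinguishing,lem:hardness-learning-avg}), bound the advantage of any $(q+1)$-query distinguisher via the fake-oracle argument (\Cref{lem:hardness-distinguishing}), and control the probability of a distinguishing query by Chebyshev with the exact $2$-design variance bound $1/(2^n+1)$ (\Cref{lem:haar-var}); your closed-form variance expression is correct and in fact sharper than the paper's inequality chain. The one place where your construction falls short of the stated constants is the verification query. Your triangle-inequality chain through single-copy trace distances, $d_{sc}(\Phi,\phidep)\ge d_{sc}(\mc U,\phidep)-d_{sc}(\mc U,\Phi)\ge 1-1/2^n-\epsilon$, followed by transferring the value gap from $\Phi$ back to $\mc U$, costs you $\epsilon$ twice: the separation you can guarantee between the ``unitary'' and ``depolarizing'' responses is only $1-1/2^n-2\epsilon$, so your distinguisher needs $2\epsilon+2\tau<1-1/2^n$ rather than the hypothesis $\epsilon+2\tau<1-1/2^n$, and the argument fails for $\epsilon$ in the upper part of the allowed range. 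The paper avoids this loss by fixing the query to $\rho=\ketbra{0}{0}$ and $O=\Phi(\ketbra{0}{0})$ and invoking the fidelity bound of \Cref{def:dtr-infidelity}: since $U\ketbra{0}{0}U^\dagger$ is pure, $\tr\bigl(O\,U\ketbra{0}{0}U^\dagger\bigr)=F\bigl(\Phi(\ketbra{0}{0}),U\ketbra{0}{0}U^\dagger\bigr)\ge 1-\dtr\ge 1-d_\diamond(\Phi,\mc U)\ge 1-\epsilon$, while the response under $\phidep$ is $\tr(\Phi(\ketbra{0}{0}))/2^n=1/2^n$, so $\epsilon$ is paid only once and the threshold condition is met exactly. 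This is a constant-factor issue that does not affect the asymptotic content $q=\Omega(\tau^2 2^n)$, but you would need the fidelity trick (or an equivalent single-loss argument) to prove the theorem as stated; you correctly identified this bookkeeping as the delicate step.
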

\begin{remark}
    As random Clifford circuits are known to form unitary 3-designs \cite{webb2015clifford,zhu2017multiqubit}, \Cref{thm:hardness-exact-design} implies an exponential lower bound for learning random Cliffords from $\mathsf{QPStat}$ queries. Meanwhile, Clifford circuits can be learned efficiently using black-box access \cite{low2009learning,lai2022learning}. This gives us an exponential separation between QPSQ learners and those in the standard setting.
\end{remark}
 We now state our lower bound for learning additive approximate unitary $2$-designs (\Cref{def:designs}).
\begin{theorem}[Lower bound for learning approximate unitary 2-designs]
    \label{thm:hardness-approx-design}
    Let $0 < \tau \leq \epsilon < 1, \epsilon+2\tau < 1 - 1/2^n, 0 \leq \delta$, and $ \mu^{(2,\delta)}_H$ be a $\delta$-approximate unitary $2$-design and exact unitary $1$-design over $n$-qubits. Let there be an algorithm that, given access to $\mathsf{QPStat}_U$ for some $U \sim \mu^{(2,\delta)}_H$, outputs a quantum channel $\Phi$ with $d_\diamond(\mc{U},\Phi) \leq \epsilon$ with probability $\alpha$ over its internal randomness and probability $\beta$ over $U \sim \mu^{(2,\delta)}_H$. Then, such an algorithm must make $q$ queries of tolerance $\tau$ to $\mathsf{QPStat}_U$, where
    \begin{equation}
        q+1 \geq (2\alpha - 1)\beta \tau^2 \Omega( \min\{2^n, 1/\delta\}).
    \end{equation}
\end{theorem}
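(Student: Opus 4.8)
The plan is to follow the proof of \Cref{thm:hardness-exact-design} essentially verbatim, since the only place the exact $2$-design property enters is in controlling the spread of the query responses around their mean; for an approximate design this spread picks up an additive $O(\delta)$ term, which is precisely what degrades the bound from $2^n+1$ to $\Omega(\min\{2^n,1/\delta\})$. Concretely, I would set up the same many-vs-one distinguishing task between the unitary channel $\mc{U}$ with $U\sim\mu_H^{(2,\delta)}$ and the reference channel $\phidep$. The core is the standard statistical-query replacement step: an adversarial \textsf{QPStat} oracle answers every query $(\rho,O)$ with the reference value $\tr(O\phidep(\rho))=\tr(O)/2^n$, which is a legal response whenever $|\tr(OU\rho U^\dag)-\tr(O)/2^n|\le\tau$. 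Fixing the learner's internal randomness, the run against this adversarial oracle then produces an output channel $\Phi$ identical to the run against $\phidep$ unless one of the $q$ (now fixed) adaptively chosen queries deviates by more than $\tau$. Since $d_\diamond(\mc{U},\phidep)\ge 1-1/2^n$ and the hypothesis $\epsilon+2\tau<1-1/2^n$ forces any $\epsilon$-good hypothesis for $\mc{U}$ to be far from $\phidep$, success on $\mc{U}$ requires such a deviation, which is what the query budget must pay for.

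The single quantity I must re-examine is the per-query variance $V=\max_{\rho,\,\norm{O}_\infty\le1}\Var_{U\sim\mu_H^{(2,\delta)}}[\tr(OU\rho U^\dag)]$. Because $\mu_H^{(2,\delta)}$ is an \emph{exact} $1$-design, the mean is exactly the reference value, $\Ex_{U}[\tr(OU\rho U^\dag)]=\tr(O)/2^n$, so Chebyshev's inequality applies with no bias term and the failure probability of a single query is at most $V/\tau^2$. For the second moment I would write
\begin{equation}
    \Ex_{U\sim\mu_H^{(2,\delta)}}\left[\tr(OU\rho U^\dag)^2\right]=\tr\!\big((O\otimes O)\,\mc{M}_{\mu_H^{(2,\delta)}}^{(2)}(\rho\otimes\rho)\big),
\end{equation}
and compare against the Haar value term by term. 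Using the approximate-design guarantee $\norm{\mc{M}_{\mu_H^{(2,\delta)}}^{(2)}-\mc{M}_{\mu_H}^{(2)}}_\diamond\le\delta$ together with H\"older's inequality, and noting that $\rho\otimes\rho\in\mc{S}_{2^{2n}}$ is a valid input with $\norm{O\otimes O}_\infty\le1$, the scalar second moments differ by at most $\delta$. Evaluating the Haar second moment with \Cref{lem:haar-moments} (the worst case being a pure input and a traceless observable) gives $\Var_{\mu_H}[\tr(OU\rho U^\dag)]\le 1/(2^n+1)$, so that $V\le 1/(2^n+1)+\delta$.

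Combining, $1/V\ge \big(1/(2^n+1)+\delta\big)^{-1}=\Omega(\min\{2^n,1/\delta\})$, since the right-hand side is $\Omega(2^n)$ when $\delta\le 2^{-n}$ and $\Omega(1/\delta)$ otherwise. Feeding this variance bound into the counting/distinguishing argument of \Cref{thm:hardness-exact-design}, where the measure $q\,V/\tau^2$ of ``bad'' unitaries is balanced against the $(2\alpha-1)\beta$ distinguishing advantage the learner must exhibit, yields exactly
\begin{equation}
    q+1\ge(2\alpha-1)\beta\,\tau^2\,\Omega(\min\{2^n,1/\delta\}).
\end{equation}

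I expect the main obstacle to be the variance step, specifically making the passage from the diamond-norm design guarantee to a bound on the scalar deviation $|\Ex_\nu[\tr(OU\rho U^\dag)^2]-\Ex_{\mu_H}[\tr(OU\rho U^\dag)^2]|$ fully rigorous: the diamond norm is defined with an ancilla and a maximization over $\mc{S}_{2^{2n}}$, so one must check that it upper-bounds the relevant induced trace-norm distance on the unentangled input $\rho\otimes\rho$, after which H\"older closes the gap. A secondary point requiring care is that the exact $1$-design hypothesis is genuinely used and not merely convenient: it is what eliminates any bias between the mean response and the reference value $\tr(O)/2^n$, so that the entire approximation error is confined to the variance and cleanly produces the $1/\delta$ term, rather than leaking into the separation condition $\epsilon+2\tau<1-1/2^n$.
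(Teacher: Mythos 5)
Your proposal is correct and follows essentially the same route as the paper: reuse the many-vs-one distinguishing reduction and the learning-to-distinguishing lemma from the exact-design case, and supply the one new ingredient, namely a bound on $\Var_{U\sim\mu_H^{(2,\delta)}}[\tr(OU\rho U^\dag)]$ obtained by comparing the second moment to its Haar value via the additive diamond-norm design guarantee, with the exact $1$-design hypothesis ensuring the mean equals $\tr(O\phidep(\rho))$ so Chebyshev applies without a bias term. The only (immaterial) difference is that you exploit the exact $1$-design assumption to get $V\le 1/(2^n+1)+\delta$, whereas the paper's Lemma~\ref{lem:approx-variance} bounds both the second-moment and squared-mean discrepancies by the design property and obtains $1/(2^n+1)+3\delta$; both give the same $\Omega(\min\{2^n,1/\delta\})$ conclusion.
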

\begin{remark}
    For our lower bound on approximate $2$-designs, \Cref{thm:hardness-approx-design} additionally assumes that the measure is an exact $1$-design. This assumption is satisfied in practice by brickwork random quantum circuits, which form exact $1$-designs at any circuit depth and approximate $2$-designs at depths $\mc{O}(n + log(1/\delta))$ \cite{Haferkamp_2021}. This restriction can also be lifted in some cases. One can see that the proof of \Cref{thm:hardness-approx-design} also holds when the measure is a $\delta^\prime$-approximate $1$-design with $\delta^\prime < \tau/c$ for some absolute constant $c > 1$, but we do not explicitly discuss this case here. 
\end{remark}
 \Cref{thm:hardness-exact-design} already implies an exponential lower bound for learning Haar-random unitaries. However, using stronger concentration properties of the Haar measure, we show a doubly exponential lower bound in the following theorem.
\begin{theorem}[Lower bound for learning Haar-random unitaries]
    \label{thm:hardness-haar-measure}
     Let $0 < \tau \leq \epsilon < 1, \epsilon+2\tau < 1 - 1/2^n,$ and $ \mu_H$  be the unitary Haar-measure over $n$-qubits. Let there be an algorithm that, given access to $\mathsf{QPStat}_U$ for some $U \sim \mu_H$, outputs a quantum channel $\Phi$ with $d_\diamond(\mc{U},\Phi) \leq \epsilon$ with probability $\alpha$ over its internal randomness and probability $\beta$ over $U \sim \mu_H$. Then, such an algorithm must make $q$ queries of tolerance $\tau$ to $\mathsf{QPStat}_U$, where
    \begin{equation}
        q + 1 \geq (\alpha - 1/2)\beta \exp \left(\frac{2^n\tau^2}{48}\right). 
    \end{equation}
\end{theorem}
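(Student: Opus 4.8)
The plan is to follow the same many-vs-one distinguishing reduction used for \Cref{thm:hardness-exact-design}, replacing its second-moment estimate by a concentration-of-measure bound. The starting point is that $\phidep$ is the Haar-average channel: by the first-moment formula in \Cref{lem:haar-moments}, for every query $(\rho,O)$ we have $\Ex_{U\sim\mu_H}[\tr(O U\rho U^\dag)] = \tr(O)/2^n = \tr(O\phidep(\rho))$. Hence an oracle that always returns the fixed value $\tr(O)/2^n$ faithfully simulates $\mathsf{QPStat}_{\phidep}$, and it is moreover a legal response of $\mathsf{QPStat}_U$ on the query $(\rho,O,\tau)$ unless $|\tr(OU\rho U^\dag)-\tr(O)/2^n|>\tau$; I will say such a query \emph{detects} $U$. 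The whole argument reduces to showing that each query detects only a doubly-exponentially small fraction of Haar unitaries.

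First I would fix the learner's internal randomness $r$ and run it against this simulated oracle. Because the simulated responses are deterministic, the (possibly adaptive) query sequence $(\rho_i,O_i)_{i=1}^q$ and the output channel $\Phi_r$ depend on $r$ alone — this is how adaptivity is disposed of. If no query detects $U$, the simulated transcript is a legal $\mathsf{QPStat}_U$ transcript, so the learner must also output $\Phi_r$ when the true channel is $\mc{U}$, and success then forces $d_\diamond(\mc{U},\Phi_r)\le\epsilon$. Writing $D_r$ for the detected set, $p_i:=\Pr_{U\sim\mu_H}[\,|\tr(O_iU\rho_iU^\dag)-\tr(O_i)/2^n|>\tau\,]$, and $\gamma_0:=\max_\Phi \Pr_{U\sim\mu_H}[d_\diamond(\mc{U},\Phi)\le\epsilon]$, a union bound over the $q$ queries gives, for each $r$, that the $U$-probability of success is at most $\Pr_U[U\in D_r]+\gamma_0\le q\max_i p_i+\gamma_0$. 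Averaging over $r$ and integrating over the measure-$\beta$ set of ``good'' unitaries (those on which success probability is at least $\alpha$) relates the learner's advantage over the oracle-independent guess $\Phi_r$ to $q\max_i p_i+\gamma_0$; the theorem then follows once $p_i$ and $\gamma_0$ are shown to be dominated by $\exp(-2^n\tau^2/48)$.

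The crux — and the only place the argument departs from the 2-design proof, which bounded $p_i$ by the variance via Chebyshev — is bounding $p_i$ by concentration on the unitary group. The map $U\mapsto\tr(O_iU\rho_iU^\dag)$ is Lipschitz on $\mc{U}_N$ (with $N=2^n$) in the Hilbert--Schmidt metric with constant at most $2\|O_i\|_\infty\le 2$, uniformly in $\rho_i$, and has mean $\tr(O_i)/2^n$. Levy's lemma for Lipschitz functions on $U(N)$ then yields $p_i\le 2\exp(-2^n\tau^2/48)$, where $48$ is the squared Lipschitz constant times the curvature constant in the unitary concentration inequality. Crucially this holds for arbitrary (even mixed) $\rho_i$ and arbitrary $O_i$, so no restriction on the queries is needed, and the appearance of the \emph{dimension} $2^n$ in the exponent is exactly what upgrades the exponential 2-design bound to a doubly-exponential one.

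For $\gamma_0$ I would fix a pure test state $\ket\psi$ and apply \Cref{ref:eq-dtr-infidel}: $d_\diamond(\mc{U},\Phi)\ge \dtr(U\ketbra{\psi}{\psi}U^\dag,\Phi(\ketbra{\psi}{\psi}))\ge 1-\bra\psi U^\dag\,\Phi(\ketbra{\psi}{\psi})\,U\ket\psi$, so $d_\diamond\le\epsilon$ forces $\bra\psi U^\dag\,\Phi(\ketbra{\psi}{\psi})\,U\ket\psi\ge 1-\epsilon$. This is again a Lipschitz function of $U$ with mean $1/2^n$, and the hypothesis $\epsilon+2\tau<1-1/2^n$ places the threshold $1-\epsilon$ a margin exceeding $2\tau$ above the mean, so the same concentration inequality gives $\gamma_0\le\exp(-2^n(2\tau)^2/48)\le 2\exp(-2^n\tau^2/48)$; thus $\gamma_0$ is dominated by a single $p_i$. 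Substituting $q\max_i p_i+\gamma_0\le(q+1)\,2\exp(-2^n\tau^2/48)$ into the advantage bound and carrying out the standard statistical-query bookkeeping (the factor $\alpha-1/2$ being the advantage over the oracle-independent guess, $\beta$ the measure of good unitaries) rearranges to $q+1\ge(\alpha-1/2)\beta\exp(2^n\tau^2/48)$. I expect the main obstacle to be precisely the concentration step: pinning down the Lipschitz constant and dimension dependence so that the exponent scales as $2^n\tau^2$ with the stated constant, and verifying that the $\epsilon+2\tau<1-1/2^n$ margin is exactly what keeps $\gamma_0$ harmless.
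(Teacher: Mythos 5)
Your proposal is correct and follows essentially the same route as the paper: reduce average-case learning to distinguishing $U$ from the depolarizing channel $\phidep$, and bound the probability that any single query $(\rho,O)$ separates them by concentration of the $2$-Lipschitz function $U\mapsto\tr(OU\rho U^\dag)$ on $U(2^n)$, which yields exactly the $2\exp(-2^n\tau^2/48)$ tail and hence the doubly exponential bound. The only cosmetic difference is that the paper turns the learner into a distinguisher with one explicit extra query on $(\ketbra{0}{0},\Phi(\ketbra{0}{0}))$ (its learning-to-distinguishing lemma), whereas you absorb that step into the probabilistic term $\gamma_0=\max_\Phi \Pr_U[d_\diamond(\mc{U},\Phi)\le\epsilon]$ bounded by the same concentration inequality; both invoke the hypothesis $\epsilon+2\tau<1-1/2^n$ for the same purpose and both account for the $+1$ in $q+1$.
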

To prove these lower bounds, we first describe a many-vs-one distinguishing task and lower bound its $\mathsf{QPStat}$ query complexity in \Cref{sec:hardness-distinguishing}. Then, in \Cref{sec:learn-to-distinguish}, we show that for any measure over unitaries, average-case learning is at least as hard as a certain instance of the distinguishing task. Finally, we conclude the proofs of our lower bounds by computing the specific query complexities for this distinguishing task in \Cref{sec:hardness-designs,sec:hardness-haar}. Our techniques are similar to those used in other quantum statistical query lower bounds \cite{nietner2023average,arunachalam2023role,nietner2023unifying}, which are in turn inspired by lower bounds in the classical SQ setting \cite{feldman2017general}.

\subsection{Many-vs-One Query Complexity} \label{sec:hardness-distinguishing}
In this section, we will lower bound the QPSQ complexity of a many-vs-one distinguishing task for quantum channels. Specifically, consider a class of quantum channels $\mc{C}$ and another reference quantum channel $\Phi \notin C$. Then, given access to an unknown channel $\mc{E}$ through $\mathsf{QPStat}_\mc{E}$, one must determine whether $\mc{E} \in \mc{C}$, or if $\mc{E} = \Phi$, promised that one of these is the case. We provide the query complexity lower bound for this task in the following lemma.

\begin{lemma}[Many-vs-one distinguishing query complexity]
\label{lem:hardness-distinguishing}
    Let $\mc{C}$ be a class of quantum channels from $\mc{M}_{N,N}$ to $\mc{M}_{N,N}$, $\Phi \notin \mc{C}$ be another quantum channel from $\mc{M}_{N,N}$ to $\mc{M}_{N,N}$. Then, given access to $\mathsf{QPStat}_\mc{E}$, any algorithm that can distinguish, with probability $\alpha$ over its internal randomness, between $\mc{E} \in \mc{C}$ and $\mc{E} = \Phi$, for any measure $\mu$ over $\mc{C}$, must make at least
    \begin{equation}
        q \geq \frac{2\alpha - 1}{\max_{\rho, O} \mathbf{Pr}_{\mc{E}\sim \mu}\left[ | \tr(O \mc{E}(\rho)) - \tr(O \Phi(\rho)) | > \tau \right]}
    \end{equation}  
    queries of tolerance $\tau$ to $\mathsf{QPStat}_{\mc{E}}$.
\end{lemma}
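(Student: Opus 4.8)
The plan is to run the standard indistinguishability argument underlying statistical-query lower bounds, exploiting the fact that $\mathsf{QPStat}$ is an \emph{adversarial} oracle: the learner must succeed against \emph{every} response lying within tolerance $\tau$. Say the algorithm \emph{accepts} when it declares $\mc{E}\in\mc{C}$. First I would fix the distinguisher $\mc{A}$ and introduce the deterministic \emph{reference oracle} $\mc{O}_\Phi$ that answers every query $(\rho,O,\tau)$ with the exact value $\tr(O\Phi(\rho))$. Since these answers are trivially within $\tau$ of themselves, $\mc{O}_\Phi$ is a legitimate $\mathsf{QPStat}_\Phi$ oracle, so by the soundness side of the guarantee $\Pr_r[\mc{A}^{\mc{O}_\Phi}\text{ accepts}]\le 1-\alpha$, where $r$ denotes the internal randomness. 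Crucially, because $\mc{O}_\Phi$ does not depend on the true channel, for each fixed $r$ the adaptive query sequence $(\rho_1,O_1),\dots,(\rho_q,O_q)$ produced by $\mc{A}^{\mc{O}_\Phi}$ is completely determined.

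Next I would build, for each $\mc{E}\in\mc{C}$, a hybrid oracle $\mc{O}_\mc{E}$ that mimics $\mc{O}_\Phi$ whenever this is safe and deviates only when forced: on a query $(\rho,O)$ it returns $\tr(O\Phi(\rho))$ if $|\tr(O\Phi(\rho))-\tr(O\mc{E}(\rho))|\le\tau$, and returns the true value $\tr(O\mc{E}(\rho))$ otherwise. By construction $\mc{O}_\mc{E}$ is always within $\tau$ of $\tr(O\mc{E}(\rho))$, hence a valid $\mathsf{QPStat}_\mc{E}$ oracle, so completeness gives $\Pr_r[\mc{A}^{\mc{O}_\mc{E}}\text{ accepts}]\ge\alpha$ for each $\mc{E}\in\mc{C}$, and averaging over $\mc{E}\sim\mu$ preserves this. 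Call a query $(\rho,O)$ \emph{bad for} $\mc{E}$ when $|\tr(O\Phi(\rho))-\tr(O\mc{E}(\rho))|>\tau$; writing $p:=\max_{\rho,O}\Pr_{\mc{E}\sim\mu}[\,|\tr(O\mc{E}(\rho))-\tr(O\Phi(\rho))|>\tau\,]$, any \emph{fixed} query is bad for a $\mu$-random $\mc{E}$ with probability at most $p$.

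I would then couple $\mc{A}^{\mc{O}_\Phi}$ and $\mc{A}^{\mc{O}_\mc{E}}$ on shared randomness $r$. The two runs agree query-by-query until the first bad query, since up to that point $\mc{O}_\mc{E}$ returns exactly the $\mc{O}_\Phi$ answer; in particular their verdicts coincide unless some bad query occurs. Fixing $r$ pins down the query list, so a union bound over the at most $q$ queries gives $\Pr_{\mc{E}\sim\mu}[\text{divergence}\mid r]\le qp$, and averaging over $r$ yields $\Pr_{\mc{E}\sim\mu,\,r}[\text{divergence}]\le qp$. Combining the three facts — completeness $\Pr[\mc{A}^{\mc{O}_\mc{E}}\text{ accepts}]\ge\alpha$, the divergence bound, and soundness $\Pr[\mc{A}^{\mc{O}_\Phi}\text{ accepts}]\le 1-\alpha$ (the latter independent of $\mc{E}$) — I obtain $\alpha\le(1-\alpha)+qp$, i.e. $qp\ge 2\alpha-1$, which rearranges to the claimed $q\ge(2\alpha-1)/p$.

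The main obstacle is the bookkeeping around adaptivity and the two independent sources of randomness, namely $\mc{E}\sim\mu$ versus the internal $r$. The argument works only because fixing $r$ and using the reference oracle $\mc{O}_\Phi$ (whose answers are independent of $\mc{E}$) pins down a deterministic list of queries, so that the per-query bad probability $p$ — which is a statement about $\mc{E}\sim\mu$ — can legitimately be applied query-by-query and then union-bounded. Getting the coupling exactly right, so that divergence is charged only to genuinely bad queries while $\mc{O}_\mc{E}$ remains a valid tolerance-$\tau$ oracle throughout, is the delicate point; the remainder is the short two-sided estimate above.
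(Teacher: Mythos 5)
Your proposal is correct and follows essentially the same route as the paper's proof: answer all queries with $\tr(O\Phi(\rho))$, note that the run is then consistent with $\mc{E}=\Phi$ unless a ``distinguishing'' (bad) query occurs, derive $2\alpha-1\le p_d\le qp$ via a union bound. Your explicit hybrid-oracle/coupling formalization just makes rigorous what the paper leaves implicit, namely that fixing the internal randomness and the $\Phi$-responses pins down a query sequence independent of $\mc{E}$, so the per-query bound $p$ can legitimately be union-bounded over adaptive queries.
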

\begin{proof}
   Let $\mc{A}$ be the algorithm that distinguishes between $\mc{C}$ and $\Phi$. Suppose $\mc{A}$ makes $q$ $\mathsf{QPStat}$ queries $\{(\rho_i, O_i)\}_{i \in [q]}$ based on its internal randomness and the previous responses. Let $p_{d}$ be the probability that $\mc{A}$ makes a distinguishing query, i.e.
   \begin{equation}
       p_d = \mathbf{Pr}_{\mc{A}, \mc{E} \sim \mu}\left[ \exists i \in [q] : |\tr(O_i\mc{E}(\rho_i)) - \tr(O_i\Phi(\rho_i))| > \tau \right]. 
   \end{equation}
   Now, suppose $\mc{A}$ receives $\tr(O_i\Phi(\rho_i))$ as responses for all $q$ queries. When $\mc{E} \in \mc{C}$, this happens with probability at most $1-p_d$. In this case, as the queries are consistent with $\mc{E} = \Phi$, by the correctness of $\mc{A}$, we obtain ``$\mc{E} \in \mc{C}$'' with probability at most $1-\alpha$. Thus, by the correctness of $\mc{A}$, we have
   \begin{equation}
       \alpha \leq p_d \cdot 1 + (1-p_d)(1-\alpha) \leq 1 - \alpha +p_d,
   \end{equation}
   where the second inequality uses that $\alpha \leq 1$.
   Rearranging the terms, we obtain
   \begin{equation}
   \label{eq:pdist-lower}
       p_d \geq 2\alpha -1.
   \end{equation}
   We can now upper bound $p_d$ using the union bound.
   \begin{align}
       p_d &= \mathbf{Pr}_{\mc{A}, \mc{E} \sim \mu}\left[ \exists i \in [q] : |\tr(O_i\mc{E}(\rho_i)) - \tr(O_i\Phi(\rho_i))| > \tau \right]
       \\& \leq \sum_{i \in [q]} \mathbf{Pr}_{\mc{A}, \mc{E} \sim \mu}\left[|\tr(O_i\mc{E}(\rho_i)) - \tr(O_i\Phi(\rho_i))| > \tau \right]
       \\&\leq q \max_{\rho, O} \mathbf{Pr}_{\mc{E} \sim \mu}\left[|\tr(O\mc{E}(\rho)) - \tr(O\Phi(\rho))| > \tau \right].
   \end{align}
   Together with \Cref{eq:pdist-lower}, we obtain the desired result.
\end{proof}

\subsection{Average-Case Learning Query Complexity} \label{sec:learn-to-distinguish}
In this section, we provide an average-case query complexity lower bound for learning unitaries from some measure. In \Cref{lem:worst-case-learning-to-distinguishing}, we show that learning a class of unitaries is as hard as a certain instance of the many-vs-one distinguishing task of \Cref{sec:hardness-distinguishing}. We then combine \Cref{lem:hardness-distinguishing,lem:worst-case-learning-to-distinguishing} to obtain our average-case learning lower bound in \Cref{lem:hardness-learning-avg}.

Given an algorithm for learning classes of quantum states and distributions from QSQs, \cite{nietner2023average,arunachalam2023role,nietner2023unifying} show that a single additional query suffices to distinguish this class from any reference object that is sufficiently far. However, such a result does not hold in general when using $\mathsf{QPStat}$ queries for distinguishing channels far in diamond distance. This is due to the fact that the optimal distinguishing state may need to be prepared over an additional ancillary register, and such a query cannot be made with our chosen definition of the $\mathsf{QPStat}$ oracle.

While we are unable to show a general reduction from distinguishing to learning with a single query, we will look at the specific case when the class of channels consists only of unitary channels and the fixed reference object is the depolarizing channel. In particular, we show that learning a class of unitaries from QPSQs is as hard as distinguishing it from the depolarizing channel. 

\begin{lemma}[Learning unitaries is as hard as distinguishing them from $\phidep$.]
\label{lem:worst-case-learning-to-distinguishing}
    Let $0 < \tau \leq \epsilon < 1$, $\epsilon+2\tau < 1 - \frac{1}{2^n}$. Let $\mc{C} \subseteq \mc{U}_{2^n}$ be a class of unitary channels. Let $\mc{A}$ be a learning algorithm that, given access to $\mathsf{QPStat}_U$ for some $U \in \mc{C}$, and with probability $\alpha$ over its internal randomness, outputs a channel $\Phi$ such that $d_\diamond(\mc{U}, \Phi) \leq \epsilon$, using $q_L$ queries of tolerance $\tau$ to $\mathsf{QPStat}_U$. Then, 
    \begin{equation}
        q_L + 1 \geq q_D,
    \end{equation}
    where $q_D$ is the number of $\mathsf{QPStat}$ queries of tolerance $\tau$ necessary to distinguish $\mc{C}$ from the maximally depolarizing channel $\phidep$ with probability $\alpha$.
\end{lemma}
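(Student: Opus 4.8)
The plan is to prove the reduction directly: assuming the learner $\mc{A}$, I would construct a distinguisher that separates $\mc{C}$ from $\phidep$ using just one query more than $\mc{A}$. First I would run $\mc{A}$ on the unknown channel $\mc{E}$ (promised to be either $\mc{U}$ for some $U \in \mc{C}$, or $\phidep$), forwarding each of its $q_L$ queries verbatim to $\mathsf{QPStat}_\mc{E}$. This yields a description of a channel $\Phi$ which, in the case $\mc{E} = \mc{U}$, satisfies $d_\diamond(\mc{U},\Phi) \le \epsilon$ with probability at least $\alpha$ over $\mc{A}$'s randomness. The entire problem then reduces to designing a single additional query $(\rho, O)$ whose response reveals whether the true channel agrees with the learned $\Phi$ (the unitary case) or is maximally depolarizing.

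The key step, and the point at which the ancilla obstruction flagged just before the lemma is circumvented, is the choice of this last query. I would take $\rho = \ket{\psi}\bra{\psi}$ to be any fixed pure state and, crucially, set the observable to the learned output state itself, $O = \Phi(\rho)$, which is a valid density matrix and hence satisfies $\|O\|_\infty \le 1$ with \emph{no} ancillary register. In the depolarizing branch the response concentrates around $\tr(O\,\phidep(\rho)) = \tr(\Phi(\rho))/2^n = 1/2^n$, independently of whatever $\Phi$ happens to be. In the unitary branch, $\mc{U}(\rho) = \ket{U\psi}\bra{U\psi}$ is pure, so the trace-distance/fidelity relation of \Cref{def:dtr-infidelity} applied to $d_\diamond(\mc{U},\Phi)\le\epsilon$ gives $\tr(O\,\mc{U}(\rho)) = \bra{U\psi}\Phi(\rho)\ket{U\psi} = F(\mc{U}(\rho),\Phi(\rho)) \ge 1-\epsilon$. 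Using $O = \Phi(\rho)$ rather than, say, the top eigenvector of $\Phi(\rho)$ is exactly what avoids a second factor of $\epsilon$ and produces the precise hypothesis $\epsilon + 2\tau < 1 - 1/2^n$.

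Finally I would absorb the tolerance $\tau$ on the response and fix a threshold $T$ with $1/2^n + \tau < T < 1-\epsilon-\tau$, which exists precisely under the assumed condition. The distinguisher declares $\mc{E}\in\mc{C}$ when the response exceeds $T$ and $\mc{E} = \phidep$ otherwise: in the unitary case the response is at least $1-\epsilon-\tau > T$ whenever $\mc{A}$ succeeded (probability $\ge\alpha$), while in the depolarizing case it is at most $1/2^n + \tau < T$ with certainty. Hence the procedure distinguishes with success probability at least $\alpha$ using $q_L + 1$ queries, forcing $q_D \le q_L + 1$. I expect the main obstacle to be not the calculation but the \emph{design} of this last query: one must produce a single-register, ancilla-free witness separating $\Phi$ from $\phidep$, and the device of querying the observable $\Phi(\rho)$ is exactly what supplies a tight such witness; the only remaining care is the bookkeeping of the two success probabilities (exact in the $\phidep$ branch, at least $\alpha$ in the unitary branch).
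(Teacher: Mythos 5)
Your proposal is correct and follows essentially the same route as the paper's proof: run the learner, then make one extra query with a pure input state and the learned output state $\Phi(\rho)$ as the observable, lower-bounding the response by $1-\epsilon-\tau$ in the unitary case via the fidelity--trace-distance inequality and upper-bounding it by $1/2^n+\tau$ in the depolarizing case, with $\epsilon+2\tau<1-1/2^n$ separating the two. The only detail the paper adds is an explicit check that $\mc{A}$'s output is a valid channel (declaring ``$\phidep$'' otherwise), which you would need so that $\tr(\Phi(\rho))=1$ holds in the depolarizing branch even though $\mc{A}$'s behaviour on $\phidep$ is undefined.
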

\begin{proof}
    We will use $\mc{A}$ as a subroutine and construct a distinguisher making at most one additional query. We first run $\mc{A}$ on the unknown channel $\mc{E} \in \mc{C} \cup \{\phidep\}$. If the output of the algorithm is not a valid quantum channel, we output ``$\mc{E} = \phidep$''. Otherwise, let the output be a quantum channel $\Phi$. Now, we check if $\min_{\mc{E} \in \mc{C}} d_\diamond(\mc{E}, \Phi) \leq \epsilon$, and output ``$\mc{E} = \phidep$'' if not. While this step may be computationally expensive, it does not incur any additional queries.
    
    Now, the output of $\mc{A}$ will be some quantum channel $\Phi$. Then, we query $\mathsf{QPStat}_{\mc{E}}$ with $\rho = \ketbra{0}{0}$ and $O = \Phi( \ketbra{0}{0})$. Let $v \gets \mathsf{QPStat}_{\mc{E}} (\rho, O, \tau)$. If $v \geq 1 - \epsilon - \tau$, output ``$\mc{E} \in \mc{C}$''. Otherwise, output ``$\mc{E} = \phidep$''.
    
    To prove the correctness of this algorithm in distinguishing $\mc{C}$ from $\phidep$, we will condition on the success of $\mc{A}$, which occurs with probability $\alpha$. Now, when $\mc{E} \in \mc{C}$, let $U$ be the unitary corresponding to $\mc{E}$. By the correctness of $\mc{A}$, we obtain a quantum channel $\Phi$ with $d_\diamond(\mc{E},\Phi) \leq \epsilon$, so we proceed to the next step and make the $\mathsf{QPStat}$ query.  We can now lower bound $v$ as follows:
    \begin{align}
        v &\geq \tr(OU\rho U^\dagger) - \tau
        \\& = \tr(\Phi(\ketbra{0}{0}) U \ketbra{0}{0} U^\dagger) - \tau
        \\&= F(\Phi(\ketbra{0}{0}), U \ketbra{0}{0} U^\dagger) - \tau
        \\&\geq 1 - d_{\mathrm{tr}}(\Phi(\ketbra{0}{0}), U \ketbra{0}{0} U^\dagger) - \tau
        \\&\geq 1 - d_\diamond(\Phi, \mc{E}) - \tau
        \\&\geq 1 - \epsilon - \tau,
    \end{align}
    where the fourth line makes use of \Cref{ref:eq-dtr-infidel}. Thus, our distinguisher will correctly output ``$\mc{E} \in \mc{C}$''.
    
    In the second case, when $\mc{E} = \phidep$, the action of $\mc{A}$ may not be well defined or $\mc{A}$'s output may be far from all channels $\mc{C}$. In this case, our distinguisher correctly outputs ``$\mc{E} = \phidep$''. On the other hand, if $\mc{A}$'s output is some channel $\Phi$ close to some channel in $\mc{C}$, we make the $\mathsf{QPStat}$ query. Then, we can upper bound $v$ as follows:
    \begin{align}
        v &\leq \tr(O \phidep(\rho)) + \tau
        \\&= \tr(\Phi(\ketbra{0}{0}) \frac{I}{2^n}) + \tau
        \\&= \frac{1}{2^n} + \tau.
    \end{align}
    By assumption, $\epsilon+2\tau < 1 - \frac{1}{2^n}$. Thus, in this case, $v < 1 - \epsilon - \tau$ and our distinguisher correctly outputs ``$\mc{E} = \phidep$''.
    
    As the distinguisher succeeds conditioned on the success of $\mc{A}$, it has an overall success probability of at least $\alpha$. Further, the distinguisher makes at most $q_L+1$ queries. This concludes the proof.
\end{proof}
Now, we extend the above result to an average-case learning lower bound for learning unitaries.
\begin{lemma}[Average-case lower bound]
    \label{lem:hardness-learning-avg}
     Let $0 < \tau \leq \epsilon,\epsilon+2\tau < 1 - \frac{1}{2^n}$, $\mathcal{C} \subseteq \mc{U}_{2^n}$ be a class of unitaries, and $\mu$ some measure over $\mathcal{C}$. Let there be a learning algorithm that, given access to $\mathsf{QPStat}_U$ for some $U \in \mc{C}$, outputs a quantum channel $\Phi$ such that $d_\diamond(\mc{U},\Phi) \leq \epsilon$ with probability $\alpha$ over its internal randomness and probability $\beta$ over the random unitary $U \sim \mu$. Then, such an algorithm must make $q$ queries of tolerance $\tau$ to $\mathsf{QPStat}_U$, where
     \begin{equation}
     \label{eq:hardness-learning-avg}
         q + 1 \geq \frac{(2\alpha - 1)\beta}{\max_{\rho, O} \mathbf{Pr}_{\mc{E}\sim \mu}\left[| \tr(O \mc{E}(\rho)) - \tr(O \phidep(\rho)) | > \tau \right]}.
     \end{equation}
\end{lemma}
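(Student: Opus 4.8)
The plan is to combine the two preceding lemmas, \Cref{lem:worst-case-learning-to-distinguishing} and \Cref{lem:hardness-distinguishing}, with a restriction-and-reweighting argument that accounts for the average over $U \sim \mu$. The statement asserts that the learner outputs a good hypothesis with probability $\alpha$ over its coins and probability $\beta$ over $U$; I read this as asserting the existence of a ``good'' set $G \subseteq \mc{C}$ with $\mu(G) \geq \beta$ such that for every $U \in G$ the learner outputs $\Phi$ with $d_\diamond(\mc{U},\Phi) \leq \epsilon$ with probability at least $\alpha$ over its internal randomness. The key idea is that, restricted to $G$, the learner behaves like a worst-case learner, so \Cref{lem:worst-case-learning-to-distinguishing} applies directly; the only remaining work is to transfer the resulting bound, which is phrased in terms of the conditional measure on $G$, back to the original measure $\mu$, and this is precisely where the factor $\beta$ appears.

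Concretely, I would first define $G$ as above together with the conditional measure $\mu' := \mu(\,\cdot \mid G)$, which is a measure supported on $G \subseteq \mc{C} \subseteq \mc{U}_{2^n}$. Since the learner succeeds with probability at least $\alpha$ for every $U \in G$, applying \Cref{lem:worst-case-learning-to-distinguishing} to the subclass $G$ yields a distinguisher between $G$ and $\phidep$ using $q+1$ queries that succeeds with probability at least $\alpha$ whenever $\mc{E} \in G$, and with probability $1$ when $\mc{E} = \phidep$ (the depolarizing channel is caught by the single extra query regardless of the learner's output). In particular, under $\mu'$ this is a valid many-vs-one distinguisher succeeding with probability at least $\alpha$, and since $\phidep$ is not unitary we have $\phidep \notin G$, so \Cref{lem:hardness-distinguishing} with reference channel $\Phi = \phidep$ and measure $\mu'$ gives
\begin{equation}
    q + 1 \geq \frac{2\alpha - 1}{\max_{\rho, O} \mathbf{Pr}_{\mc{E}\sim \mu'}\left[| \tr(O \mc{E}(\rho)) - \tr(O \phidep(\rho)) | > \tau \right]}.
\end{equation}

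It then remains to replace $\mu'$ by $\mu$. For any fixed $\rho, O$, writing $A$ for the event $\{|\tr(O\mc{E}(\rho)) - \tr(O\phidep(\rho))| > \tau\}$, the definition of the conditional measure gives $\mathbf{Pr}_{\mc{E} \sim \mu'}[A] = \mu(A \cap G)/\mu(G) \leq \mathbf{Pr}_{\mc{E} \sim \mu}[A]/\beta$, using $\mu(G) \geq \beta$. Taking the maximum over $\rho, O$ on both sides and substituting into the displayed bound (assuming $\alpha > 1/2$, as the claim is trivial otherwise since its right-hand side is then nonpositive) yields
\begin{equation}
    q + 1 \geq \frac{(2\alpha - 1)\beta}{\max_{\rho, O} \mathbf{Pr}_{\mc{E}\sim \mu}\left[| \tr(O \mc{E}(\rho)) - \tr(O \phidep(\rho)) | > \tau \right]},
\end{equation}
which is the stated bound. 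I expect the main obstacle to be the success-probability bookkeeping: one must isolate the good set $G$ \emph{before} invoking the worst-case reduction, so that the per-unitary success probability fed into \Cref{lem:worst-case-learning-to-distinguishing} is genuinely $\alpha$, and then perform the measure change carefully so that the reweighting produces the factor $(2\alpha-1)\beta$ rather than the weaker $2\alpha\beta - 1$ that a naive direct reduction working over $\mu$ itself would give.
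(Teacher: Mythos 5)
Your proposal is correct and follows essentially the same route as the paper's proof: restrict to the good subset of measure $\beta$, pass to the conditional measure, invoke the worst-case-learning-to-distinguishing reduction together with the many-vs-one bound, and transfer the probability back to $\mu$ via $\mathbf{Pr}_{\mu'}[A] \leq \mathbf{Pr}_{\mu}[A]/\beta$. The only difference is cosmetic ordering of the steps.
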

\begin{proof}
    Consider the subset $\mc{C}^\prime \subseteq \mc{C}$ of measure $\mu(\mc{C}^\prime) = \beta$ on which the learning algorithm succeeds. Let $\Tilde{\mu}$ be the measure conditioned on $\mc{C}^\prime$, i.e. $\Tilde{\mu}(U) = \mu(U | U \in \mc{C}^\prime)$. Then, for any $\rho,O$,
    \begin{align}
        \mathbf{Pr}_{U \sim \Tilde{\mu}} \left[ |\tr(OU\rho U^\dagger) - \tr(O \phidep(\rho))| > \tau\right] &= \mathbf{Pr}_{U \sim \mu} \left[ |\tr(OU\rho U^\dagger) - \tr(O \phidep(\rho))| > \tau | U \in \mc{C}^\prime \right] \\& \leq 
        \frac{\mathbf{Pr}_{U \sim \mu} \left[ |\tr(OU\rho U^\dagger) - \tr(O \phidep(\rho))| > \tau \right]}{\mathbf{Pr}_{U \sim \mu}\left[U \in \mc{C}^\prime\right]}
        \\&= \label{eq:prob-avg-case}\frac{\mathbf{Pr}_{U \sim \mu} \left[ |\tr(OU\rho U^\dagger) - \tr(O \phidep(\rho))| > \tau \right]}{\beta}.
    \end{align}
    The average-case learning algorithm with respect to $\mu$ implies a worst-case learner for the class $\mc{C}^\prime$. We can then invoke \Cref{lem:worst-case-learning-to-distinguishing} to lower bound the complexity for this task. The corresponding distinguishing task is $C^\prime$ against $\phidep$ with respect to the measure $\Tilde{\mu}$ over $C^\prime$, and its query complexity can be lower bounded using \Cref{lem:hardness-distinguishing} and \Cref{eq:prob-avg-case}. This concludes the proof.
\end{proof}
\subsection{Proofs of \Cref{thm:hardness-exact-design,thm:hardness-approx-design}} \label{sec:hardness-designs}
In this section, we detail the proofs of \Cref{thm:hardness-exact-design,thm:hardness-approx-design}. Using \Cref{lem:hardness-learning-avg}, it suffices to bound the probability in the denominator of \Cref{eq:hardness-learning-avg} for exact and approximate unitary $2$-designs. To this end, we first bound the variance of $\tr(O U \rho U^\dagger)$ for exact $2$-designs.

\begin{lemma}[Variance over $2$-designs is exponentially small]
\label{lem:haar-var}
    Let $\rho \in \mathcal{S}_N, O \in \mathcal{M}_{N,N}, \norm{O}_\infty \leq 1$ and $\mu_H^{(2)}$ be a unitary $2$-design. Then, 
    \begin{equation}
        \underset{U \sim \mu_H^{(2)}}{\mathbf{Var}} \tr (OU\rho U^\dag) \leq \frac{1}{N+1}.
    \end{equation}
\end{lemma}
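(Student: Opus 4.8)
The plan is to exploit the defining property of a $2$-design: since $\mu_H^{(2)}$ reproduces the first and second Haar moments, I can replace all averages by the explicit moment formulas of \Cref{lem:haar-moments} and reduce the variance to a short algebraic expression in the purity $\tr(\rho^2)$ and the spectral data $\tr(O)$, $\tr(O^2)$. I would work with the decomposition $\mathbf{Var}_{U}\tr(OU\rho U^\dag) = \mathbf{E}_U[\tr(OU\rho U^\dag)^2] - (\mathbf{E}_U[\tr(OU\rho U^\dag)])^2$, treating the relevant observable case where $O$ is Hermitian so that $\tr(OU\rho U^\dag)$ is real and this is a genuine nonnegative variance.

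First I would compute the mean. By the first-moment formula, $\mathcal{M}_{\mu_H}^{(1)}(\rho) = \tr(\rho)I/N = I/N$, so $\mathbf{E}_U[\tr(OU\rho U^\dag)] = \tr(O)/N$. The key device for the second moment is to linearize the square via a tensor-product lift:
\begin{equation}
\tr(OU\rho U^\dag)^2 = \tr\big((O\otimes O)\,U^{\otimes 2}(\rho\otimes\rho)(U^\dag)^{\otimes 2}\big).
\end{equation}
Taking the expectation over $U\sim\mu_H^{(2)}$ pulls the average onto $\rho\otimes\rho$, giving $\tr\big((O\otimes O)\,\mathcal{M}_{\mu_H}^{(2)}(\rho\otimes\rho)\big)$. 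Substituting \Cref{lem:haar-moments} and evaluating the four traces through the flip-operator identity $\tr(\mathbb{F}(A\otimes B)) = \tr(AB)$ yields $\tr(\rho\otimes\rho) = 1$, $\tr(\mathbb{F}(\rho\otimes\rho)) = \tr(\rho^2)$, $\tr(O\otimes O) = \tr(O)^2$, and $\tr(\mathbb{F}(O\otimes O)) = \tr(O^2)$, so the second moment becomes $c_{\mathbb{I}}\tr(O)^2 + c_{\mathbb{F}}\tr(O^2)$ with the coefficients instantiated at $\rho\otimes\rho$.

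Subtracting the squared mean and simplifying, I expect the expression to factor cleanly as
\begin{equation}
\mathbf{Var}_{U}\tr(OU\rho U^\dag) = \frac{N\tr(\rho^2)-1}{N^2-1}\cdot\frac{1}{N}\left(\tr(O^2) - \frac{\tr(O)^2}{N}\right).
\end{equation}
Finally I would bound each factor separately. The purity obeys $1/N \le \tr(\rho^2) \le 1$, so the first factor is nonnegative and at most $(N-1)/(N^2-1) = 1/(N+1)$; and since $\|O\|_\infty \le 1$ forces all eigenvalues of $O$ into $[-1,1]$, we have $\tr(O^2) \le N$ while $\tr(O^2) - \tr(O)^2/N \ge 0$ by Cauchy--Schwarz, so the second factor lies in $[0,1]$. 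Multiplying the two bounds gives $\mathbf{Var} \le 1/(N+1)$, as claimed.

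The computation is essentially routine; the only place demanding care is the algebraic simplification into the factored form, where sign errors in combining the $c_{\mathbb{I}}, c_{\mathbb{F}}$ terms with the squared mean can easily slip in, and the bookkeeping that ensures both resulting factors are genuinely nonnegative (so that the termwise bounds compose into an upper bound rather than requiring a more delicate argument). Establishing that both factors are nonnegative — relying on $\tr(\rho^2)\ge 1/N$ and the spectral Cauchy--Schwarz inequality — is therefore the main conceptual checkpoint.
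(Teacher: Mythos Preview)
Your proof is correct and follows essentially the same route as the paper: compute the first and second moments via \Cref{lem:haar-moments}, subtract, and bound using $\tr(\rho^2)\in[1/N,1]$ together with $\tr(O^2)\le N$. Your closed factorization of the variance is a slightly cleaner packaging than the paper's term-by-term bounding (the paper simply observes that the coefficient of $\tr(O)^2$ is nonpositive and drops it), but the underlying computation and inequalities are identical.
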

\begin{proof}
    As the variance is a second-order moment, we can equivalently bound the variance over the unitary Haar-measure $\mu_H$. From Lemma~\ref{lem:haar-moments}, we have
    \begin{equation}
        \underset{U \sim \mu_H}{\mathbf{E}} \tr(OU\rho U^\dag) = \frac{\tr(O)}{N}.
    \end{equation}
    Again, from Lemma~\ref{lem:haar-moments},
        \begin{align}
         \underset{U \sim \mu_H}{\mathbf{E}}\left[\tr(OU \rho U^\dag)^2\right] &= \underset{U \sim \mu_H}{\mathbf{E}}\left[\tr(O^{\otimes 2}U^{\otimes 2} \rho^{\otimes 2} U^{\dag \otimes 2})\right] \\&= \tr(O^{\otimes 2} \underset{U \sim \mu_H}{\mathbf{E}}[U^{\otimes 2} \rho^{\otimes 2} U^{\dag \otimes 2}]) \\&= \tr(O^{\otimes 2} (c_{\mathbb{I}, \rho^{\otimes 2}}\mathbb{I} + c_{\mathbb{F}, \rho^{\otimes 2}}\mathbb{F})) \\&= c_{\mathbb{I}, \rho^{\otimes 2}}\tr(O)^2 + c_{\mathbb{F}, \rho^{\otimes 2}}\tr(O^2).
        \end{align}
    Now, 
    \begin{align}
        c_{\mathbb{I}, \rho^{\otimes 2}} &= \frac{N\tr(\rho^{\otimes 2})- \tr(\rho^{\otimes 2}\mathbb{F})}{N(N^2-1)} \\&= \frac{N\tr(\rho)^2 - \tr(\rho^2)}{N(N^2-1)} \\&= \frac{N - \tr(\rho^2)}{N(N^2-1)},
    \end{align}
    and,
    \begin{align}
        c_{\mathbb{F}, \rho^{\otimes 2}} &= \frac{N\tr(\mathbb{F}\rho^{\otimes 2}) - \tr(\rho^{\otimes 2})}{N(N^2-1)} \\&= \frac{N\tr(\rho^2)-1}{N(N^2-1)} \\& \leq \frac{N-1}{N(N^2-1)}  \\&= \frac{1}{N(N+1)},
    \end{align}
    where the inequality follows from the fact that $\tr(\rho^2) \leq 1$. 
    
    Finally,
    \begin{align}
        \underset{U \sim \mu_H}{\mathbf{Var}}\left[\tr(OU \rho U^\dag)\right] &= c_{\mathbb{I}, \rho^{\otimes 2}}\tr(O)^2 + c_{\mathbb{F}, \rho^{\otimes 2}}\tr(O^2) - \frac{\tr(O)^2}{N^2}
        \\& \leq \left(\frac{N - \tr(\rho^2)}{N(N^2-1)} - \frac{1}{N^2}\right) \tr(O)^2 + \frac{\tr(O^2)}{N(N+1)} \\& = \frac{1-N\tr(\rho^2)}{N^2(N^2-1)} \tr(O)^2 + \frac{Tr(O^2)}{N(N+1)} \\& \leq \frac{\tr(O^2)}{N(N+1)} \\& \leq \frac{1}{N+1},
    \end{align}
    where the second inequality uses that $\tr(\rho^2) \geq 1/N$ and the last inequality is due to $\norm{O}_\infty \leq 1 \Rightarrow \tr(O^2) \leq N$.
\end{proof}
 We are now in a position to prove \Cref{thm:hardness-exact-design}.
\begin{proof}[Proof of \Cref{thm:hardness-exact-design}]
    We note that $\phidep$ is the expected channel over the unitary Haar measure. As this is a first-order moment, $\phidep$ is also the expected channel over unitary $2$-designs. We can thus use Chebyshev's inequality to obtain,
    \begin{equation}
        \underset{\rho,O}{\max} \underset{U \sim \mu_H^{(2)}}{\mathbf{Pr}} \left(\left|\tr(OU\rho U^\dag) - \left[\tr(O\phidep(\rho))\right]\right| > \tau \right) \leq \underset{\rho,O}{\max} \frac{1}{\tau^2} \underset{U \sim \mu_H^{(2)}}{\mathbf{Var}} \left[\tr(OU\rho U^\dag)\right].
    \end{equation}
    The desired result then follows by combining \Cref{lem:hardness-learning-avg,lem:haar-var}.
\end{proof}
Before we prove \Cref{thm:hardness-approx-design}, we show an upper bound on the variance of $\tr(OU \rho U^\dagger)$ over approximate unitary $2$-designs.

\begin{lemma}[Variance over approximate $2$-designs]
\label{lem:approx-variance}
    Let $\rho \in \mathcal{S}_N, O \in \mathcal{M}_{N,N}, \norm{O}_\infty \leq 1$ and $\mu_H^{(2,\delta)}$ be a $\delta$-approximate unitary $2$-design. Then,
       \begin{equation}
           \underset{U \sim \mu_{H}^{2,\delta}}{\mathbf{Var}} \left(\tr(OU\rho U^\dag)\right) \leq 
           \frac{1}{N+1}+ 3\delta.
       \end{equation} 
\end{lemma}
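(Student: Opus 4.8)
The plan is to compare the variance over the approximate design $\mu_H^{(2,\delta)}=\nu$ directly against the variance over an exact design, which the previous lemma (\Cref{lem:haar-var}) already bounds by $1/(N+1)$, and to control the error introduced by the approximation through the additive design guarantee. Writing $f(U)=\tr(OU\rho U^\dag)$, I would expand $\mathbf{Var}_\nu(f)=\mathbf{E}_\nu[f^2]-(\mathbf{E}_\nu[f])^2$ and observe that $f(U)^2=\tr(O^{\otimes 2}U^{\otimes 2}\rho^{\otimes 2}U^{\dag\otimes 2})$, so the second moment equals $\tr(O^{\otimes 2}\mathcal{M}_\nu^{(2)}(\rho^{\otimes 2}))$ while the first moment equals $\tr(O\,\mathcal{M}_\nu^{(1)}(\rho))$; the identical identities hold for $\mu_H$ with $\mathcal{M}_{\mu_H}^{(t)}$. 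The strategy is then to bound the second-moment and first-moment deviations separately and add them to the Haar variance.

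For the second moment, I would invoke the additive guarantee $\norm{\mathcal{M}_\nu^{(2)}-\mathcal{M}_{\mu_H}^{(2)}}_\diamond\leq\delta$. Since $\rho^{\otimes 2}$ is a density operator, this gives $\norm{(\mathcal{M}_\nu^{(2)}-\mathcal{M}_{\mu_H}^{(2)})(\rho^{\otimes 2})}_1\leq\delta$, and Hölder's inequality together with $\norm{O^{\otimes 2}}_\infty=\norm{O}_\infty^2\leq 1$ yields $|\mathbf{E}_\nu[f^2]-\mathbf{E}_{\mu_H}[f^2]|\leq\delta$.

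The first-moment term is the main obstacle, since the lemma assumes only a $2$-design and says nothing directly about $\mathcal{M}_\nu^{(1)}$. The key step is to promote the $2$-design approximation to a $1$-design approximation via partial trace: using $\mathcal{M}_\nu^{(1)}(\rho)=\tr_2[\mathcal{M}_\nu^{(2)}(\rho\otimes I/N)]$ and the fact that the partial trace is trace-norm contractive, I obtain $\norm{\mathcal{M}_\nu^{(1)}(\rho)-\mathcal{M}_{\mu_H}^{(1)}(\rho)}_1\leq\delta$, hence $|\mathbf{E}_\nu[f]-\mathbf{E}_{\mu_H}[f]|\leq\delta$. Because $|f(U)|\leq\norm{O}_\infty\leq 1$, both first moments lie in $[-1,1]$, so factoring $(\mathbf{E}_\nu[f])^2-(\mathbf{E}_{\mu_H}[f])^2=(\mathbf{E}_\nu[f]-\mathbf{E}_{\mu_H}[f])(\mathbf{E}_\nu[f]+\mathbf{E}_{\mu_H}[f])$ bounds this difference by $2\delta$.

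Finally I would assemble the pieces: $\mathbf{Var}_\nu(f)\leq\mathbf{Var}_{\mu_H}(f)+|\mathbf{E}_\nu[f^2]-\mathbf{E}_{\mu_H}[f^2]|+|(\mathbf{E}_\nu[f])^2-(\mathbf{E}_{\mu_H}[f])^2|\leq\frac{1}{N+1}+\delta+2\delta=\frac{1}{N+1}+3\delta$, where the Haar variance bound comes from \Cref{lem:haar-var}. The only genuinely nontrivial ingredient is the partial-trace argument linking the first and second moments; the rest is a routine variance decomposition combined with Hölder's inequality and the diamond-norm approximation.
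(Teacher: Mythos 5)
Your proposal is correct and follows the same route as the paper: split the variance difference by the triangle inequality into a second-moment term and a squared-first-moment term, bound the former by $\delta$ via H\"older's inequality and the additive diamond-norm guarantee, factor the latter as a difference of squares to get $2\delta$, and add the exact-design variance bound $1/(N+1)$ from \Cref{lem:haar-var}. The one place you go beyond the paper is the first-moment step: the paper simply asserts $\left|\tr\left(O\left(\mathcal{M}^{(1)}_{\mu_H^{2,\delta}}-\mathcal{M}^{(1)}_{\mu_H}\right)(\rho)\right)\right|\leq\delta$ ``since $\mu_H^{2,\delta}$ is an additive $\delta$-approximate unitary $2$-design,'' even though \Cref{def:designs} only constrains the second-moment superoperator; your partial-trace identity $\mathcal{M}_\nu^{(1)}(\rho)=\tr_2\bigl[\mathcal{M}_\nu^{(2)}(\rho\otimes I/N)\bigr]$, combined with trace-norm contractivity of the partial trace, supplies exactly the missing justification that an additive $\delta$-approximate $2$-design is automatically a $\delta$-approximate $1$-design in the same sense. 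This makes your write-up slightly more self-contained than the paper's proof; the identity itself is easily checked since $U^{\otimes 2}(\rho\otimes I/N)U^{\dag\otimes 2}=U\rho U^\dag\otimes I/N$.
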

\begin{proof}
    We will bound the difference in variances over the Haar measure and approximate $2$-designs. For concise notation, let $f_{U}(\rho,O) = \tr(OU\rho U^\dag)$. Then, the difference in variances can be bounded as follows.
        \begin{align}
        \label{eq:approx-design-lemma-eq}
            \left| \underset{\mu_{H}^{2,\delta}}{\mathbf{Var}} \left(f_{U}(\rho,O)\right) - \underset{\mu_{H}}{\mathbf{Var}} \left(f_{U}(\rho,O)\right) \right| &\leq \left|\mathbf{E}{\mu_{H}^{2,\delta}} f_{U}(\rho,O)^2 - \mathbf{E}_{\mu_{H}} f_{U}(\rho,O)^2\right| \nonumber \\&+ \left|(\mathbf{E}_{\mu_{H}^{2,\delta}} f_{U}(\rho,O))^2 - (\mathbf{E}_{\mu_{H}} f_{U}(\rho,O))^2\right|,
        \end{align}
        where the inequality follows from the triangle inequality. We bound the first term as follows:
    \begin{align}
        \left|\mathbf{E}_{\mu_{H}^{2,\delta}} f_{U}(\rho,O)^2 - \mathbf{E}_{\mu_{H}} f_{U}(\rho,O)^2\right| &= \left| \tr\left(O^{\otimes 2}\left(\mathcal{M}^{(2)}_{\mu_H^{2,\delta}} - \mathcal{M}^{(2)}_{\mu_H}\right) (\rho^{\otimes 2}) \right)\right|
        \leq \delta,
    \end{align}
    where the inequality holds since $\mu_H^{2,\delta}$ is an additive $\delta$-approximate unitary 2-design. We now bound the second term in \Cref{eq:approx-design-lemma-eq}.
    \begin{align}
        \left|(\mathbf{E}_{\mu_{H}^{2,\delta}} f_{U}(\rho,O))^2 - (\mathbf{E}_{\mu_{H}} f_{U}(\rho,O))^2\right| &= |\mathbf{E}_{\mu_{H}^{2,\delta}} f_{U}(\rho,O) + \mathbf{E}_{\mu_{H}} f_{U}(\rho,O)| \nonumber \\&\cdot|\mathbf{E}_{\mu_{H}^{2,\delta}} f_{U}(\rho,O) - \mathbf{E}_{\mu_{H}} f_{U}(\rho,O)|
        \\& \leq 2 \cdot \left| \tr\left(O\left(\mathcal{M}^{(1)}_{\mu_H^{2,\delta}} - \mathcal{M}^{(1)}_{\mu_H}\right) (\rho)\right) \right|
        \\& \leq 2\delta.
    \end{align}
    In the first inequality, we use that $|f_U(\rho,O)| \leq 1$ when $\|O\|_\infty \leq 1$, and the last inequality holds since $\mu_H^{2,\delta}$ is an additive $\delta$-approximate unitary 2-design. \Cref{lem:haar-var} and the bounds on the two terms in \Cref{eq:approx-design-lemma-eq} suffice to prove the lemma.
\end{proof}
 Now, we are in a position to prove \Cref{thm:hardness-approx-design}.
\begin{proof}[Proof of \Cref{thm:hardness-approx-design}]
    Here, we have the additional assumption that the approximate $2$-design is an exact $1$-design. Thus, $\phidep$ is also the expected channel over $\mu_H^{(2,\delta)}$. We can thus use Chebyshev's inequality to obtain,
    \begin{equation}
        \underset{\rho,O}{\max} \underset{U \sim \mu_H^{(2,\delta)}}{\mathbf{Pr}} \left(\left|\tr(OU\rho U^\dag) - \left[\tr(O\phidep(\rho))\right]\right| > \tau \right) \leq \underset{\rho,O}{\max} \frac{1}{\tau^2} \underset{U \sim \mu_H^{(2,\delta)}}{\mathbf{Var}} \left[\tr(OU\rho U^\dag)\right].
    \end{equation}
    From \Cref{lem:approx-variance}, we see that when $\delta = \mc{O}(1/2^n)$, the variance is $\mc{O}(1/2^n)$. On the other hand, when $\delta = \omega(1/2^n)$, the variance is $\mc{O}(\delta)$. Together with \Cref{lem:hardness-learning-avg}, we obtain the desired result.
\end{proof}
\subsection{Proof of \Cref{thm:hardness-haar-measure}} \label{sec:hardness-haar}
In this section, we prove \Cref{thm:hardness-haar-measure} on the hardness of learning Haar-random unitaries. Using \Cref{lem:hardness-learning-avg}, it again suffices to bound the probability in the denominator of \Cref{eq:hardness-learning-avg}. To do this, we will use the following concentration result from \cite{meckes2013spectral}.
\begin{lemma}[Concentration of Measure for Haar-random unitaries, Corollary 17 of \cite{meckes2013spectral}]
\label{lem:haar-conc}
    Let $f : \mc{M}_{N,N} \rightarrow \mathbb{R}$ be a function from $N \times N$-dimensional matrices to the real numbers that is Lipschitz with Lipschitz constant $L$ with respect to the Schatten-2 norm. Let $\mu_H$ be the unitary Haar-measure. For any $\tau > 0$,
    \begin{equation}
        \underset{U \sim \mu_H}{\mathbf{Pr}} \left( 
                f(U) - \underset{U \sim \mu_H}{\mathbf{E}}[f(U)] > \tau
        \right) \leq \exp\left(\frac{-N\tau^2}{12L^2}\right)
    \end{equation}
\end{lemma}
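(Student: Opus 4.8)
The statement is the concentration inequality attributed to \cite{meckes2013spectral}, so rather than re-deriving it from scratch I would reconstruct its proof through the standard route of deducing Gaussian-type concentration from a lower bound on the Ricci curvature of the unitary group. The plan is to view $\mathcal{U}_N$ as a compact, connected Riemannian manifold equipped with the bi-invariant metric induced by the Schatten-$2$ (Hilbert--Schmidt) inner product on its Lie algebra $\mathfrak{u}(N)$ of skew-Hermitian matrices, under which $\mu_H$ is exactly the normalized Riemannian volume. The first step is to establish a curvature lower bound of order $N$. Using the bi-invariant-metric identity $\mathrm{Ric}(X,X) = -\tfrac14 B(X,X)$, where $B$ is the Killing form of $\mathfrak{u}(N)$, one finds that on the semisimple part $SU(N)$ the Ricci curvature is bounded below by a positive constant times $N$; the precise constant depends on how the metric is normalized so that geodesic motion matches the Schatten-$2$ distance.

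The second step is to invoke the Gromov--L\'evy / Bakry--\'Emery concentration theorem: on a complete manifold with $\mathrm{Ric} \succeq K\,g$ for some $K > 0$, every function that is $1$-Lipschitz with respect to the geodesic distance satisfies $\mathbf{Pr}[f - \mathbf{E}f > t] \leq \exp(-Kt^2/2)$. To apply this to the hypothesis as worded, I would convert the Lipschitz condition: because the Schatten-$2$ (chordal) distance between two unitaries is dominated by their geodesic distance, any $f$ that is $L$-Lipschitz with respect to $\|\cdot\|_2$ is automatically $L$-Lipschitz with respect to the geodesic metric. Rescaling $f \mapsto f/L$ and $t \mapsto \tau/L$ then turns the curvature bound $K = \Theta(N)$ into a tail of the form $\exp(-\Theta(N)\,\tau^2/L^2)$, and tracking the absolute constants is what recovers the factor $\tfrac{1}{12}$ in the exponent.

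The main obstacle is that $\mathcal{U}_N$, unlike $SU(N)$, has a one-dimensional flat direction: the centre $\{e^{i\theta}I\}$ generated by $iI \in \mathfrak{u}(N)$ has vanishing Ricci curvature, so the naive ``positive Ricci everywhere'' hypothesis fails and the Bakry--\'Emery argument cannot be applied to $\mathcal{U}_N$ verbatim. I would handle this by decomposing $\mathcal{U}_N$ into its semisimple part $SU(N)$, on which the positive curvature bound holds and drives the concentration, and the residual determinant-phase circle, whose contribution is one-dimensional and must be controlled by a separate, dimension-independent estimate (or absorbed into the constant). This is precisely the step that fixes the absolute constant, degrading the exponent from what the $SU(N)$ curvature bound alone would give down to the looser value $\tfrac{1}{12}$ of the cited corollary. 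An equivalent route, which sidesteps the explicit sectional-curvature computation, is to first establish a logarithmic Sobolev inequality for $\mu_H$ with constant $\Theta(1/N)$ (again from the curvature--dimension condition) and then run the Herbst argument to obtain the sub-Gaussian tail; the same flat-direction subtlety reappears there as the need to justify the log-Sobolev constant for the full group rather than just $SU(N)$.
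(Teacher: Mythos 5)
First, a point of comparison: the paper itself does not prove this lemma at all --- it is imported verbatim as Corollary 17 of \cite{meckes2013spectral} --- so your proposal can only be judged against the proof in that reference. Your overall architecture does match it: the bi-invariant Hilbert--Schmidt metric on the unitary group, the Ricci identity $\mathrm{Ric} = -\tfrac14 B$ giving a lower bound of order $N$ on $\mathrm{SU}(N)$, the Bakry--\'Emery (equivalently log-Sobolev plus Herbst) machinery, the observation that the chordal Schatten-$2$ distance is dominated by the geodesic distance so the Lipschitz hypothesis transfers, and the correct identification of the flat central direction of $\mathrm{U}(N)$ as the one genuine obstacle.

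However, your resolution of that obstacle is precisely the step that fails. You propose to control the determinant-phase circle ``by a separate, dimension-independent estimate (or absorbed into the constant),'' claiming this only degrades the absolute constant to $\tfrac{1}{12}$. It cannot: if the central direction admitted only dimension-independent concentration at rate $\exp(-c\tau^2/L^2)$, then combining it with the $\mathrm{SU}(N)$ part (by tensorization of log-Sobolev constants, or by splitting the fluctuation into the two directions) would yield an overall bound no better than $\exp(-c\tau^2/L^2)$, which for fixed $\tau$ and large $N$ is exponentially weaker than $\exp\bigl(-N\tau^2/(12L^2)\bigr)$; the loss would be in the $N$-scaling, not merely in the constant. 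The actual mechanism in \cite{meckes2013spectral} is dimension-\emph{dependent}: writing $U = e^{i\theta}V$ with $V \in \mathrm{SU}(N)$, the phase $\theta$ is determined only modulo $2\pi/N$ (since $\det U = e^{iN\theta}$), so the residual circle has Hilbert--Schmidt length $2\pi\sqrt{N}/N = 2\pi/\sqrt{N}$ --- it is \emph{short}, with log-Sobolev constant $O(1/N)$, the same order as that of $\mathrm{SU}(N)$. Equivalently, the conditional average $g(\theta) = \mathbf{E}_{V}\left[f(e^{i\theta}V)\right]$ is $(2\pi/N)$-periodic and $\sqrt{N}L$-Lipschitz in $\theta$, hence varies by only $O(L/\sqrt{N})$ in total. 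It is this $N$-fold winding of the determinant around the center that preserves the factor $N$ in the exponent, with only the absolute constant degrading (to $\tfrac{1}{12}$). Without this ingredient, the final combination step in your sketch does not produce the stated inequality.
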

We are now in a position to prove \Cref{thm:hardness-haar-measure}.
\begin{proof}[Proof of \Cref{thm:hardness-haar-measure}]
We will start by considering $f_{\rho,O}(U) = \tr(OU\rho U^\dagger)$. For any two unitaries $U,V$, 
\begin{align}
|f_{\rho,O}(U) - f_{\rho,O}(V)| &= |\tr(OU\rho U^\dagger) - \tr(OV\rho V^\dagger)|
\\&= \left|\tr(\rho U^\dagger OU) - \tr(\rho V^\dagger OV)\right|
\\&= \left|\tr(\rho  (U^\dagger OU - V^\dagger OV))\right|
\\&\leq \|\rho\|_2 \|U^\dagger OU - V^\dagger OV\|_2
\\& \leq \|U^\dagger OU - V^\dagger OU\|_2 + \|V^\dagger OU - V^\dagger OV\|_2
\\& \leq \|U^\dagger - V^\dagger\|_2 + \|U-V\|_2
\\& = 2\|U-V\|_2.
\end{align}
Here, $\|\cdot\|_2$ denotes the Schatten 2-norm. In the second and third lines we use cyclicity and linearity of the trace respectively, in the fourth line we use $\tr(A^\dagger B) \leq \|A\|_2 \|B\|_2$ (see \cite{baumgartner2011inequality}), in the fifth line we use the triangle inequality and the fact that $\|\rho\|_2 \leq 1$ for all states. Finally, in the second to last line, we use that $\|AB\|_2 \leq \|A\|_2 \|B\|_\infty$. Thus, the Lipschitz constant of $f_{\rho,O}$ is at most $2$.

Then, from Lemma \ref{lem:haar-conc}, and the fact that $\phidep$ is the expected channel over $\mu_H$, we obtain
\begin{equation}
    \underset{U \sim \mu_H}{\mathbf{Pr}} \left[\tr(OU\rho U^\dag) - \tr(O \phidep(\rho)) > \tau \right] \leq \exp\left( \frac{-2^n\tau^2}{48}\right).
\end{equation}
By analyzing $g_{\rho,O} = -f_{\rho,O}$, we can obtain a similar upper bound on the probability of $f_{\rho,O}(U)$ being significantly lower than its expectation. By taking a union bound, we obtain
\begin{equation}
    \underset{U \sim \mu_H}{\mathbf{Pr}} \left[ |\tr(OU\rho U^\dag) - \tr(O \phidep(\rho))| > \tau \right] \leq 2\exp\left( \frac{-2^n\tau^2}{48}\right).
\end{equation}
Together with \Cref{lem:hardness-learning-avg}, we obtain the desired result.
\end{proof}

\section{Application to CR-QPUF Security}
\label{sec:crqpuf}
In this section, we present applications of our QPSQ model in the realm of cryptography. Specifically, we focus on a particular class of hardware security primitives used for authentication known as Classical Readout Quantum Physically Unclonable Functions (CR-QPUFs). We demonstrate an attack against CR-QPUF-based authentication protocols using \Cref{alg:avg-spt}. As our attack has a quasi-polynomial time complexity, this does not break the formal security definition of CR-QPUFs. However, our approach highlights that any new, polynomial-time algorithm for shadow tomography of quantum processes from QPSQs would imply vulnerability for an appropriate class of CR-QPUFs.

Physical Unclonable Functions (PUFs) are hardware devices designed to resist cloning or replication, making them suitable for cryptographic tasks like authentication, identification, and fingerprinting \cite{ruhrmair2014pufs,chang2017retrospective,Id-QPUF,delavar2017puf}. Classically, PUFs have been realized using specific electrical circuits or optical materials \cite{pappu2002physical,guajardo2007fpga,gassend2002silicon,suh2007physical}. However, many of these implementations remain susceptible to various attacks, such as side-channel and machine-learning attacks \cite{ruhrmair2010modeling,ganji2016strong,tebelmann2019side,khalafalla2019pufs}. To overcome vulnerabilities due to ML-based attacks, Quantum PUFs were introduced and analyzed in \cite{QPUF}. However, the secure realization of Quantum PUFs shown in \cite{QPUF} requires implementing Haar-random unitaries, quantum communication and quantum memory, making them impractical in the near term. 

As a result, a range of PUF variations necessitating different levels of quantum capability have been proposed and explored. Among these are the Classical Readout Quantum Physically Unclonable Functions (CR-QPUFs), examined in \cite{CRQPUF-Original, CR-QPUF-single}. CR-QPUFs represent a middle ground between fully quantum PUFs and classical PUFs, aiming to achieve the security of Quantum PUFs while relying only on classical communication and storage. However, \cite{CR-QPUF-single} introduces a classical machine-learning attack directed at the initial proposal for CR-QPUFs and a slightly more sophisticated quantum circuit of the same type, demonstrating the vulnerability of such constructions. Nevertheless, because this construction is based on a relatively simple quantum circuit, it raises a significant open question in this domain: \emph{Does the susceptibility of CR-QPUFs to learning attacks stem from the simplicity of the quantum process itself, or does it arise fundamentally from the way CR-QPUFs have been defined?}
 
We partially address this question by showing that regardless of the underlying quantum process, CR-QPUFs relying on local measurements for authentication are vulnerable to a learning attack based on \Cref{alg:avg-spt}. First, we discuss the desired security properties and adversarial model for CR-QPUFs in \Cref{sec:crqpuf-sec}. Then, we outline an authentication protocol for CR-QPUFs in \Cref{sec:crqpuf-protocol}, and discuss the relation between CR-QPUFs and the $\mathsf{QPStat}$ oracle in \Cref{sec:crqpuf-qpsq}. We present our main result for this section in the form of a learning attack in \Cref{sec:crqpuf-vuln}.
 
\subsection{Security property and attack model} \label{sec:crqpuf-sec}
As discussed in \cite{QPUF}, the main cryptographic notion associated with PUFs in general, and particularly for the use-case of authentication and identification, is \emph{unforgeability}. Essentially, unforgeability means that an adversary having access to an efficient-size database of input-output pairs (also called challenge-response pairs (CRPs)) of a specific function of interest, should not be able to reproduce the output of the function on a \emph{new} input. This database is often obtained via the adversary querying an oracle that realises the black-box access to the function. Unforgeability is also extended to the quantum world generally in two directions: it can be defined for quantum processes (instead of classical functions) \cite{doosti2021unified} and also for quantum adversaries having access to quantum oracles of a classical function \cite{boneh2013quantum,alagic2018unforgeable,doosti2021unified}. As there are many notions of unforgeability involving quantum adversaries, and the details are outside the scope of this paper, we only present the vulnerability of the protocol through the lens of learnability. 

We consider two honest computationally bounded parties, a verifier $\mathcal{V}$ and a prover $\mathcal{P}$, communicating through a quantum or classical channel (depending on the challenge type). The purpose of the protocol is for the honest prover to prove their identity to the verifier with the promise that no quantum adversary can falsely identify themselves as the honest prover. A quantum adversary here is a quantum polynomial time (QPT) algorithm, that sits on the communication channel and can run arbitrary efficient quantum processes. The prover possesses a CR-QPUF device denoted as $\mathcal{C}$, which they will use for identification, and which is associated with a CPTP $\mathcal{E}$. The verifier on the other hand has a database of CRPs of $\mathcal{C}$, which is obtained by having direct access to $\mathcal{C}$ in the setup phase and recording the queries and their respective outputs. In this scenario, it is often considered that the device is later sent to the prover physically and from that point is possessed by the prover. To consider a stronger attack model, in addition to having access to the communication channel, we also assume the adversary has access to a polynomial-size database of CRPs. Here, the adversarial models are often categorized into different classes depending on the level of access assumed for the adversary to obtain such data. A \emph{weak adversary} only has access to a randomly selected set of CRPs, often obtained by recording the communications on the channel over a certain period of time. We will consider a stronger security model, where the adversary is \emph{adaptive} and can have oracle access to the device, i.e. issuing their desired queries to the CR-QPUFs. Although the adaptive adversarial model seems very strong, it is still realistic and often the desired security level when it comes to identification protocols and PUF-based schemes. A reason for this is that the device needs to be physically transferred at least once, which gives an adversary the chance to directly query and interact with it.  In the standard adversarial model, a QPT adversary $\mathcal{A}$ can collect polynomially-many CRPs, by issuing any desired state as input, and then use this dataset to learn $\mathcal{C}$. Then, in the challenge phase of the protocol, the adversary can provide a verifiable outcome for a new challenge state, and hence break the security. However, we note that the specific attack we demonstrate has a quasi-polynomial query and time complexity for providing such an outcome.
\subsection{The general structure of a CR-QPUF-based authentication protocol}
\label{sec:crqpuf-protocol}
We now define a general authentication protocol for CR-QPUFs and discuss how their security is defined. For the CR-QPUFs, we work with a definition similar to \cite{CR-QPUF-single} based on quantum statistical queries. While our definition differs slightly from that of 
\cite{CR-QPUF-single}, it is a natural extension of secure Quantum PUFs \cite{QPUF} to the classical readout setting. We will discuss the differences between these frameworks in \Cref{sec:crqpuf-qpsq}. For now, we denote a CR-QPUF as $\mathcal{C}$ and abstractly define it as a completely positive trace preserving (CPTP) map $\mathcal{E}$ over the $n$-qubit state space, able to produce statistical queries for any challenge in the challenge set, given an observable $O$ and a tolerance parameter $\tau$. We define the protocol between a verifier $\mathcal{V}$ and a prover $\mathcal{P}$ in Protocol \ref{prot:auth}.
\begin{protocol}
    \small
    \caption{The general structure of the CR-QPUF-based Authentication Protocol}
    \label{prot:auth}
    \begin{enumerate}
        \item \textbf{Setup phase:}
        \begin{enumerate}
            \item The Verifier $\mathcal{V}$ possesses a CR-QPUF $\mathcal{C}$ associated with the quantum process $\mathcal{E}$.
            \item The Verifier $\mathcal{V}$ and the Prover $\mathcal{P}$ agree on an observable $O$ and a threshold $\tau$.
            \item $\mathcal{V}$ builds a database $D$ of CRPs of $\mathcal{C}$ by querying $\mathsf{QPStat}_{\mathcal{E}}$ on $O$, with threshold $\tau$\\
            \textsf{For $i = 1$ to $N$}: For challenge $x_i$, a quantum state $\rho_i$ (or $\rho(x_i)$), is sampled from a selected distribution $\mathcal{D}$ and prepared as multiple copies, issued to $\mathsf{QPStat}_{\mathcal{E}}$, and the respective statistical query response $y_i$ is recorded. Thus the database $D = \{(x_i, y_i)\}_{i=1}^N$ is then constructed.
            
            \item $\mathcal{V}$ physically sends $\mathcal{C}$ to the Prover $\mathcal{P}$.

            At this point $\mathcal{V}$ possesses database $D$ and $\mathcal{P}$ possesses the device $\mathcal{C}$.
        \end{enumerate}
        \item \textbf{Authentication phase:}
        \begin{enumerate}
            \item $\mathcal{V}$ selects a CRP $(x_i,y_i)$ uniformly at random from $D$  and issues the challenge $x_i^t$ to $\mathcal{P}$.\\
            if $t=c$ the challenge is classical; if $t=q$ the challenge has been sent as multiple copies of the associate quantum state $\rho_i$.
            \item $\mathcal{P}$ receives the challenge  $x_i$ and proceeds as follows:\\
            \textsf{if t=c:} $\mathcal{P}$ creates $\mathsf{poly}(1/\tau)$ copies of the state $\rho_i = \rho(x_i)$.\\
            \textsf{else if t=q:} $\mathcal{P}$ proceeds to next step.
            \item $\mathcal{P}$ obtains the output of  the statistical query $y_i^\prime$ by issuing $\rho_i$ to $\mathcal{C}$. This step is similar to the setup phase.
            \item $\mathcal{P}$ sends $y_i^\prime$ to $\mathcal{V}$.
            \item $\mathcal{V}$ receives $y_i^\prime$ and verifies it. If $|y_i - y_i^\prime| \leq 2\tau$, the authentication is successfully passed. Otherwise, $\mathcal{V}$ aborts.
        \end{enumerate}
    \end{enumerate}
\end{protocol}

We also note that for a physical device such as a CR-QPUF, the statistical query oracle $\mathsf{QPStat}$ abstractly models a natural and physical interaction with the device, which is querying it with the given challenge and measuring the output quantum states on a desired observable and over several copies to estimate the expectation value of the observable. In other words, the oracle is the physical device itself and not a separate entity or implementation.

In the context of these authentication protocols utilizing CR-QPUFs, we encounter two main factors governing the complexity of the underlying components. Firstly, there is the complexity of the channel representing CR-QPUF. Secondly, there is the choice of the observable. In our work, we assume the protocol is carried out using observables that are efficiently estimable, taking into account practical scenarios where both the verifier and prover can effectively measure and estimate the CR-QPUF's outcome, regardless of the underlying circuit's complexity. As such, and given that we aim to provide attacks with provable guarantees, we assume that the observable $O$ selected during the setup phase is an efficient observable, i.e. we assume that $O$ has a polynomially bounded number of terms in its Pauli representation as well as a bounded number of qubits each term acts non-trivially on. This is a physically well-motivated assumption, as demonstrated in current state-of-the-art research on estimating the expectation value of an observable \cite{ Wu2023overlappedgrouping}. In \cite{Wu2023overlappedgrouping}, the authors provided a framework which unifies a number of the most advanced and commonly studied methods, such as those in \cite{classical_shadow_tomography, Huang_2021_derandom}. While this assumption covers a wide class of observables, there are some non-local observables that can be measured efficiently using more specific techniques, such as the one in \cite{observable-measurement-2}. Nevertheless, under this assumption, we are able to formally demonstrate the vulnerability of a very large class of CR-QPUF authentication protocols. However, that does not imply the security of the cases that might be excluded due to our assumption on the observable, and heuristic attacks might still be applicable to scenarios in which the protocol uses a complicated and highly non-local observable.

The \emph{correctness} or \emph{completeness} of the protocol, which is defined as the success probability of an honest prover in the absence of any adversary or noise over the channel, should be 1. The \emph{soundness} or \emph{security} of the protocol, ensures that the success probability of any adversary (depending on the adversarial model) in passing the authentication should be negligible in the security parameter. For protocols defined as above, the completeness is straightforward, hence we only define and discuss the soundness. 

\begin{definition}[Soundness (security) of the CR-QPUF-based Authentication Protocol] 
\label{def:sound-crqpuf}
We say the CR-QPUF-based authentication protocol~\ref{prot:auth} is secure if the success probability of any QPT adaptive adversary $\mathcal{A}$ in producing an output $\tilde{y}$ for any $x$ sampled at random from a database $D$, over a distribution $\mathcal{D}$, that passes the verification by satisfying the condition $|y - \tilde{y}| \leq 2\tau$, is negligible in the security parameter.
\end{definition}

\subsection{CR-QPUFs in QPSQ framework} \label{sec:crqpuf-qpsq}
\paragraph{CR-QPUF Model:} While we have been discussing the \textsf{QPStat} oracles abstractly so far, the CR-QPUF device must, in practice, be able to take a quantum state, an observable and a tolerance parameter $\tau$, and output with high-probability a $\tau$-estimate of the expectation value. The device can thus be modeled by the oracle \textsf{QPStat}$_\mathcal{E}$ for the fixed underlying channel $\mathcal{E}$. On receiving a query, the device would apply $\mathcal{E}$ to multiple copies of the state, estimate the expectation value of the observable, and respond with that value as output. There are multiple methods for this estimation, such as those shown in \cite{classical_shadow_tomography, hadfield2020measurements_locally_biased, Huang_2021_derandom, Wu2023overlappedgrouping}, usually requiring measurements on $\mathsf{poly}(1/\tau)$ copies of the state for statistical estimation. In a real implementation of the protocol, the required copies of the input state would either be received through a quantum communication channel or prepared by the device given a classical description as input.

Our definition of CR-QPUFs is similar to the one considered in prior work~\cite{CR-QPUF-single}, where the device was also modeled by a quantum statistical query oracle. While the two models are similar in spirit, we note that there are some differences between the two definitions. The main difference is that in \cite{CR-QPUF-single}, the challenges have been defined in the form of descriptions of unitaries instead of quantum states. Starting with a fixed state initialized as $\ketbra{0}{0}$, the input unitary $U_{\mathrm{in}}$ is applied on the noisy hardware, followed by repeated measurements in the computational basis. Finally, the mean of some statistical function is computed over the measurement results. The idea behind this kind of construction is that the unique noise fingerprint of the device may result in unforgeability. On the other hand, our model considers a device repeatedly implementing a fixed quantum process, that takes input states as challenges. Here, the implemented quantum process acts as a unique fingerprint of the device instead. Considering the setting of a fixed channel is also a natural extension of prior work on Quantum PUFs \cite{QPUF}.

\subsection{Vulnerability from learning results} \label{sec:crqpuf-vuln}
We are now ready to present our attack on Protocol~\ref{prot:auth}. To show the extreme case of our result, we can assume the quantum process $\mathcal{E}$ corresponds to a Haar-random unitary. We can also consider any arbitrary, fixed noise model on top of it to model the hardware-specific imperfections of the CR-QPUF. We use \Cref{alg:avg-spt} for our specific attack strategy. We construct the attack in \Cref{alg:attack-prot-auth}.

\begin{algorithm}
    \caption{QPSQ attack on Protocol \ref{prot:auth} with observable $O$, tolerance $\tau$}
    \label{alg:attack-prot-auth}
    \begin{algorithmic}
    \State \textbf{Setting hyperparameters:}
    \State $\epsilon \gets \tau$
    \State Set $N$ according to \Cref{cor:spt-avg-with-assumption} 
    \\
    \For{$i = 1$ to $N$} 
        \State $\rho_i \sim \stabn$
        \State Issue challenge $\rho_i$ to $\mathcal{C}$
        \State Receive response $y_i$
    \EndFor
    \State $S_N \gets \bigl\{(\rho_i,y_i)\bigr\}_{i = 1}^N$
    \State Learn $h$ according to \textbf{Learning} phase of \Cref{alg:avg-spt}
    \\
    \State \textbf{Forgery}
    \State Given challenge $x$ from $\mathcal{V}$, respond with $h(\rho(x))$
\end{algorithmic}
\end{algorithm}
Using \Cref{cor:spt-avg-with-assumption}, we present the performance guarantees of this algorithm in \Cref{thm:crqpuf-vuln}. Our result is valid for any challenge distribution $\mathcal{D}$ which satisfies the assumptions of our proposed algorithm (i.e. invariance under local Clifford operations). Two specific examples of such distributions are $\mathcal{D}_{Haar}$, consisting of Haar-random states over the $n$-qubit Hilbert space and $\mathcal{D}_{stab}$, uniformly random states from $\stabn$. In the first case the challenge states $\rho(x)$ are Haar-random states indexed by $x$ and in the second case, the challenge states are in the form of $\rho(x) = \bigotimes_{i = 1}^n \ket{\psi_{x^i}}\bra{\psi_{x^i}}$, where $x \in \{0,1,2,3,4,5\}^n$, and we have $\{\ket{\psi_0} = \ket{0}, \ket{\psi_1} = \ket{1}, \ket{\psi_2} = \ket{+}, \ket{\psi_3} = \ket{-}, \ket{\psi_4} = \ket{+i}, \ket{\psi_5} = \ket{-i}\}$. These two specific selections of distributions for challenge states give rise to two very distinct instances of the authentication protocol. In the first case where the challenges are selected from a Haar-random distribution, the challenge state is communicated through a quantum channel, in the form of multiple copies of the state $\rho(x)$, for the prover to be able to produce the response $y_x$. Intuitively we expect this to increase the security of the protocol since the adversary is unlikely to gain any information about the challenge state itself. However, this extra hiding comes with the price of generating and communicating $n$-qubit Haar-random states, which is often very resource-extensive. On the other hand, studying this case (especially when also considering the underlying process of $\mc{C}$ to be a Haar-random unitary) would be interesting because it would allow the comparison between a QPUF and a CR-QPUF with the same level of underlying resources. We note that Haar-random unitaries have been shown to satisfy the requirements of secure QPUFs~\cite{QPUF}. This highlights the importance of the type of challenge and the verification process in the security of these hardware-based protocols. Our result is formalized in the following theorem.
\begin{theorem}
\label{thm:crqpuf-vuln}  Under Assumption \ref{ass:qpstat-assumption} on the output of the CR-QPUF $\mc{C}$, for any underlying quantum process $\mc{E}$, any choice of observable $O$ given as the sum of few-body observables, and under any choice of challenge state sampled from a distribution $\mc{D}$ invariant under single-qubit Cliffords, there exists an attack against Protocol~\ref{prot:auth} which successfully passes the verification with non-negligible probability using  $\tau^2 n^{\mc{O}(\log (1/\tau))}$ queries and running in time $\mathsf{poly}(n, 1/\tau) n^{\mc{O}(\log (1/\tau))}$.
\end{theorem}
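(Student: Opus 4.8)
The plan is to realize the attack of \Cref{alg:attack-prot-auth} as a direct application of \Cref{cor:spt-avg-with-assumption} and then translate its average-case guarantee into a per-challenge verification success. Since the adversary is adaptive (\Cref{sec:crqpuf-sec}), it has oracle access to the device, which is exactly $\mathsf{QPStat}_\mc{E}$. I would have it query the device on $N$ self-chosen product stabilizer states $\rho_i \sim \stabn$ with the single public observable $O$ (so $M = 1$) and target error $\epsilon = \tau/2$, then run the \textbf{Learning} phase of \Cref{alg:avg-spt} to obtain a hypothesis $h$. By assumption $O$ is a sum of few-body observables, and the challenge distribution $\mc{D}$ is invariant under single-qubit Cliffords, so the hypotheses of \Cref{cor:spt-avg-with-assumption} are met.

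Next I would invoke the correctness of the learner. Because the device estimates each expectation value with an unbiased estimator, \Cref{ass:qpstat-assumption} holds, and \Cref{cor:spt-avg-with-assumption} guarantees that, with probability at least $1 - \delta$ over the adversary's data and randomness,
\begin{equation*}
    \mathbf{E}_{\rho \sim \mc{D}} \left[ |h(\rho) - \tr(O\mc{E}(\rho))|^2 \right] \leq \epsilon^2.
\end{equation*}
A point worth emphasizing is that the adversary trains on $\stabn$ but is guaranteed low error on the (possibly very different) challenge distribution $\mc{D}$: this is precisely what \Cref{lem:avg-spt-from-heis-obs} provides, since success reduces to estimating the low-degree Pauli coefficients of $\mc{E}^\dagger(O)$, which control the error on any Clifford-invariant $\mc{D}$. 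Both $\mc{D}_{Haar}$ and $\mc{D}_{stab}$ fall in this class.

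To finish, I would convert this into a verification success. In the authentication phase the challenge is a uniformly random entry of the verifier's database, which was sampled i.i.d.\ from $\mc{D}$ and is independent of $h$; hence the challenge state $\rho(x)$ is distributed as $\mc{D}$. Conditioning on the learning success and applying Markov's inequality to the squared error gives
\begin{equation*}
    \mathbf{Pr}_{\rho(x) \sim \mc{D}} \left[ |h(\rho(x)) - \tr(O\mc{E}(\rho(x)))| > \tau \right] \leq \frac{\epsilon^2}{\tau^2} = \frac{1}{4}.
\end{equation*}
Since the stored response satisfies $|y_x - \tr(O\mc{E}(\rho(x)))| \leq \tau$ by definition of $\mathsf{QPStat}$, the triangle inequality yields $|y_x - h(\rho(x))| \leq 2\tau$ whenever the prediction error is at most $\tau$. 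Thus the forged response $\tilde{y} = h(\rho(x))$ passes verification with probability at least $\tfrac{3}{4}(1-\delta)$, a constant and hence non-negligible. For a quantum challenge ($t = q$) the adversary estimates the few-body moments $\tr(P\rho(x))$ needed to evaluate $h$ from the supplied copies, while for a classical challenge it uses the description directly.

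The complexity then follows by plugging $M = 1$, $\epsilon = \Theta(\tau)$ and a constant $\delta$ into \Cref{cor:spt-avg-with-assumption} and using $2^{\mc{O}(\log(n)\log(1/\tau))} = n^{\mc{O}(\log(1/\tau))}$, giving $\tau^2 n^{\mc{O}(\log(1/\tau))}$ queries and $\mathsf{poly}(n, 1/\tau)\, n^{\mc{O}(\log(1/\tau))}$ time. I expect the only real subtlety to be this last reduction: one must take $\epsilon$ to be a strict constant fraction of $\tau$ so that Markov's inequality leaves a constant margin below the $2\tau$ acceptance threshold, and one must check that the mismatch between the training distribution $\stabn$ and the challenge distribution $\mc{D}$ is absorbed by \Cref{lem:avg-spt-from-heis-obs}. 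Absorbing the constant-factor change in $\epsilon$ into the $\mc{O}(\cdot)$ exponent leaves the stated complexity unchanged.
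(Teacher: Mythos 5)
Your proposal follows essentially the same route as the paper: realize the attack as \Cref{alg:attack-prot-auth}, invoke \Cref{cor:spt-avg-with-assumption} with $M=1$ on the training distribution $\stabn$, and combine the average-case guarantee over the Clifford-invariant challenge distribution $\mc{D}$ with the $\tau$-accuracy of the stored response via the triangle inequality to beat the $2\tau$ acceptance threshold. The only divergence is in converting the mean-squared-error bound into a per-challenge success probability: you set $\epsilon = \tau/2$ and apply Markov's inequality to get prediction error at most $\tau$ with probability at least $3/4$, whereas the paper sets $\epsilon = \tau$ and argues via Jensen's inequality that the \emph{average} absolute error is at most $\tau$ before applying the triangle inequality pointwise — your version is in fact the more careful one, since it makes the constant margin (and hence the non-negligible success probability) explicit, and the constant-factor change in $\epsilon$ is absorbed into the stated complexity exactly as you note.
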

\begin{proof}
    Consider the attack described in Algorithm~\ref{alg:attack-prot-auth}. The stated complexities can be obtained from \Cref{cor:spt-avg-with-assumption} and the hyperparameter settings stated in Algorithm~\ref{alg:attack-prot-auth}. We now focus on the correctness of the attack. As we learn within average squared error $\tau^2$, and using Jensen's inequality, we see that on average over $\rho(x)$ drawn from $\mc{D}$,
    \begin{equation}
        |h(\rho(x)) - \tr(O\mathcal{E}(\rho(x)))| \leq \tau.
    \end{equation}
    For any query $x$ issued by $\mathcal{V}$, the associated $y$ stored in $D$ was received as the output of a statistical query, implying that
    \begin{equation}
        |y - \tr(O\mathcal{E}(\rho(x)))| \leq \tau.
    \end{equation}
     By triangle inequality, the error between the adversary's prediction $h(\rho(x))$ and the stored $y$ is bounded by $2\tau$. Thus, using Algorithm~\ref{alg:attack-prot-auth}, an adversary is able to efficiently pass Protocol~\ref{prot:auth} with non-negligible probability over the challenge distribution.
\end{proof}

\begin{remark}
   In Protocol \ref{prot:auth}, the parties agree on a query tolerance beforehand. As a consequence, the adversary does not have control of the query tolerance during the attack, necessitating the use of the guarantee of \Cref{cor:spt-avg-with-assumption}, which places no restriction on the tolerance, as opposed to \Cref{thm:spt-avg}. However, when using \Cref{cor:spt-avg-with-assumption}, the output of the CR-QPUF must satisfy Assumption \ref{ass:qpstat-assumption}, i.e., the output should be unbiased. This assumption is satisfied by standard methods for expectation value estimation that could be used to implement the oracle. However, one could add a small amount of biased noise to the output of the CR-QPUF, resulting in our attack failing, and the resulting protocol may be secure. We do not investigate this setting here and leave it for future work. Alternatively, one could also consider a stronger adversarial model, where the adversary can make queries for any tolerance of their choice, in which case the guarantee of \Cref{thm:spt-avg} can be used and Assumption \ref{ass:qpstat-assumption} would not be necessary.
\end{remark}
\begin{remark}
    Naively, one might anticipate that our hardness of learning results would imply a positive security result for CR-QPUFs. However, due to the nature of the verification procedure of the protocol, an adversary only needs to learn to predict approximately correct expectation values, rather than learn the underlying process up to accuracy in diamond distance. This is a significantly easier task, and as such, the hardness results cannot directly be used to prove the security of this protocol. In fact, contrary to this anticipation, we observe the opposite outcome. When the underlying process of the CR-QPUF is chosen from a hard-to-learn ensemble, such as a (approximate) unitary 2-design, then the output expectation values are highly concentrated regardless of the choice of observable and input state, enabling an easy attack with high success probability for multiple rounds. This highlights the importance of the verification method when designing such protocols.
\end{remark} 
\section{Conclusion and Future Work}
\label{sec:outlook}

We have presented a physically well-motivated access model for learning quantum processes. Within this access model, we have studied two important tasks in quantum learning theory, namely shadow tomography of a quantum process as well as learning unitaries with respect to the diamond distance. We have also demonstrated the practical relevance of this access model and our learning algorithm by partially addressing an open question regarding the security of Classical Readout Quantum PUFs.

For shadow tomography of quantum processes, \Cref{alg:avg-spt} succeeds for arbitrary quantum processes, but requires the observables to be local and the distribution over states to be invariant under local Clifford operations. An exciting direction for future work is to consider natural restrictions on the process instead, potentially resulting in efficient algorithms for more general states and observables.

For the case of diamond distance learning, we have shown a lower bound for learning a general class of unitaries, through a reduction from distinguishing it from the depolarizing channel. While this bound has allowed us to show the hardness of learning unitary $2$-designs and the Haar measure, our bound does not hold for non-unitary channels. We believe this bound can be generalized to the non-unitary setting by defining \emph{ancilla-assisted} QPSQs, and leave this for future work.

Furthermore, it is interesting to compare learnability in this framework to the usual quantum statistical query framework for learning states. When considering the task of learning classical functions encoded either as a unitary or as a quantum example, it is interesting to see whether the additional choice of input state available to a learner in our model can provide any advantage, i.e. \textit{are there any separations between QPSQ and QSQ learners when looking at classical functions?}. Another compelling question in this context is whether we can show a meaningful, formal separation between QPSQ and classical learners (beyond what has already been shown through generalizing QSQs).

In terms of applications, there is much to explore in cryptography. We have demonstrated a connection between learning algorithms for shadow process tomography and attacks against CR-QPUFs. As our algorithm has a quasipolynomial query and time complexity, we are unable to conclusively show the vulnerability of CR-QPUFs. However, our approach shows that any new algorithm for this problem with a polynomial complexity would result in an efficient attack against such protocols, motivating further exploration of efficient algorithms for shadow process tomography from QPSQs. Another interesting direction of research would be to instead identify cryptographic schemes whose security can be proven from lower bounds for learning in the QPSQ access model.

Finally, it would be intriguing to observe the implementation of our learner on actual hardware or its application to data acquired from real physical experiments.
\section*{Acknowledgements}
The authors thank Armando Angrisani for his valuable inputs and discussions, especially the discussions towards establishing the definition of QPSQ, and for sharing related results from \cite{angrisani2023learning} with us during the project. We thank Alexander Nietner for pointing out an error in a security result regarding CR-QPUFs in a previous version of this work and other valuable discussions. We also thank Elham Kashefi, Dominik Leichtle, Laura Lewis, Yao Ma, Eliott Mamon and Sean Thrasher for interesting discussions and comments at different stages of this work. We are incredibly grateful to anonymous reviewers for valuable feedback on this manuscript. The authors acknowledge the support of the Quantum Advantage Pathfinder (QAP), with grant reference EP/X026167/1, and the UK Engineering and Physical Sciences Research Council. 

\bibliographystyle{quantum}
\bibliography{mybiblio}
\end{document}